% -*- mode: latex; mode: fold -*-
%%%%%%%%%%%%%%%%%%%%%%%%%%%%%%%%%%%%%%%%%%%%%%%%%%%%%%%%%%%%%%%%%%%%%%%%%%%%%%%
\documentclass[a4paper,12pt]{amsart}

%%%%%%%%%%%%%%%%%%%%%%%%%%%%%%%%%%%%%%%%%%%%%%%%%%%%%%%%%%%%%%%%%%%%%%%%%%%%%%%
% Packages: %{{{
\usepackage[latin1]{inputenc}
\usepackage{textcomp}
\usepackage{amsmath}
\usepackage{amssymb}
\usepackage{amsthm}
\usepackage{booktabs}
\usepackage{textcomp}
\usepackage{enumerate}
\usepackage[sort,numbers]{natbib}
\usepackage{array}
\usepackage{psfrag}
\usepackage{graphicx}
\usepackage{bbm}
\usepackage{ifthen}
\usepackage{verbatim} % Adds \begin{comment}...\end{comment}
\usepackage{multirow}

\newcommand{\1}{\ensuremath{\mathbbm{1}}}

\newcommand{\as}{a.s.}

\newcommand{\eqsp}{\;}
\newcommand{\eqdef}{\ensuremath{\stackrel{\mathrm{def}}{=}}}

\usepackage{algorithmic}
\usepackage{algorithm}
\usepackage[novbox]{pdfsync}           %  [novbox] helps
\usepackage[textwidth=4cm, textsize=footnotesize,disable]{todonotes}    %  [disable] shuts it off
\setlength{\marginparwidth}{4cm}                                %  this goes iwth todonotes
%%%%%%%%%%%%%%%%%%%%%%%%%%%%%%%%%%%%%%%%%%%%%%%%%%%%%%%%%%%%%%%%%%%%%%%%%%%%%%%%%%%%%%%%%%%%%%%%%%%%%

\usepackage{xargs}

%}}}

%%%%%%%%%%%%%%%%%%%%%%%%%%%%%%%%%%%%%%%%%%%%%%%%%%%%%%%%%%%%%%%%%%%%%%%%%%%%%%%
% Macros %{{{
\newcommand{\defeq}{\mathrel{\mathop:}=}

\newcommand{\F}{\mathcal{F}}

\def\Xset{\ensuremath{\mathsf{X}}}
\def\Xsigma{\ensuremath{\mathcal{X}}}
\def\eqsp{\,}
\newcommand{\uarg}{\,\cdot\,}
\newcommand{\ud}{\mathrm{d}}
\newcommand{\R}{\mathbb{R}}

\newcommand{\E}{\mathbb{E}}
\newcommand{\charfun}[1]{\mathbbm{I}\left\{#1\right\}}
\newcommand{\seq}[1]{\left\{#1_n\right\}}
\def\rmd{\mathrm{d}}

\newcommand{\chunk}[3]{\ensuremath{#1^{(#2:#3)}}}

\def\bbeta{\boldsymbol{\beta}}

\usepackage{aliascnt}
\usepackage{hyperref}
\usepackage{xargs}

\newtheorem{theorem}{Theorem}
\newaliascnt{proposition}{theorem}
\newtheorem{proposition}[proposition]{Proposition}
\aliascntresetthe{proposition}
\newaliascnt{lemma}{theorem}
\newtheorem{lemma}[lemma]{Lemma}
\aliascntresetthe{lemma}
\newaliascnt{corollary}{theorem}
\newtheorem{corollary}[corollary]{Corollary}
\aliascntresetthe{corollary}
\newaliascnt{definition}{theorem}

\aliascntresetthe{definition}
\newaliascnt{example}{theorem}

\aliascntresetthe{example}
\theoremstyle{remark}
\newaliascnt{remark}{theorem}
\newtheorem{remark}[remark]{Remark}
\aliascntresetthe{remark}

\newcommandx\prob[2][1=,2=]{{\mathbb P}_{#1}^{#2} }
\newcommandx\esp[2][1=,2=]{{\mathbb E}_{#1}^{#2} }
\newcommandx\cesp[4][1=,2=]{\ensuremath{{\mathbb E}_{#1}^{#2}\left[ \left. #3 \right| #4 \right]}}
\newcommandx\cprob[4][1=,2=]{\ensuremath{{\mathbb P}_{#1}^{#2}\left[ \left. #3 \right| #4 \right]}}
\newcommandx\sequence[2][2=n]{\ensuremath{\{ #1_{#2} \}_{#2 \geq 0}}}
\newcommandx\ball[3][1=]{\mathrm{B}^{#1} (#2,#3)}

\newcommand{\rset}{\ensuremath{\mathbb{R}}}
\newcommand{\nset}{\ensuremath{\mathbb{N}}}

\def\lleb{\ensuremath{\lambda^{\mathrm{Leb}}}}

\def\covmat{\Sigma}
\def\bcovmat{\boldsymbol{\Sigma}}
\def\brho{\boldsymbol{\rho}}
\def\bM{\mathbf{M}}
\def\bP{\mathbf{P}}
\def\bS{\mathbf{S}}
\def\bV{\mathbf{V}}
\def\bpi{\boldsymbol{\pi}}
\def\brho{\boldsymbol{\rho}}

\newcommand{\blazej}[1]{\todo[color=blue!20]{{\bf BM:} #1}}

\newcommand{\coint}[1]{\left[#1\right)}
\newcommand{\ocint}[1]{\left(#1\right]}
\newcommand{\ooint}[1]{\left(#1\right)}
\newcommand{\ccint}[1]{\left[#1\right]}
\newcommandx\positivematrixset[3][1=,3=]{%
\ifthenelse{\equal{#1}{}}{\mathcal{M}^{#3}_+(#2)}{\mathcal{M}^{#3}_+(#2,#1)}
}
\newcommand{\set}[1]{\mathsf{#1}}
\newcommandx\supnorm[2][1=]{\left\| #2 \right\|^{#1}_\infty}

\newcounter{hypA}
\newenvironment{hypA}{\begin{sf}\refstepcounter{hypA}\begin{itemize}
  \item[({\bf A\arabic{hypA}})]}{\end{itemize}\end{sf}}

\newcounter{hypB}

\newcommandx\A[2][1=]{
\ifthenelse{\equal{#1}{}}
{\hspace{-1mm}(\textbf{A\ref{#2}})\hspace{-1mm}}
{\hspace{-1mm}(\textbf{A\ref{#1}--\ref{#2}})\hspace{-1mm}}
}

\newcommandx\B[2][1=]{
\ifthenelse{\equal{#1}{}}
{\hspace{-1mm}(\textbf{B\ref{#2}})\hspace{-1mm}}
{\hspace{-1mm}(\textbf{B\ref{#1}--\ref{#2}})\hspace{-1mm}}
}

\def\bX{\mathbf{X}}
\def\bx{\mathbf{x}}
\def\bY{\mathbf{Y}}

\def\bh{\mathbf{h}}

\newcommand{\tvnorm}[1]{\ensuremath{\left\|#1\right\|_{\mathrm{TV}}}}
\newcommand{\Vnorm}[2]{\ensuremath{\left\|#1\right\|_{#2}}}
\newcommandx\functionset[2][1=]{\mathbb{F}_{#1}(#2)}
%}}}

%%%%%%%%%%%%%%%%%%%%%%%%%%%%%%%%%%%%%%%%%%%%%%%%%%%%%%%%%%%%%%%%%%%%%%%%%%%%%%%
%%%%%%%%%%%%%%%%%%%%%%%%%%%%%%%%%%%%%%%%%%%%%%%%%%%%%%%%%%%%%%%%%%%%%%%%%%%%%%%
\begin{document}

\title{Adaptive parallel tempering algorithm}
\author{B\l{}a\.{z}ej Miasojedow, Eric
Moulines and Matti Vihola}

\address{B\l{}a\.{z}ej Miasojedow, LTCI, CNRS UMR 8151, Institut Mines-Télécom/Télécom ParisTech,
46, rue Barrault, 75634 Paris Cedex 13, France}
\address{Eric Moulines, LTCI, CNRS UMR 8151, Institut Mines-Télécom/Télécom ParisTech,
46, rue Barrault, 75634 Paris Cedex 13, France}
\address{Matti Vihola,
Department of Mathematics and Statistics,
P.O.Box 35,
FI-40014 University of Jyväskylä,
Finland}

\maketitle

%%%%%%%%%%%%%%%%%%%%%%%%%%%%%%%%%%%%%%%%
\begin{abstract} %{{{
Parallel tempering is a generic Markov chain Monte Carlo sampling
method which allows good mixing with multimodal target distributions,
where conventional Metropolis-Hastings algorithms often fail. The mixing
properties of the sampler depend strongly on the choice of tuning
parameters, such as the temperature schedule and the proposal
distribution used for local exploration.  We propose an adaptive
algorithm which tunes both the temperature schedule and the parameters
of the random-walk Metropolis kernel automatically. We prove the
convergence of the adaptation and a strong law of large numbers for
the algorithm. We illustrate the performance of our method with
examples. Our empirical findings indicate that the algorithm can cope
well with different kind of scenarios without prior tuning.
\end{abstract} %}}}

%%%%%%%%%%%%%%%%%%%%%%%%%%%%%%%%%%%%%%%%%%%%%%%%%%%%%%%%%%%%%%%%%%%%%%%%
\section{Introduction}
Markov chain Monte Carlo (MCMC) is a generic method to
approximate an integral of the form
\[
    I \eqdef \int_{\rset^d} f(x) \pi(x) \ud x \eqsp,
\]
where $\pi$ is a probability density function, which can be evaluated
point-wise up to a normalising constant. Such an integral occurs
frequently when computing Bayesian posterior expectations
\cite[e.g.,][]{robert-casella,gilks-mcmc}.

The random-walk Metropolis algorithm \cite{metropolis} works often
well, provided the target density $\pi$ is, roughly speaking,
sufficiently close to unimodal.
The efficiency of the Metropolis algorithm can be optimised
by a suitable choice of the proposal distribution. The proposal
distribution can be chosen automatically by several adaptive MCMC
algorithms; see
\cite{haario-saksman-tamminen-am,atchade-rosenthal,roberts-rosenthal-examples,andrieu-thoms}
and references
therein.

When $\pi$ has multiple well-separated modes, the random-walk based
methods tend to get stuck to a single mode for long periods of time,
which can lead to false convergence and severely erroneous
results. Using a tailored Metropolis-Hastings algorithm
can help, but in many cases finding a good
proposal distribution is difficult. Tempering of $\pi$, that is,
considering auxiliary distributions with density proportional to
$\pi^\beta$ with $\beta\in(0,1)$ often provides better mixing within
modes \cite{swendsen-wang,marinari-parisi,hansmann,woodard-schmidler-huber-rapid}.
We focus here particularly on the parallel tempering algorithm, which
is also known
as the replica exchange Monte Carlo and the Metropolis
coupled Markov chain Monte Carlo.

The tempering approach is particularly tempting in such settings where
$\pi$ admits a physical interpretation, and there is good intuition
how to choose the temperature schedule for the algorithm. In general,
choosing the temperature schedule is a non-trivial task, but there
are generic guidelines for temperature selection, based on both
empirical findings and theoretical analysis. First rule of thumb
suggests that the temperature progression should be (approximately)
geometric; see, e.g.~\cite{kofke}. Kone and Kofke linked also the mean
acceptance rate of the swaps \cite{kone-kofke}; this has been further
analysed by Atchadé, Roberts and Rosenthal
\cite{atchade-roberts-rosenthal}; see also
\cite{roberts-rosenthal-simulated-tempering}.

Our temperature adaptation is based on the latter; we try to optimise
the mean acceptance rate of the swaps between the chains in adjacent
temperatures. Our scheme has similarities with that proposed by
Atchadé, Roberts and Rosenthal~\cite{atchade-roberts-rosenthal}. The
key difference in our method is that we propose to adapt continuously
during the simulation. We show that the temperature adaptation
converges, and that the point of convergence is unique with mild and
natural conditions on the target distribution.

The local exploration in our approach relies on the random walk
Metropolis algorithm. The proposal distribution, or more precisely,
the scale/shape parameter of the proposal distribution, can be adapted
using several existing techniques like the covariance estimator
\cite{haario-saksman-tamminen-am} augmented with an adaptive scaling
pursuing a given mean acceptance rate
\cite{andrieu-thoms,atchade-fort,roberts-rosenthal-examples,atchade-rosenthal}
which is motivated by certain asymptotic results
\cite{roberts-rosenthal-scaling,roberts-gelman-gilks-scaling}.
It is also possible to use a robust shape estimate which 
enforces a given mean acceptance rate \cite{vihola-ram}.

We start by describing the proposed algorithm in Section
\ref{sec:alg}. Theoretical results on the convergence of the
adaptation and the ergodic averages are given next in Section
\ref{sec:theory}. In Section \ref{sec:experiments}, we illustrate the
performance of the algorithm with examples. The proofs of the
theoretical results are postponed to Section \ref{sec:proofs}.

%%%%%%%%%%%%%%%%%%%%%%%%%%%%%%%%%%%%%%%%%%%%%%%%%%%%%%%%%%%%%%%%%%%%%%%%
\section{Algorithm}
\label{sec:alg}
%%%%%%%%%%%%%%%%%%%%%%%%%%%%%%%%%%%%%%%%
\subsection{Parallel tempering algorithm} %{{{
The parallel tempering algorithm defines a
Markov chain over the product
space $\Xset^L$, where $\Xset\subset\R^d$
\begin{equation}
\label{eq:definition-joint-process}
\bX_k = (X_k^{(1)},\ldots,X_k^{(L)}) =
(\chunk{X_k}{1}{L}) \eqsp.
\end{equation}
Each of the chains $X_k^{(\ell)}$ targets a `tempered' version of the
target distribution $\pi$.
Denote by $\bbeta= (\chunk{\beta}{1}{L})$ the  inverse temperature, which are such that  $1=\beta^{(1)}>\beta^{(2)}>\cdots>\beta^{(L)}>0$.
and by  $Z(\beta)$ the normalising constant
\begin{equation}
\label{eq:definition-normalizing-constant}
Z(\beta) \eqdef \int \pi^{\beta}(x) \ud x\eqsp,
\end{equation}
which is assumed to be finite.
The parallel tempering algorithm is constructed so that the Markov chain $\{ \bX_k \}_{k \geq 0}$
is reversible with respect to the product density
\begin{equation}
\label{eq:definition-bpi}
\bpi_{\bbeta}(x^{(1)},\ldots,x^{(L)})
\eqdef \frac{\pi^{\beta^{(1)}}(x^{(1)})}{Z(\beta^{(1)})}
\times \cdots \times \frac{\pi^{\beta^{(L)}}(x^{(L)})}{{Z(\beta^{(L)})}} \eqsp,
\end{equation}
over the product space $(\Xset^L, \Xsigma^{\otimes L})$.

Each time-step may be decomposed into two successive moves: the swap
move and the propagation (or update) move; for the latter, we consider
only random-walk Metropolis moves.

We  use the following notation to distinguish the state of the
algorithm after the swap step (denoted $\bar{\bX}_n$) and after the random walk step, or equivalently after
a complete step (denoted $\bX_n$). The state is then updated according to
\begin{equation}
\label{eq:definition-state-pt}
\bX_{n-1}\stackrel{\bS_{\bbeta}}{\longrightarrow}\bar{\bX}_{n-1}
    \stackrel{\bM_{(\bcovmat,\bbeta)}}{\longrightarrow} \bX_n,
\end{equation}
where the two kernels $\bM_{(\bcovmat,\bbeta)}$ and $\bS_{\bbeta}$ are respectively defined as
\begin{itemize}
\item $\bM_{(\bcovmat,\bbeta)}$ denotes the tensor product kernel  on the product space $\Xset^L$
\begin{equation}
\label{eq:product-kernel}
\bM_{(\bcovmat,\bbeta)}(\chunk{x}{1}{L};A_1\times \dots A_L)= \prod_{\ell=1}^L M_{(\covmat^{(\ell)},\beta^{(\ell)})}(x^{(\ell)},A_\ell)
\end{equation}
where each $M_{(\covmat^{(\ell)},\beta^{(\ell)})}$ is a random-walk
Metropolis kernel targeting $\pi^{\beta^{(\ell)}}$ with increment
distribution $q_{\covmat^{(\ell)}}$, where $q_{\covmat}$ is the
density of a multivariate Gaussian with zero mean and covariance
$\covmat$,
\begin{multline}
\label{eq:definitionMn}
M_{(\covmat,\beta)}(x,A) \eqdef \int_A \alpha_{\beta}(x,y) q_{\covmat}(y-x) \rmd y \\
+ \delta_x(A) \int (1 -  \alpha_{\beta}(x,y)) q_{\covmat} (y-x) \rmd y \eqsp,
\end{multline}
where
\begin{equation}
\label{eq:definition-alphan}
\alpha_{\beta}(x,y) \eqdef 1 \wedge \frac{\pi^\beta(y)}{\pi^\beta(x)}
\eqsp, \quad \text{for all $(x,y) \in \Xset \times \Xset$}\eqsp .
\end{equation}
In practical terms, $\bM_{(\bcovmat,\bbeta)}$ means that one applies a
random-walk Metropolis step separately for each of the chains
$\bar{X}_{n-1}^{(\ell)}$.
%%%%
\item $\bS_{\bbeta}$ denotes the Markov kernel of the swap steps, targeting the product distribution
$\bpi_{\bbeta} \propto \pi^{\beta^{(1)}}\otimes\cdots\otimes
\pi^{\beta^{(L)}}$,
\begin{multline}
\label{eq:definitionSn}
\bS_{\bbeta}(\chunk{x}{1}{L};A)= \frac{1}{L-1} \sum_{j=1}^{L-1} \varpi^{(j)}_{\bbeta}(x^{(j)},x^{(j+1)}) J^{(j)}(x^{(j)},x^{(j+1)};A)  \\
+ \frac{1}{L-1} \sum_{j=1}^{L-1}( 1-\varpi^{(j)}_{\bbeta}(x^{(j)},x^{(j+1)})  ) \; \delta_{\chunk{x}{1}{L}}(A) \eqsp,
\end{multline}
where $\varpi^{(j)}_{\bbeta}$ is the probability of accepting a swap between levels $j$ and $j+1$, which is given by
\begin{equation}
\label{eq:probabilite-swap}
\varpi^{(j)}_{\bbeta} (x^{(j)},x^{(j+1)}) \eqdef
1 \wedge   \left( \frac{\pi (x^{(j+1)})}{\pi(x^{(j)})} \right)^{\beta^{(j)}- \beta^{(j+1)}} \eqsp,
\end{equation}
and
\begin{multline}
\label{eq:definitionJ}
J^{(j)}(x^{(j)},x^{(j+1)};A) \\
\eqdef \idotsint_A \delta_{x^{(j)}} (\rmd y^{(j+1)}) \delta_{x^{(j+1)}}(\rmd y^{(j)}) \prod_{i \in \{1,\dots,L\} \setminus \{j,j+1\}} \delta_{x^{(i)}}(\rmd y^{(i)}) \eqsp.
\end{multline}
The above defined swap step means choosing a random index
$\ell\in\{1,\ldots,L-1\}$ uniformly, and then proposing to swap the
adjacent states, $\bar{X}_{n-1}^{(\ell+1)} = X_{n-1}^{(\ell)}$
and $\bar{X}_{n-1}^{(\ell)} = X_{n-1}^{(\ell+1)}$.
and accepting this swap with probability given in
\eqref{eq:probabilite-swap}.
\end{itemize}

%%%%%%%%%%%%%%%%%%%%%%%%%%%%%%%%%%%%%%%%
\subsection{Adaptive parallel tempering algorithm} %{{{
In the adaptive version of the parallel tempering algorithm, the temperature parameters are continuously updated along the run of the algorithm. We denote the sequence of inverse temperatures
\begin{equation}
\label{eq:inverse-temperatures}
\left\{ \bbeta_n \right\}_{n \geq 0} = \left\{  \chunk{\beta_n}{1}{L} \right\}_{n\ge 0}
\end{equation}
which are parameterised by the vector-valued process
\begin{equation}
\label{eq:definition-rho}
\left\{ \brho_n \right\}_{n \ge 0} \eqdef \left\{ \chunk{\rho_n}{1}{L-1} \right\}_{n\ge 0} \eqsp,
\end{equation}
through $\beta_n^{(1)} \eqdef 1$
and $\beta_n^{(\ell)}= \beta^{(\ell)}(\chunk{\rho}{1}{\ell-1}_n)$ for $\ell \in \{2, \dots, L\}$ with
\begin{equation}
\beta^{(\ell+1)}(\chunk{\rho}{1}{\ell}) \eqdef
\beta^{(\ell)}(\chunk{\rho}{1}{\ell-1}) \exp(-\exp(\rho^{(\ell)}))  \eqsp .
\label{eq:beta-rho-def}
\end{equation}

Because the inverse temperatures are adapted at each iteration,
the target distribution of the chain changes from step to step as well.
Our adaptation of the temperatures is performed using the following stochastic
approximation procedure
\begin{equation}
\label{def:temp_adapt}
\rho^{(\ell)}_{n} =
\Pi_{\rho} \left( \rho^{(\ell)}_{n-1} + \gamma_{n,1} H^{(\ell)}( \chunk{\rho_{n-1}}{1}{\ell},\bX_n) \right) \eqsp,
\qquad 1\le \ell \le L-1 \eqsp,
\end{equation}
where $(\chunk{\rho_n}{1}{L-1})$ is defined in \eqref{eq:beta-rho-def},
$\Pi_{\rho}$ is the projection onto the constraint set
$\ccint{\underline{\rho},\overline{\rho}}$, which will be discussed
further in Section \ref{sec:implementation}.
Moreover,
\begin{align}
\label{eq:definition-H-ell}
&H^{(\ell)}(\chunk{\rho}{1}{\ell},\bx)=
1\wedge \left(\frac{\pi(x^{(\ell+1)})}{\pi(x^{(\ell)})} \right)^{\Delta \beta^{(\ell)}(\chunk{\rho}{1}{\ell})}
-\alpha^* \eqsp, \\
\label{eq:definition-Delta-beta}
&\Delta \beta^{(\ell)} (\chunk{\rho}{1}{\ell})= \beta^{(\ell)}(\chunk{\rho}{1}{\ell-1}) - \beta^{(\ell+1)}(\chunk{\rho}{1}{\ell}) \eqsp.
\end{align}

We will show in Section \ref{sec:theory} that the algorithm is
designed in such a way that the inverse temperatures converges to a
value for which the mean probability of accepting a swap move between
any adjacent-temperature chains is constant and is equal to $\alpha^*$.

We will also adapt the random-walk proposal distribution at each level.
We describe below another possible algorithm for performing such a task.
In the theoretical part, for simplicity,
we will consider only on with the seminal adaptive
Metropolis algorithm \cite{haario-saksman-tamminen-am} augmented with
scaling adaptation
\cite[e.g.][]{roberts-rosenthal-examples,atchade-fort,andrieu-thoms}.
In this algorithm, we estimate  the covariance matrix of the target distribution at each temperature and rescale it
to control the acceptance ratio at each level in stationarity.

Define by $\positivematrixset{d}$ the set of $d\times d$ positive
definite matrices. For $A\in\positivematrixset{d}$, we denote by
$\varrho_{\min}(A)$ and $\varrho_{\max}(A)$ the smallest and the
largest eigenvalues of $A$, respectively. For $\varepsilon \in(0,1]$,
define by $\positivematrixset[\varepsilon]{d} \subset
\positivematrixset{d}$ the convex subset
\begin{equation}
\label{eq:definition-positive-matrix-set}
\positivematrixset[\varepsilon]{d} \eqdef \left\{ \covmat \in \positivematrixset{d}: \varepsilon \leq \varrho_{\min}(\covmat) \leq \varrho_{\max}(\covmat) \leq \varepsilon^{-1} \right\} \eqsp.
\end{equation}
The set $\positivematrixset[\varepsilon]{d}$ is a compact subset of the open cone of positive definite matrices.

We denote by $\Gamma_n^{(\ell)}$ the current estimate of the covariance at level $\ell$, which is updated as follows
\begin{align}
\Gamma_n^{(\ell)} &=\Pi_{\Gamma}\left[ (1-\gamma_{n,2}) \Gamma_{n-1}^{(\ell)}  + \gamma_{n,2}
      (X_n^{(\ell)}-\mu_{n-1}^{(\ell)}){t(X_n^{(\ell)}-\mu_{n-1}^{(\ell)})} \right]\label{eq:am-cov}, \\
   \mu_n^{(\ell)} &= (1-\gamma_{n,2})\mu_{n-1}^{(\ell)} + \gamma_{n,2}
   X_n^{(\ell)} \eqsp,\label{eq:am-mean}
\end{align}
where $t(\cdot)$ is the matrix transpose and 
$\Pi_{\Gamma}$ is the projection on to the set
$\positivematrixset[\varepsilon]{d}$;
see Section \ref{sec:implementation}.
The scaling parameters is updated so that the acceptance rate in
stationarity converges to the target $\alpha^\star$,
\begin{equation}
\label{eq:scaling-adaptation}
    T_n^{(\ell)} =\Pi_T\left( T_{n-1}^{(\ell)} +\gamma_{n,3} \left[ \left(1 \wedge \frac{\pi^{\beta^{(\ell)}_{n-1}}(Y^{(\ell)}_n)}{\pi^{\beta^{(\ell)}_{n-1}}(\bar{X}_{n-1}^{(\ell)})} \right) -\alpha^* \right]\right)\eqsp,
\end{equation}
where $\Pi_T$ is is the projection onto
$\ccint{\underline{T},\overline{T}}$; see Section
\ref{sec:implementation} and $Y_n^{(\ell)}$ is the proposal at level $\ell$, assumed to be conditionally independent from the past draws
and distributed according to a Gaussian with mean $\bar{X}_{n-1}^{(\ell)}$ and covariance  matrix $\covmat^{(\ell)}_{n-1}$ which is given by
\begin{equation}
\label{eq:definition-covmat_n}
\covmat^{(\ell)}_n = \exp(T_n^{(\ell)}) \, \Gamma_n^{(\ell)} \eqsp.
\end{equation}
In the sequel we denote by $\bY_n$ the vector of proposed moves at time step $n$,
\begin{equation}
\label{eq:definition-Y}
\{ \bY_n \}_{n \ge 0} = \{ \chunk{Y_n}{1}{L}  \}_{n \geq 0} \eqsp.
\end{equation}

%}}}

%%%%%%%%%%%%%%%%%%%%%%%%%%%%%%%%%%%%%%%%
\subsection{Alternate random-walk adaptation} %{{{

In order to reduce the number of parameters in the adaptation
especially in higher dimensions, we propose to use a common covariance
for all the temperatures, but still employ separate scaling.
More specifically,
\begin{align}
    \Gamma_n
  &=
  (1-\gamma_{n,2})\Gamma_{n-1}+
      \frac{\gamma_{n,2}}{L}
      \sum_{\ell=1}^L
      (X_n^{(\ell)}-\mu_{n-1})t(X_n^{(\ell)}-\mu_{n-1})
      \eqsp ,
      \label{eq:global-mean-def}
      \\
   \mu_n &= (1-\gamma_{n,2})\mu_{n-1} + \frac{\gamma_{n,2}}{L}
      \sum_{\ell=1}^L X_n^{(\ell)} \eqsp ,
      \label{eq:global-cov-def}
\end{align}
and set $\Sigma_n^{(\ell)} = \exp(T_n^{(\ell)}) \Gamma_n$.

Another possible implementation of the random-walk adaption, robust adaptive
Metropolis (RAM) \cite{vihola-ram},
is defined by a single dynamic adjusting the covariance parameter and
attaining a given acceptance rate.
Specifically, one recursively finds a lower-diagonal matrix
$\Gamma_n^{(\ell)}\in\rset^{d\times d}$ with positive diagonal
satisfying
\begin{equation}
    \Gamma_{n}^{(\ell)}t({\Gamma}_{n}^{(\ell)}) = \Gamma_{n-1}^{(\ell)}
    \left[I +
        \gamma_{n,2}(\alpha_n -
        \alpha^*)
        u(Z_n^{(\ell)})t\big(u(Z_n^{(\ell)})\big)
\right]
        t({\Gamma}_{n-1}^{(\ell)})\eqsp,
   \label{eq:ram-def}
\end{equation}
where $Z_n^{(\ell)} \eqdef Y_n^{(\ell)}-\bar{X}_{n-1}^{(\ell)}$ and
$u(x) \defeq \charfun{x\neq 0} x/|x|$, 
and let  $\Sigma_n^{(\ell)} = \Gamma_n^{(\ell)} t({\Gamma}_n^{(\ell)})$.

The potential benefit of using this estimate
instead of \eqref{eq:am-cov}--\eqref{eq:scaling-adaptation} is that RAM
finds, loosely speaking, a `local' shape of the target distribution,
which is often in practice close to a convex combination of the shapes of
individual modes. In some situations, this proposal shape might allow better
local exploration than the global covariance shape.

%}}}

%%%%%%%%%%%%%%%%%%%%%%%%%%%%%%%%%%%%%%%%
\subsection{Implementation details}
\label{sec:implementation} %{{{

In the experiments, we use the desired acceptance rate
$\alpha^*=0.234$ suggested by theoretical results for the swap kernel
\cite{kone-kofke,atchade-roberts-rosenthal}
and for the random-walk Metropolis kernel
\cite{roberts-rosenthal-scaling,roberts-gelman-gilks-scaling}.
We employ the step size sequences
$\gamma_{n,i} = c_i(n+1)^{-\xi_i}$ with constants
$c_1,c_3\in(0,\infty)$ and $c_2\in(0,1]$ and
$\xi_1,\xi_2,\xi_3\in(1/2,1)$. This is a common choice in the
stochastic approximation literature.

The projections $\Pi_\rho$, $\Pi_\Gamma$ and $\Pi_T$ in
\eqref{def:temp_adapt}, \eqref{eq:am-cov} and
\eqref{eq:scaling-adaptation}, respectively, are used to enforce the
stability of the adaptation process in order to simplify theoretical
analysis of the algorithm.  We have not observed instability
empirically, and believe that the algorithm would be stable without
projections; in fact, for the random-walk adaptation, there exist some
stability results \cite{saksman-vihola,vihola-asm,vihola-collapse}.
Therefore, we recommend setting the limits in the constraint sets as
large as possible, within the limits of numerical accuracy.

It is possible to employ other strategies for proposing swaps of the
tempered states. Specifically, it is possible to try more than one
swap at each iteration, even go through all the temperatures, without
changing the invariant distribution of the chain. We made some
preliminary tests with other strategies, but the results were not
promising, so we decided to keep the common approach of a
single randomly chosen swap.

In the temperature adaptation, it is also possible to enforce the
geometric progression, and only adapt one parameter. More
specifically, one can use $\rho^{(\ell)}_n \defeq \rho_n$ for all
$\ell\in\{1,\ldots,L-1\}$ and perform the adaptation
\eqref{def:temp_adapt} to update $\rho_{n-1}\to \rho_n$. This strategy
might induce more stable behaviour of the temperature parameter
especially when the number of levels is high. On the other hand, it 
can be dangerous because the asymptotic acceptance probability across
certain temperature levels can get low, inducing poor mixing. 

We consider only Gaussian proposal distributions in the random-walk
Metropolis kernels. It is possible to employ also other proposals; in
fact our theoretical results extend directly for example to
the multivariate Student proposal distributions.

We note that the adaptive parallel tempering algorithm can be used
also in a block-wise manner, or in Metropolis-within-Gibbs framework.
More precisely, the adaptive random-walk chains can be run as
Metropolis-within-Gibbs, and the state swapping can be done in the
global level. This approach scales better with respect to the
dimension in many situations. Particularly, when the model is
hierarchical, the structure of the model can allow significant
computational savings. Finally, it is straightforward to extend the
adaptive parallel tempering algorithm described above to general
measure spaces. For the sake of exposition, we present the algorithm 
only in $\R^d$.

%}}}
%}}}

%%%%%%%%%%%%%%%%%%%%%%%%%%%%%%%%%%%%%%%%%%%%%%%%%%%%%%%%%%%%%%%%%%%%%%%%
\section{Theoretical results}
\label{sec:theory} 

%%%%%%%%%%%%%%%%%%%%%%%%%%%%%%%%%%%%%%%%
\subsection{Formal definitions and assumptions} %{{{

%%%%%%%%%%%%%%%%%%%%%%%%%%%%%%%%%%%%%%%%%%%%%%%%%%%%%%%%%%%%%%%%%%%%%%%%
Denote by $\seq{Y}$ the proposals of the random-walk Metropolis step. 
We define the following filtration
\begin{equation}
\mathcal{F}_n=\sigma\left\{ \bX_0, (\bX_k, \bar{\bX}_{k-1}, \bY_{k-1}),  k=1,\dots,n,\; \right\} \eqsp.
\end{equation}%
By construction, the covariance matrix $\bcovmat_n \eqdef (\chunk{\covmat_n}{1}{L})$ and the parameters $\bbeta_n \eqdef \chunk{\beta_n}{1}{L}$ are adapted to the filtration $\mathcal{F}_n$.
With these notations and assumptions, for any time step $n \in \nset$,
\begin{equation*}
\cprob{\bX_{n+1}\in\uarg}{\mathcal{F}_n} =\int \bS_{\bbeta_n}(\bX_n,\ud z )\bM_{(\bcovmat_n,\bbeta_n)}(z, \uarg)  = \bS_{\bbeta_n} \bM_{(\bcovmat_n,\bbeta_n)}(\bX_n,\uarg)
\end{equation*}
Therefore, denoting $\bP_{(\bcovmat_n,\bbeta_{n})}\eqdef \bS_{\bbeta_n} \bM_{(\bcovmat_n,\bbeta_n)}$, we get
\begin{equation}
\label{def:expn}
\cesp{f(\bX_{n+1})}{\mathcal{F}_n}=\bP_{(\bcovmat_n,\bbeta_{n})}f(\bX_n) \eqsp,
\end{equation}
for all $n \in \nset$ and all bounded measurable functions $f: \Xset^L \to \rset$.

%%%%%%%%%%%%%%%%%%%%%%%%%%%%%%%%%%%%%%%%%%%%%%%%%%%%%%%%%%%%%%%%%%%%%%%%
We will consider the following assumption on the target distribution,
which ensures a geometric ergodicity of a random walk Metropolis chain
\cite{andrieu-moulines,jarner-hansen}. Below, $|\cdot|$ applied to a
vector (or a matrix) stands for the Euclidean norm.
%%%%%%%%%%%%%%%%%%%%
\begin{hypA}
    \label{a:super-exp} %{{{
    The density $\pi$ is bounded, bounded away from zero
    on compact sets, differentiable, such that
\begin{align}
    \label{eq:super-exp}
    \lim_{r\to\infty} \sup_{|x|\ge r}
    \frac{x}{|x|} \cdot \nabla \log \pi(x) &= -\infty \\
    \lim_{r\to\infty} \sup_{|x|\ge r}
    \frac{x}{|x|} \cdot \frac{\nabla\pi(x)}{|\nabla\pi(x)|} &< 0.
    \label{eq:reg-contour}
\end{align}
\end{hypA} %}}}

In words, \A{a:super-exp} only requires that the target
distribution is sufficiently regular, and the tails decay at a rate
faster than exponential. We remark that the tempering approach is
only well-defined when $\pi^\beta$ are integrable with exponents of
interest $\beta>0$---this is the case always with \A{a:super-exp}.

%}}}

%%%%%%%%%%%%%%%%%%%%%%%%%%%%%%%%%%%%%%%%
\subsection{Geometric ergodicity and continuity of parallel tempering
  kernels} %{{{

We first state and prove that the parallel tempering algorithm is
geometrically ergodic under \A{a:super-exp}. This result might be of
independent interest, because geometric ergodicity is well known to
imply central limit theorems.

We show that, under mild conditions, this
kernel is phi-irreducible, strongly aperiodic, and $V$-uniformly
ergodic, where the function $V$ is the sum of an appropriately chosen
negative power of the target density.
Specifically, for $\beta \in \rset_+$, consider the following drift function
\begin{equation}
\label{eq:lyapunov-function-junct}
\bV_\beta (\chunk{x}{1}{L}) \eqdef \sum_{\ell=1}^L V_{\beta}
(x^{(\ell)}) \eqsp,
\end{equation}
where for $x \in \Xset$,
\begin{equation}
\label{eq:lyapunov-function}
V_\beta (x)= (\pi(x)/\supnorm{\pi})^{-\beta/2} \eqsp.
\end{equation}
For $\beta_0 > 0$, define the set
\begin{equation}
\label{eq:definition-K-beta}
\mathcal{K}_{\beta_0} \eqdef
\left\{ \chunk{\beta}{1}{L} \in \ocint{0,1}^L,
\beta_0 \leq \beta^{(L)} \leq \dots \leq \beta^{(1)} \right\} \eqsp.
\end{equation}
We denote the $V$-variation of a signed measure $\mu$ as $\|\mu\|_V
\defeq \sup_{f:|f|\le V} \mu(f)$, where the supremum is taken over all
measurable functions $f$. The $V$-norm of a function is defined as
$\|f\|_V \eqdef \sup_x |f(x)|/V(x)$. 

\begin{theorem}
\label{theo:geometric-ergodicity-parallel-tempering}
Assume \A{a:super-exp}. Let $\epsilon > 0$ and $\beta_0 > 0$.
 Then there exists $C_{\epsilon,\beta_0}<\infty$ and $\varrho_{\epsilon,\beta_0}<1$ such that, for all $\bx \in \Xset^L$,  $\bcovmat \in \positivematrixset[\epsilon]{d}$ and $\bbeta \in \mathcal{K}_{\beta_0}$,
\begin{equation}
    \Vnorm{\bP^n_{(\bcovmat,\bbeta)}(\bx,\cdot)-\bpi_{\bbeta}}{\bV_{\beta_0}}\leq
    C_{\epsilon,\beta_0}\varrho^n_{\epsilon,\beta_0}\bV_{\beta_0}(x).
    \label{eq:geom-erg}
\end{equation}
\end{theorem}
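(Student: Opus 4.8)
The plan is to establish a single drift-and-minorisation bound for the composite kernel $\bP_{(\bcovmat,\bbeta)} = \bS_{\bbeta}\bM_{(\bcovmat,\bbeta)}$ that is \emph{uniform} over the compact parameter sets $\bcovmat \in \positivematrixset[\epsilon]{d}$ and $\bbeta \in \mathcal{K}_{\beta_0}$, and then invoke the standard quantitative ergodic theorem for geometrically ergodic chains (e.g.\ the Meyn--Tweedie / Baxendale-type bound that turns a drift inequality $PV \le \lambda V + b\1_C$ plus a minorisation on $C$ into a $V$-uniform geometric rate). Since the constants in that theorem depend only on $\lambda$, $b$, the minorisation constant, and $\sup_C V$, it suffices to exhibit these four quantities uniformly in $(\bcovmat,\bbeta)$.

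First I would handle the single-level random-walk kernel $M_{(\covmat,\beta)}$ targeting $\pi^\beta$. Under \A{a:super-exp}, conditions \eqref{eq:super-exp}--\eqref{eq:reg-contour} are inherited by every power $\pi^\beta$ with $\beta \ge \beta_0 > 0$ (the gradient scales by $\beta$, the normalised direction is unchanged), so the Jarner--Hansen / Andrieu--Moulines theory gives a super-exponential drift. The key quantitative point is that, because $\covmat$ ranges over the compact set $\positivematrixset[\epsilon]{d}$ (eigenvalues pinned between $\epsilon$ and $\epsilon^{-1}$) and $\beta$ over $[\beta_0,1]$, one can choose a \emph{single} drift function --- here $V_{\beta_0}(x) = (\pi(x)/\|\pi\|_\infty)^{-\beta_0/2}$ --- and a \emph{single} small set (a large ball) on which a uniform minorisation holds, with drift constants $\lambda<1$, $b<\infty$ that do not depend on $(\covmat,\beta)$. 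This uniformity is exactly the content one extracts from Proposition-type statements in \cite{andrieu-moulines}; I would cite that and verify the two limits in \A{a:super-exp} pass to $\pi^\beta$ and that the relevant suprema are attained uniformly on the compact parameter range. Note $V_{\beta_0}$ dominates $V_\beta$ for $\beta \ge \beta_0$ up to a constant (since $\pi/\|\pi\|_\infty \le 1$), which is why the coarsest exponent $\beta_0$ is the right choice for the product drift.

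Next I would lift this to the product kernel $\bM_{(\bcovmat,\bbeta)} = \bigotimes_\ell M_{(\covmat^{(\ell)},\beta^{(\ell)})}$ acting on $\bV_{\beta_0}(\chunk{x}{1}{L}) = \sum_\ell V_{\beta_0}(x^{(\ell)})$: since the drift is additive and each factor contracts, $\bM_{(\bcovmat,\bbeta)}\bV_{\beta_0} \le \lambda \bV_{\beta_0} + Lb\1_{\bC}$ with $\bC$ a product of balls, and the product minorisation on $\bC$ follows by multiplying the one-dimensional minorisations. Then I must absorb the swap kernel $\bS_{\bbeta}$. The crucial observation is that $\bS_{\bbeta}$ only permutes coordinates (with some probability) or leaves them fixed, and $\bV_{\beta_0}$ is a \emph{symmetric} function of $(\chunk{x}{1}{L})$; hence $\bS_{\bbeta}\bV_{\beta_0} = \bV_{\beta_0}$ exactly, for every $\bbeta$. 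Consequently $\bP_{(\bcovmat,\bbeta)}\bV_{\beta_0} = \bS_{\bbeta}(\bM_{(\bcovmat,\bbeta)}\bV_{\beta_0}) \le \lambda \bV_{\beta_0} + Lb\,\bS_{\bbeta}\1_{\bC} \le \lambda\bV_{\beta_0} + Lb\1_{\bC'}$ for a slightly enlarged product set $\bC'$ (enlarged to absorb a single adjacent swap), so the drift survives composition with the swap. For the minorisation of $\bP_{(\bcovmat,\bbeta)}$ on $\bC'$: with probability bounded below (uniformly, since $L$ is fixed) the swap step proposes nothing / is rejected, leaving $\bar\bX = \bX$, and then the product Metropolis step provides the uniform product minorisation; this gives $\bP_{(\bcovmat,\bbeta)}(\bx,\cdot) \ge \delta\,\nu(\cdot)$ on $\bC'$ with $\delta>0$ and $\nu$ independent of $(\bcovmat,\bbeta)$. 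Feeding $\lambda$, $Lb$, $\delta$, $\nu$, and $\sup_{\bC'}\bV_{\beta_0}<\infty$ into the quantitative ergodic theorem yields \eqref{eq:geom-erg} with $C_{\epsilon,\beta_0}$, $\varrho_{\epsilon,\beta_0}$ depending only on $\epsilon,\beta_0$.

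The main obstacle I anticipate is not the soft structure above but the \emph{uniformity} in the single-level drift estimate: verifying that the Jarner--Hansen super-exponential drift for $M_{(\covmat,\beta)}$ can be taken with $(\lambda,b)$ and a small set independent of $\covmat \in \positivematrixset[\epsilon]{d}$ and $\beta\in[\beta_0,1]$. This requires inspecting the proof in \cite{andrieu-moulines,jarner-hansen} and checking that the quantities entering the drift (the rate of escape of the acceptance region, the size of the rejection region, etc.) vary continuously in $(\covmat,\beta)$ and hence attain extrema over the compact parameter set --- in particular that the Gaussian increment density $q_\covmat$ is bounded below on compacts uniformly for $\covmat\in\positivematrixset[\epsilon]{d}$, which it is. A secondary technical point is confirming $\bS_{\bbeta}\1_{\bC}\le\1_{\bC'}$ for an explicit $\bC'$, i.e.\ that a single adjacent transposition maps the product ball $\bC$ into a fixed larger product ball; this is immediate since transposing two coordinates of a point in $\bC=\prod_\ell \ball{0}{R}$ stays in $\bC$ itself, so in fact $\bC'=\bC$ works. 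I would present the argument in roughly that order, with the single-level uniform drift lemma stated and proved first, then the product lift, then the swap absorption, then the conclusion.
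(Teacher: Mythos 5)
Your overall strategy coincides with the paper's: a drift inequality for $\bP_{(\bcovmat,\bbeta)}=\bS_{\bbeta}\bM_{(\bcovmat,\bbeta)}$ with constants uniform over $\positivematrixset[\epsilon]{d}$ and $\mathcal{K}_{\beta_0}$, a uniform minorisation on a sublevel set of $\bV_{\beta_0}$, and the quantitative Meyn--Tweedie theorem; the key structural observation --- that $\bV_{\beta_0}$ is permutation-invariant and hence exactly preserved by $\bS_{\bbeta}$ --- is also the paper's. One step of your minorisation argument, however, would fail as written: the claim that ``with probability bounded below the swap step proposes nothing / is rejected'' is not true. The swap kernel always proposes a transposition, and the acceptance probability $\varpi^{(j)}_{\bbeta}(x^{(j)},x^{(j+1)})$ equals $1$ whenever $\pi(x^{(j+1)})\ge\pi(x^{(j)})$, so the probability of remaining at $\bx$ can be exactly zero (e.g.\ when $\pi(x^{(1)})\le\cdots\le\pi(x^{(L)})$). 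The repair is an observation you already make in your closing paragraph: a permutation-invariant small set (the paper uses the level set $\{\bV_{\beta_0}\le r\}$) is absorbing for $\bS_{\bbeta}$, so whatever the swap does, the post-swap state is a permutation of $\bx$ lying in the same set, and the uniform product random-walk minorisation applies from there; no lower bound on the rejection probability is needed.

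On the ``main obstacle'' you identify --- uniformity of the single-level drift over $\beta\in[\beta_0,1]$ --- the paper sidesteps the constant-tracking you propose by a monotonicity lemma (\autoref{lem:PWinequality}): for $W=c\pi^{-\kappa}$ one has $M_{(\covmat,\beta)}W\le M_{(\covmat,\beta_0)}W$ whenever $\beta\ge\beta_0$, so it suffices to establish the drift for the single kernel $M_{(\covmat,\beta_0)}$, uniformly in $\covmat\in\positivematrixset[\epsilon]{d}$ only, which is exactly \cite[Proposition~12]{andrieu-moulines}. This is cleaner than verifying uniformity over the compact parameter rectangle by inspecting the Jarner--Hansen/Andrieu--Moulines proofs, though your route would also work. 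A small slip in the same passage: you assert that $V_{\beta_0}$ dominates $V_\beta$ for $\beta\ge\beta_0$, but since $\pi/\supnorm{\pi}\le1$ the inequality goes the other way, $V_{\beta_0}\le V_\beta$; what actually matters is the monotonicity of $\beta\mapsto M_{(\covmat,\beta)}V_{\beta_0}$ above, not a pointwise comparison of the drift functions.
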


Geometric ergodicity in turn implies the existence of a solution of the Poisson equation, and also provides bounds on the growth of this solution \cite[Chapter~17]{Meyn2009}
\begin{corollary}\label{coro:Poisson}Assume \A{a:super-exp}.
Let $\epsilon > 0$ and $\beta_0 > 0$.
For any measurable function $f$ with $\Vnorm{f}{\bV_{\beta_0}^\alpha}<\infty$ for some $\alpha \in \ocint{0,1}$ there exists a unique (up to an additive constant) solution of the Poisson equation
\begin{equation}\label{eq:Poisson_equation}
g -\bP_{(\bcovmat,\bbeta)} g =f-\bpi_{\bbeta}(f)\eqsp.
\end{equation}
This solution is denoted $\hat{f}_{(\bcovmat,\bbeta)}$. In addition, there exists a constant $D_{\epsilon,\beta_0} < \infty$ such that
\begin{equation}
\label{eq:Vnorm_Poisson}
\Vert \hat{f}_{(\bcovmat,\bbeta)}\Vert_{\bV_{\beta_0}^\alpha}\leq
D_{\epsilon,\beta_0}\Vert f \Vert_{\bV_{\beta_0}^\alpha}\eqsp.
\end{equation}
\end{corollary}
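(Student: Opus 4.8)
The plan is to deduce the corollary directly from Theorem~\ref{theo:geometric-ergodicity-parallel-tempering} by invoking the standard results of \cite[Chapter~17]{Meyn2009} on the Poisson equation for $V$-uniformly ergodic kernels. The one genuine issue is that the theorem gives $\bV_{\beta_0}$-uniform ergodicity, whereas the corollary claims a solution for functions bounded only by $\bV_{\beta_0}^\alpha$ with $\alpha\in\ocint{0,1}$, together with a bound on $\Vert \hat f\Vert_{\bV_{\beta_0}^\alpha}$. So first I would record the elementary interpolation fact that $\bV_{\beta_0}^\alpha$ is itself a drift function of the same type: since $x\mapsto x^\alpha$ is concave on $\rset_+$ and $\bV_{\beta_0}\ge L$, one has $\bP_{(\bcovmat,\bbeta)}(\bV_{\beta_0}^\alpha)\le (\bP_{(\bcovmat,\bbeta)}\bV_{\beta_0})^\alpha$ by Jensen, and the geometric drift inequality for $\bV_{\beta_0}$ (which is equivalent to~\eqref{eq:geom-erg} by \cite[Theorem~16.0.1]{Meyn2009}) passes to $\bV_{\beta_0}^\alpha$ with a possibly worse constant but the same small set. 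Hence $\bP_{(\bcovmat,\bbeta)}$ is also $\bV_{\beta_0}^\alpha$-uniformly ergodic, with rate and constants that can be taken uniform over $\bcovmat\in\positivematrixset[\epsilon]{d}$ and $\bbeta\in\mathcal{K}_{\beta_0}$ because the constants in~\eqref{eq:geom-erg} are.

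Second, with $\bV_{\beta_0}^\alpha$-uniform ergodicity in hand, I would apply \cite[Theorem~17.4.2]{Meyn2009}: for any $f$ with $\Vert f\Vert_{\bV_{\beta_0}^\alpha}<\infty$ the series $\hat f_{(\bcovmat,\bbeta)}\defeq\sum_{k\ge 0}\bP_{(\bcovmat,\bbeta)}^k\big(f-\bpi_{\bbeta}(f)\big)$ converges absolutely, solves the Poisson equation~\eqref{eq:Poisson_equation}, and satisfies $\Vert\hat f_{(\bcovmat,\bbeta)}\Vert_{\bV_{\beta_0}^\alpha}\le D_{\epsilon,\beta_0}\Vert f\Vert_{\bV_{\beta_0}^\alpha}$ where $D_{\epsilon,\beta_0}$ depends only on the ergodicity constants, hence only on $\epsilon$ and $\beta_0$; explicitly $D_{\epsilon,\beta_0}\le \tilde C_{\epsilon,\beta_0}/(1-\tilde\varrho_{\epsilon,\beta_0})$ for the $\bV_{\beta_0}^\alpha$-constants $\tilde C,\tilde\varrho$. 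Uniqueness up to an additive constant is also part of that theorem (it uses irreducibility, which follows from the construction of $\bP_{(\bcovmat,\bbeta)}$ as the composition of a swap kernel and a random-walk Metropolis kernel whose irreducibility on $\Xset$ is guaranteed by the positivity of $\pi$ on compact sets in~\A{a:super-exp}; this irreducibility is implicit in the proof of Theorem~\ref{theo:geometric-ergodicity-parallel-tempering}).

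The main obstacle — really the only nonroutine point — is making the interpolation step clean, i.e.\ checking that the small set and the geometric rate for $\bV_{\beta_0}$ transfer to $\bV_{\beta_0}^\alpha$ uniformly in the parameters. The small set is unchanged (it is a level set of the drift function, and sublevel sets of $\bV_{\beta_0}$ and of $\bV_{\beta_0}^\alpha$ coincide), and the drift constant only deteriorates in a controlled way under $t\mapsto t^\alpha$; the quantitative conversion between drift/minorization constants and the $V$-ergodicity constants in~\cite[Chapter~15--16]{Meyn2009} is continuous in the input data, so uniformity over $\positivematrixset[\epsilon]{d}\times\mathcal{K}_{\beta_0}$ is inherited from the uniformity already established in Theorem~\ref{theo:geometric-ergodicity-parallel-tempering}. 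Everything else is a direct citation, so I would keep the written proof to essentially these three lines: interpolate the drift, quote \cite[Theorem~17.4.2]{Meyn2009}, and note that the constants depend only on $\epsilon,\beta_0$.
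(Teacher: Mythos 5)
Your proposal is correct and follows essentially the same route as the paper, which gives no explicit proof beyond the remark that geometric ergodicity implies existence and growth bounds for the Poisson solution via \cite[Chapter~17]{Meyn2009}. The only substantive point you add --- that the $\bV_{\beta_0}$-drift transfers to $\bV_{\beta_0}^\alpha$ by Jensen's inequality and subadditivity of $t\mapsto t^\alpha$, with the same small set --- is the standard fact the authors are implicitly relying on, and your handling of it is sound.
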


We will next establish that the parallel tempering kernel is locally
Lipshitz continuous.
For any $\beta > 0$, denote by
$D_{\bV_{\beta}}\left[(\bcovmat,\bbeta),(\bcovmat',\bbeta')\right]$
the $\bV_{\beta}$-variation of the kernels $\bP_{(\bcovmat,\bbeta)}$
and $\bP_{(\bcovmat',\bbeta')}$,
\begin{equation}
\label{supp:eq:defi:DVnorm}
D_{\bV_{\beta}}\left[(\bcovmat,\bbeta),(\bcovmat',\bbeta')\right]
\eqdef \sup_{\bx \in \Xset^L} \frac{\Vnorm{\bP_{(\bcovmat,\bbeta)}(\bx,\cdot) - \bP_{(\bcovmat',\bbeta')}(\bx,\cdot)}{\bV_{\beta}} }{\bV_{\beta}(\bx)}\eqsp.
\end{equation}
For $\beta_0 \in \ooint{0,1} $ and $\eta > 0$, define the set
\begin{equation}
\label{eq:definition-K-beta-eta}
\mathcal{K}_{\beta_0,\eta} \eqdef
\left\{ \beta_0 \leq \beta^{(L)} \leq \dots \leq \beta^{(1)} \leq 1 \eqsp, \beta^{(k)} - \beta^{(k+1)} \geq \eta \right\} \eqsp.
\end{equation}

\begin{theorem}\label{theo:continuity-P}
Assume \A{a:super-exp}. Let $\epsilon > 0$, $\beta_0 > 0$ and $\eta > 0$.
For any $\alpha \in \ocint{0,1}$, there exists
$K_{\epsilon,\alpha,\beta_0,\eta}<\infty$ such that, for any
$\bcovmat,\bcovmat' \in \positivematrixset[\epsilon]{d}[L]$ and any
$\bbeta,\bbeta' \in \mathcal{K}_{\beta_0,\eta}$, it holds that
\[
D_{\bV^\alpha_{\beta_0}}\left[(\bcovmat,\bbeta),(\bcovmat',\bbeta')\right]\leq K_{\epsilon,\alpha,\beta_0,\eta}\left\{|\bbeta-\bbeta'|+|\bcovmat-\bcovmat'|\right\}\eqsp.
\]
\end{theorem}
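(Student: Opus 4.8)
The plan is to exploit the factorisation $\bP_{(\bcovmat,\bbeta)}=\bS_{\bbeta}\bM_{(\bcovmat,\bbeta)}$ together with the decomposition
\[
\bP_{(\bcovmat,\bbeta)}-\bP_{(\bcovmat',\bbeta')}=\bS_{\bbeta}\bigl(\bM_{(\bcovmat,\bbeta)}-\bM_{(\bcovmat',\bbeta')}\bigr)+\bigl(\bS_{\bbeta}-\bS_{\bbeta'}\bigr)\bM_{(\bcovmat',\bbeta')}\eqsp,
\]
and to estimate the $\bV_{\beta_0}^{\alpha}$-variation of each summand. Two elementary \emph{drift} facts, uniform over $\bcovmat\in\positivematrixset[\epsilon]{d}[L]$ and $\bbeta\in\mathcal{K}_{\beta_0,\eta}$, form the backbone. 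First, every single-level kernel obeys the crude bound $M_{(\covmat,\beta)}V_{\beta_0}\le 2V_{\beta_0}$: the purely algebraic inequality $\alpha_{\beta}(x,y)V_{\beta_0}(y)\le V_{\beta_0}(x)$ holds whenever $\beta\ge\beta_0/2$ (check it separately on $\{\pi(y)\ge\pi(x)\}$ and on $\{\pi(y)<\pi(x)\}$), so the accepted and the rejected parts each contribute at most $V_{\beta_0}(x)$; by concavity of $t\mapsto t^{\alpha}$ this lifts to $\bM_{(\bcovmat,\bbeta)}\bV_{\beta_0}^{\alpha}\le 2^{\alpha}\bV_{\beta_0}^{\alpha}$. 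Second, $\bS_{\bbeta}\bV_{\beta_0}^{\alpha}=\bV_{\beta_0}^{\alpha}$ \emph{exactly}, because $\bV_{\beta_0}$ is a symmetric function of its $L$ arguments and $\bS_{\bbeta}$ either fixes the state or swaps two coordinates.

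The heart of the argument is a single-level Lipschitz estimate: there is a finite constant $C$ such that for all $x$
\[
\Vnorm{M_{(\covmat,\beta)}(x,\cdot)-M_{(\covmat',\beta')}(x,\cdot)}{V_{\beta_0}^{\alpha}}\le C\bigl(|\covmat-\covmat'|+|\beta-\beta'|\bigr)V_{\beta_0}^{\alpha}(x)\eqsp,
\]
and, crucially, a companion bound in which the left side is $\tvnorm{M_{(\covmat,\beta)}(x,\cdot)-M_{(\covmat',\beta')}(x,\cdot)}$ and the right side carries \emph{no} factor $V_{\beta_0}^{\alpha}(x)$. To prove this I would split $M_{(\covmat,\beta)}-M_{(\covmat',\beta')}=\bigl(M_{(\covmat,\beta)}-M_{(\covmat',\beta)}\bigr)+\bigl(M_{(\covmat',\beta)}-M_{(\covmat',\beta')}\bigr)$. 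In the first difference only the increment density changes; using $\alpha_{\beta}(x,y)V_{\beta_0}^{\alpha}(y)=\alpha_{\beta}^{1-\alpha}(x,y)\bigl(\alpha_{\beta}(x,y)V_{\beta_0}(y)\bigr)^{\alpha}\le V_{\beta_0}^{\alpha}(x)$ together with the Lipschitz continuity of $\covmat\mapsto q_{\covmat}$ in $L^{1}(\R^{d})$ on the compact set $\positivematrixset[\epsilon]{d}$ (differentiate the Gaussian density in $\covmat$ and bound the $L^1$-norm of the derivative uniformly), the difference is at most $2\|q_{\covmat}-q_{\covmat'}\|_{L^1}V_{\beta_0}^{\alpha}(x)\le C|\covmat-\covmat'|V_{\beta_0}^{\alpha}(x)$, and the same $x$-free expression bounds the total-variation distance. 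In the second difference only the acceptance ratio changes; with $r=\pi(y)/\pi(x)$ one has $|1\wedge r^{\beta}-1\wedge r^{\beta'}|=0$ on $\{r\ge1\}$ and $\le|\beta-\beta'|\,r^{\beta\wedge\beta'}|\log r|$ on $\{r<1\}$, and since $\beta\wedge\beta'\ge\beta_0$ and $\sup_{0<s\le1}s^{\beta_0}|\log s|<\infty$ this yields both a uniform (in $x,y$) bound on $|\alpha_{\beta}-\alpha_{\beta'}|$ and, after multiplying by $V_{\beta_0}^{\alpha}(y)$ and using $\beta\wedge\beta'-\alpha\beta_0/2>0$, the pointwise inequality $r^{\beta\wedge\beta'}|\log r|V_{\beta_0}^{\alpha}(y)\le C\,V_{\beta_0}^{\alpha}(x)$; integrating against $q_{\covmat}(y-x)\,\rmd y$ closes this piece.

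Next I would lift the single-level estimate to the product kernel through the telescoping identity
\[
\bM_{(\bcovmat,\bbeta)}-\bM_{(\bcovmat',\bbeta')}=\sum_{k=1}^{L}\Bigl(\bigotimes_{\ell<k}\tilde{M}^{(\ell)}\Bigr)\otimes\bigl(M^{(k)}-\tilde{M}^{(k)}\bigr)\otimes\Bigl(\bigotimes_{\ell>k}M^{(\ell)}\Bigr)\eqsp,
\]
where $M^{(\ell)}$ and $\tilde{M}^{(\ell)}$ are the level-$\ell$ kernels attached to $(\covmat^{(\ell)},\beta^{(\ell)})$ and to the corresponding primed parameters. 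Testing against $|f|\le\bV_{\beta_0}^{\alpha}\le\sum_{m}V_{\beta_0}^{\alpha}(\cdot^{(m)})$ (subadditivity of $t\mapsto t^{\alpha}$) and integrating coordinate by coordinate, each genuine Markov factor applied to $1$ gives $1$ and applied to $V_{\beta_0}^{\alpha}$ gives at most $2V_{\beta_0}^{\alpha}$, while the difference factor at level $k$ is controlled by the $V_{\beta_0}^{\alpha}$-bound when the surviving summand is $V_{\beta_0}^{\alpha}(\cdot^{(k)})$ and by the \emph{uniform} total-variation bound otherwise --- this last point being exactly what prevents a second power of $V_{\beta_0}$ from accumulating. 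Summation over $k$ (and $\sum_k a_k\le\sqrt L\,(\sum_k a_k^2)^{1/2}$) gives $\Vnorm{(\bM_{(\bcovmat,\bbeta)}-\bM_{(\bcovmat',\bbeta')})(\bx,\cdot)}{\bV_{\beta_0}^{\alpha}}\le C(|\bcovmat-\bcovmat'|+|\bbeta-\bbeta'|)\bV_{\beta_0}^{\alpha}(\bx)$. For the swap kernel, $\bS_{\bbeta}-\bS_{\bbeta'}$ depends on $\bbeta$ only through the acceptance probabilities $\varpi^{(j)}_{\bbeta}$; repeating the one-dimensional computation of the previous paragraph with the exponent $\beta^{(j)}-\beta^{(j+1)}\ge\eta$ --- \emph{this} is where the set $\mathcal{K}_{\beta_0,\eta}$ is used --- gives $|\varpi^{(j)}_{\bbeta}(u,v)-\varpi^{(j)}_{\bbeta'}(u,v)|\le C|\bbeta-\bbeta'|$ uniformly in $u,v$, and since $|f(\text{swapped }\bx)-f(\bx)|\le 2\bV_{\beta_0}^{\alpha}(\bx)$ by symmetry, $\Vnorm{(\bS_{\bbeta}-\bS_{\bbeta'})(\bx,\cdot)}{\bV_{\beta_0}^{\alpha}}\le C|\bbeta-\bbeta'|\bV_{\beta_0}^{\alpha}(\bx)$.

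It then remains to assemble the two displays. For the first summand, $|\bS_{\bbeta}(\bM_{(\bcovmat,\bbeta)}-\bM_{(\bcovmat',\bbeta')})f(\bx)|\le\int\bS_{\bbeta}(\bx,\rmd\mathbf{z})\,C(|\bcovmat-\bcovmat'|+|\bbeta-\bbeta'|)\bV_{\beta_0}^{\alpha}(\mathbf{z})=C(|\bcovmat-\bcovmat'|+|\bbeta-\bbeta'|)\bV_{\beta_0}^{\alpha}(\bx)$ by $\bS_{\bbeta}\bV_{\beta_0}^{\alpha}=\bV_{\beta_0}^{\alpha}$; for the second, $|(\bS_{\bbeta}-\bS_{\bbeta'})\bM_{(\bcovmat',\bbeta')}f(\bx)|\le 2^{\alpha}\Vnorm{(\bS_{\bbeta}-\bS_{\bbeta'})(\bx,\cdot)}{\bV_{\beta_0}^{\alpha}}\le C|\bbeta-\bbeta'|\bV_{\beta_0}^{\alpha}(\bx)$ by $\bM_{(\bcovmat',\bbeta')}\bV_{\beta_0}^{\alpha}\le 2^{\alpha}\bV_{\beta_0}^{\alpha}$. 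Dividing by $\bV_{\beta_0}^{\alpha}(\bx)$ and taking the supremum over $\bx$ gives the claim. I expect the main obstacle to be the bookkeeping in the tensor-product telescoping --- arranging the estimates so that only the first power of $\bV_{\beta_0}^{\alpha}$ appears on the right --- together with the two algebraic domination inequalities, namely $\alpha_{\beta}(x,y)V_{\beta_0}^{\alpha}(y)\le V_{\beta_0}^{\alpha}(x)$ and the uniform bound on the $\beta$-derivative of $\beta\mapsto 1\wedge r^{\beta}$, which are precisely what force the lower bounds $\beta\ge\beta_0$ and $\beta^{(j)}-\beta^{(j+1)}\ge\eta$ appearing in the hypotheses.
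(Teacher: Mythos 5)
Your proposal is correct and follows essentially the same route as the paper: the same decomposition $\bP_{(\bcovmat,\bbeta)}-\bP_{(\bcovmat',\bbeta')}=\bS_{\bbeta}(\bM_{(\bcovmat,\bbeta)}-\bM_{(\bcovmat',\bbeta')})+(\bS_{\bbeta}-\bS_{\bbeta'})\bM_{(\bcovmat',\bbeta')}$, the same telescoping over the $L$ levels with the paired $V$-weighted and uniform bounds on $|\alpha_{\beta}-\alpha_{\beta'}|$ and on $\|q_{\covmat}-q_{\covmat'}\|_{L^1}$, and the same use of the gap $\beta^{(j)}-\beta^{(j+1)}\geq\eta$ for the swap kernel. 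The only (cosmetic) difference is that the paper first reduces to $\alpha=1$ via the norm equivalence $\bV_{\beta_0}^{\alpha}\asymp\bV_{\alpha\beta_0}$ and cites Andrieu--Moulines for the covariance continuity, whereas you carry the power $\alpha$ through directly and sketch that estimate from scratch.
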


%}}}

%%%%%%%%%%%%%%%%%%%%%%%%%%%%%%%%%%%%%%%%
\subsection{Strong law of large numbers} %{{{

We can state an ergodicity result for 
the adaptive parallel tempering algorithm, given the step size
sequences satisfy the following natural condition.

\begin{hypA}\label{a:step-size}
 Assume that the step sizes  $\{\gamma_{n,i}, n \in \nset \}$, $i=1,2,3$ defined in \eqref{def:temp_adapt},\eqref{eq:am-cov}, and \eqref{eq:scaling-adaptation} are non-negative and satisfy following conditions
\begin{enumerate}[(i)]
\item  For $i=1,2,3$,  $\sum_{n\geq1}\gamma_{n,i} =\infty$ and $\sum_{n\geq1}n^{-1}\gamma_{n,i} < \infty$
\label{a:step-size-item1}
\item \label{a:step-size-item2}
$\sup_{n\in\nset}\gamma_{n,2}\leq 1$
\end{enumerate}
\end{hypA}
\begin{remark}
    \label{rem:step-size-example}
It is easy to see that $\gamma_{n,i} = c_i(n+1)^{-\xi_i}$ with some
$c_1,c_3\in(0,\infty)$ and $c_2\in(0,1]$ and
$\xi_1,\xi_2,\xi_3\in(0,1]$ satisfy \A{a:step-size}.
\end{remark}

\begin{theorem}
\label{theo:law-of-large-numbers}
Assume \A{a:super-exp}-\A{a:step-size} and $\E
\left[ \bV_{\beta_0}(\bX_0) \right] <\infty$. Then, for any function
$f: \Xset^L \to \rset$ such that $\Vnorm{f}{\bV_{\beta_0}^\alpha}<\infty$ for some
$\alpha \in \ooint{0,1}$ and given $\lim_{n \to \infty} \bpi_{\bbeta_n}(f)$ exists,
we have
\[\frac{1}{n}\sum_{i=1}^n f(\bX_i) \longrightarrow \lim_{n \to \infty} \bpi_{\bbeta_n} (f)\quad \text{a.s.}\]
\end{theorem}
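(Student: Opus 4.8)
The plan is to apply the standard machinery for strong laws of large numbers for adaptive MCMC, in the form developed by Andrieu--Moulines and Fort--Moulines--Priouret, using as inputs the uniform geometric ergodicity (Theorem~\ref{theo:geometric-ergodicity-parallel-tempering}), the Poisson equation bound (Corollary~\ref{coro:Poisson}), and the local Lipschitz continuity of the kernels (Theorem~\ref{theo:continuity-P}). The first step is the usual Poisson-equation decomposition: writing $\bar f \eqdef f - \bpi_{\bbeta_n}(f)$ and letting $\hat f_n \eqdef \hat f_{(\bcovmat_n,\bbeta_n)}$ be the solution of the Poisson equation \eqref{eq:Poisson_equation} at the current parameter value, one decomposes
\[
\frac1n\sum_{i=1}^n \big(f(\bX_i)-\bpi_{\bbeta_{i-1}}(f)\big)
= \frac1n\sum_{i=1}^n\big(\hat f_{i-1}(\bX_i)-\bP_{(\bcovmat_{i-1},\bbeta_{i-1})}\hat f_{i-1}(\bX_{i-1})\big)
+ \text{(remainder terms)}.
\]
The first sum is a martingale-difference sum with respect to $\{\mathcal F_n\}$ by \eqref{def:expn}; the remainder terms collect (a) a telescoping-type term $\frac1n(\hat f_0(\bX_0)-\hat f_{n-1}(\bX_n))$, and (b) the "parameter-drift" terms $\frac1n\sum_i\big(\bP_{(\bcovmat_{i-1},\bbeta_{i-1})}\hat f_{i-1}(\bX_i) - \bP_{(\bcovmat_i,\bbeta_i)}\hat f_i(\bX_i)\big)$ arising from the change of the kernel and its Poisson solution between consecutive steps.

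The second step is to control each piece under the moment assumption $\E[\bV_{\beta_0}(\bX_0)]<\infty$. One first needs a uniform-in-$n$ moment bound $\sup_n \E[\bV_{\beta_0}(\bX_n)]<\infty$, which follows from the drift condition underlying Theorem~\ref{theo:geometric-ergodicity-parallel-tempering}: each $\bP_{(\bcovmat,\bbeta)}$ satisfies a common Foster--Lyapunov inequality $\bP_{(\bcovmat,\bbeta)}\bV_{\beta_0}\le\lambda\bV_{\beta_0}+b$ on $\mathcal K_{\beta_0}$, so the moment propagates along the adaptive chain regardless of how the parameters move. Combined with Corollary~\ref{coro:Poisson}, which gives $\|\hat f_n\|_{\bV_{\beta_0}^\alpha}\le D_{\epsilon,\beta_0}\|f\|_{\bV_{\beta_0}^\alpha}$ uniformly (the parameters staying in the compact sets forced by the projections $\Pi_\rho,\Pi_\Gamma,\Pi_T$, so $\bcovmat_n\in\positivematrixset[\epsilon]{d}[L]$ and $\bbeta_n\in\mathcal K_{\beta_0,\eta}$ for suitable $\epsilon,\eta$), the martingale term has summable-in-a-weighted-sense conditional second moments bounded by $\E[\bV_{\beta_0}^{2\alpha}(\bX_{i-1})]$; since $2\alpha<2$ and the $\bV_{\beta_0}$-moments are uniformly bounded, a standard $L^2$ martingale SLLN (or Chow's theorem, using $\sum n^{-2}\E[\cdot]<\infty$) kills it. The telescoping term vanishes because $|\hat f_{n-1}(\bX_n)|\le D\|f\|_{\bV_{\beta_0}^\alpha}\bV_{\beta_0}^\alpha(\bX_n)$ and $n^{-1}\bV_{\beta_0}^\alpha(\bX_n)\to0$ a.s.\ (again from the uniform moment bound, via Borel--Cantelli).

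The third and genuinely delicate step is the parameter-drift term. Here one writes the increment as
\[
\bP_{(\bcovmat_{i-1},\bbeta_{i-1})}\hat f_{i-1} - \bP_{(\bcovmat_i,\bbeta_i)}\hat f_i
= \big(\bP_{(\bcovmat_{i-1},\bbeta_{i-1})} - \bP_{(\bcovmat_i,\bbeta_i)}\big)\hat f_{i-1}
+ \bP_{(\bcovmat_i,\bbeta_i)}\big(\hat f_{i-1}-\hat f_i\big),
\]
and bounds both using Theorem~\ref{theo:continuity-P}: the $\bV_{\beta_0}^\alpha$-Lipschitz bound on the kernels controls the first, and a corresponding Lipschitz bound on the Poisson solutions — obtained by subtracting the two Poisson equations, $\hat f_{i-1}-\hat f_i = (\bP_{(\bcovmat_{i-1},\bbeta_{i-1})}-\bP_{(\bcovmat_i,\bbeta_i)})\hat f_{i-1} + (\bpi_{\bbeta_i}-\bpi_{\bbeta_{i-1}})(f) + \bP_{(\bcovmat_i,\bbeta_i)}(\hat f_{i-1}-\hat f_i)$ and inverting via the geometric ergodicity — controls the second. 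The net effect is a bound of order $\|\hat f_{i-1}\|_{\bV_{\beta_0}^{\alpha'}}\big(|\bbeta_i-\bbeta_{i-1}|+|\bcovmat_i-\bcovmat_{i-1}|\big)\bV_{\beta_0}^{\alpha''}(\bX_i)$ for slightly enlarged exponents $\alpha'<\alpha''<1$ (so one should start from $\alpha\in(0,1)$ strictly, as assumed). Now the one-step parameter increments are, by construction of the recursions \eqref{def:temp_adapt}, \eqref{eq:am-cov}--\eqref{eq:scaling-adaptation} (and the smoothness of $\bbeta\mapsto$ via \eqref{eq:beta-rho-def}), bounded by $C\gamma_{i,1}+C\gamma_{i,2}\bV_{\beta_0}^{c}(\bX_i)+C\gamma_{i,3}$, so the drift term is $O\!\big(\frac1n\sum_i(\gamma_{i,1}+\gamma_{i,2}+\gamma_{i,3})\bV_{\beta_0}^{\text{power}}(\bX_i)\big)$, which converges to zero a.s.\ by \A{a:step-size}(i): indeed $\sum_i i^{-1}\gamma_{i,j}<\infty$ together with the uniform $\bV_{\beta_0}$-moment bound gives, via Kronecker's lemma applied to the a.s.-finite series $\sum_i i^{-1}\gamma_{i,j}\bV_{\beta_0}^{\text{power}}(\bX_i)$ (finite because its expectation is), that $\frac1n\sum_{i\le n}\gamma_{i,j}\bV_{\beta_0}^{\text{power}}(\bX_i)\to0$ a.s. Finally one replaces $\frac1n\sum_i\bpi_{\bbeta_{i-1}}(f)$ by its limit using the hypothesis that $\lim_n\bpi_{\bbeta_n}(f)$ exists together with Cesàro convergence. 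The main obstacle is assembling the Lipschitz estimate for the \emph{difference of Poisson solutions} with explicit, uniform control of the $\bV_{\beta_0}$-power exponents — this is where the strict inequality $\alpha<1$ and the slack $\eta>0$ in $\mathcal K_{\beta_0,\eta}$ are consumed — and verifying that the projected recursions indeed keep $(\bcovmat_n,\bbeta_n)$ in a set on which Theorems~\ref{theo:geometric-ergodicity-parallel-tempering} and \ref{theo:continuity-P} apply with constants independent of $n$.
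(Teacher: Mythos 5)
Your plan is correct and follows essentially the same route as the paper: the paper's proof simply invokes the general adaptive-MCMC law of large numbers \cite[Corollary~2.9]{fort-moulines-priouret} and verifies its two hypotheses --- the uniform moment bound $\sup_{n}\E[\bV_{\beta_0}(\bX_n)]<\infty$ (\autoref{lem:expectation_V}) and the almost-sure summability $\sum_{n} n^{-1} D_{\bV_{\beta_0}^\alpha}\left[(\bcovmat_n,\bbeta_n),(\bcovmat_{n-1},\bbeta_{n-1})\right]\bV_{\beta_0}^\alpha(\bX_n)<\infty$ (\autoref{lem:series-DV}, via \autoref{theo:continuity-P} and the step-size condition \A{a:step-size}) --- which are exactly the two quantities your martingale/Poisson decomposition reduces everything to. The only difference is that you unroll the content of that cited corollary (the Poisson-equation decomposition, the Lipschitz control of the difference of Poisson solutions as in \autoref{lem:continuity-Poisson}, and the Kronecker-lemma step) instead of using it as a black box.
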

\begin{remark}
    \label{rem:cool-limit}
In practice, one is usually only interested in integrating with
respect to $\pi$, which means functions $f$ depending only on the
first coordinate, that is, $f(\chunk{x}{1}{L})= f(x^{(1)})$. In this
case, the limit condition is trivial, because 
$\bpi_{\bbeta_n}(f)= \pi(f)$ for all $n\in\nset$.
\end{remark}

%}}}

%%%%%%%%%%%%%%%%%%%%%%%%%%%%%%%%%%%%%%%%
\subsection{Convergence of temperature adaptation}
\label{sec:adapt-conv} %{{{

The strong law of large numbers
(\autoref{theo:law-of-large-numbers}) does not require the convergence
of the inverse temperatures, if only the coolest chain $x^{(1)}$ is
involved (\autoref{rem:cool-limit}). It is, however, important to work out the convergence of the
adaptation, because then we know what to expect on the asymptotic
behaviour of the algorithm. Having the convergence, it is also
possible to establish central limit theorems \cite{andrieu-moulines};
however, we do not pursue it here.

We denote the associated mean field of the stochastic approximation procedure
\eqref{def:temp_adapt} by
\[
\bh(\brho) \eqdef
\left[h^{(1)}(\rho^{(1)}),\ldots,h^{(L-1)}(\rho^{(1)},\dots,\rho^{(L-1)})\right] \eqsp,
\]
where
\[
h^{(\ell)}(\rho^{(1)},\dots,\rho^{(\ell)})\eqdef
\int H^{(\ell)}(\rho^{(1)}, \dots, \rho^{(\ell)},\bx) \bpi_{\bbeta}(\ud \bx) \eqsp.
\]
We may write
\[
h^{(\ell)}(\brho)=\iint
  \varpi_{\bbeta}^{(\ell)}(x^{(\ell)},x^{(\ell+1)})
  \frac{\pi^{\beta^{(\ell)}}(\ud x^{(\ell)})}{Z(\beta^{(\ell)})}
  \frac{\pi^{\beta^{(\ell+1)}}(\ud x^{(\ell+1)})}{Z(\beta^{(\ell+1)})}-\alpha^*,
\]
where $Z(\beta)$ is the normalising constant defined in
\eqref{eq:definition-normalizing-constant}.

The following result establishes the existence and uniqueness of the
stable point of the adaptation. In words, the following result implies
that there exist unique temperatures so that the mean rate of
accepting proposed swaps is $\alpha^*$.

%%%%%%%%%%%%%%%%%%%%
\begin{proposition}
    \label{prop:uniqueness-temperature} %{{{
    Assume \A{a:super-exp}. Then, there exists a
    unique $\hat{\brho} \eqdef [ \hat{\rho}^{(1)},\dots,\hat{\rho}^{(L-1)}]$
    solution of the system of equations $h^{(\ell)}(\rho^{(1)},\dots, \rho^{(\ell)})=0$, $\ell \in \{1,\dots, L-1\}$.
\end{proposition}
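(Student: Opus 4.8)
The plan is to solve the system triangularly, exploiting the fact that $h^{(\ell)}$ depends only on $(\rho^{(1)},\ldots,\rho^{(\ell)})$ through the pair $(\beta^{(\ell)},\beta^{(\ell+1)})$, equivalently through $(\beta^{(\ell)},\Delta\beta^{(\ell)})$. First I would reparametrise: fix $\ell$ and define, for $0<b'<b\le 1$,
\[
\Phi(b,b') \eqdef \iint \left(1\wedge\left(\frac{\pi(y)}{\pi(x)}\right)^{b-b'}\right)\frac{\pi^{b}(\ud x)}{Z(b)}\frac{\pi^{b'}(\ud y)}{Z(b')}\eqsp,
\]
so that $h^{(\ell)}(\brho)=\Phi(\beta^{(\ell)},\beta^{(\ell+1)})-\alpha^*$. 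The key analytic claim is that for each fixed $b\in(0,1]$ the map $b'\mapsto\Phi(b,b')$ is continuous and strictly decreasing on $(0,b)$, with $\lim_{b'\uparrow b}\Phi(b,b')=1$ and $\lim_{b'\downarrow 0}\Phi(b,b')<\alpha^*$. Granting this, the intermediate value theorem and strict monotonicity give, for each $b$, a unique $b'=\psi(b)\in(0,b)$ with $\Phi(b,\psi(b))=\alpha^*$. One then solves the system recursively: set $\hat\beta^{(1)}=1$, then $\hat\beta^{(2)}=\psi(\hat\beta^{(1)})$, $\hat\beta^{(3)}=\psi(\hat\beta^{(2)})$, and so on down to $\hat\beta^{(L)}$; since $\beta^{(\ell+1)}(\chunk{\rho}{1}{\ell})=\beta^{(\ell)}\exp(-\exp(\rho^{(\ell)}))$ is a bijection from $\rho^{(\ell)}\in\rset$ onto $\beta^{(\ell+1)}\in(0,\beta^{(\ell)})$, each step determines $\hat\rho^{(\ell)}$ uniquely, and existence/uniqueness of $\hat\brho$ follows.

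The substance is in verifying the monotonicity and limit claims for $\Phi$. For the limit $b'\uparrow b$: as the exponent $b-b'\downarrow0$, the integrand $1\wedge(\pi(y)/\pi(x))^{b-b'}\to 1$ pointwise, and $\pi^{b'}/Z(b')\to\pi^{b}/Z(b)$ weakly; dominated convergence (the integrand is bounded by $1$) gives $\Phi\to1$. For $b'\downarrow0$, one needs that $\Phi(b,b')$ stays bounded away from $1$ — here I would use that under \A{a:super-exp} the measures $\pi^{b}/Z(b)$ and $\pi^{b'}/Z(b')$ become increasingly different (the latter spreads out as $b'\to0$), so the probability $\mathbb{P}(\pi(Y)\ge\pi(X))$-type quantity controlling the acceptance does not approach $1$; more robustly, it suffices to exhibit a single $b'_*<b$ with $\Phi(b,b'_*)<\alpha^*$, which combined with monotonicity forces the root into $(b'_*,b)$ — strictly speaking the limit value itself is not needed, only that the continuous decreasing function drops below $\alpha^*$ somewhere in $(0,b)$. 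For strict monotonicity in $b'$: differentiate under the integral sign, or argue via a coupling/rearrangement that increasing the exponent gap $b-b'$ strictly decreases the mean of $1\wedge r^{b-b'}$ with $r=\pi(y)/\pi(x)$, while also noting the explicit dependence of the reference measure on $b'$; the cleanest route is to write $\Phi(b,b')=\esp{}{1\wedge e^{(b-b')(\log\pi(Y)-\log\pi(X))}}$ with $X\sim\pi^b/Z(b)$, $Y\sim\pi^{b'}/Z(b')$ independent, symmetrise in $X\leftrightarrow Y$ using that $\varpi$ satisfies detailed balance, and reduce to monotonicity of $t\mapsto\esp{}{1\wedge e^{tW}}$ in $t>0$ for a suitable non-degenerate $W$, which is elementary once one checks $W$ is not a.s.\ zero (this uses that $\log\pi$ is non-constant, a consequence of \A{a:super-exp}).

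The main obstacle I expect is handling the simultaneous dependence of both the integrand's exponent and the integrating measure on $b'$: naive monotonicity of $1\wedge r^{b-b'}$ in the exponent is not by itself enough because the law of $(X,Y)$ also moves. I would resolve this by the detailed-balance symmetrisation above, which lets me rewrite $\Phi(b,b')$ as a symmetric functional of the pair of tempered distributions in a form where the exponent gap $b-b'$ is the only parameter entering monotonically, after a change of variables absorbing the measure dependence; the finiteness of all normalising constants $Z(\beta)$ for $\beta>0$ needed throughout is guaranteed by the super-exponential tail condition in \A{a:super-exp}. The remaining continuity of $\psi$ (hence of the whole recursion) is a routine consequence of continuity and strict monotonicity of $\Phi$ together with the implicit function theorem or a direct monotone-limit argument, so uniqueness of $\hat\brho$ propagates without difficulty down the $L-1$ levels.
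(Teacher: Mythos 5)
Your overall architecture is exactly the paper's: reduce the system triangularly via the pairwise mean acceptance rate (your $\Phi(b,b')$ is the paper's symmetric function $\tilde h(b',b)$ of \eqref{eq:definition-tilde-h}), establish strict monotonicity and the boundary limits in the lower inverse temperature, apply the intermediate value theorem at each level, and pull back through the bijection $\rho^{(\ell)}\mapsto\beta^{(\ell)}\exp(-\exp(\rho^{(\ell)}))$. (Minor slip: with your stated limits, $b'\mapsto\Phi(b,b')$ is strictly \emph{increasing} on $(0,b)$, i.e.\ decreasing in the gap $b-b'$, not in $b'$.)

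The genuine gap is in your justification of strict monotonicity. You correctly identify the obstacle --- both the exponent $b-b'$ and the law of $(X,Y)$ move with $b'$ --- but your proposed resolution, symmetrising and then ``reducing to monotonicity of $t\mapsto\E[1\wedge e^{tW}]$ in $t$ for a fixed non-degenerate $W$'', is not available: $\tilde h$ is not a function of the gap $b-b'$ alone (for a Gaussian target it depends on the ratio $b'/b$), and no change of variables makes the law of $W=\log\pi(Y)-\log\pi(X)$ independent of $b'$. What the symmetrisation actually buys (identity \eqref{eq:definition-tildeh} of the paper) is the opposite reduction: the exponent disappears entirely and one gets $\Phi(b,b')=\int f_b(\pi(y))\,\pi^{b'}(y)\,\ud y/Z(b')$ with $f_b(z)=\int[\1_{\{\pi\le z\}}+\1_{\{\pi<z\}}]\,\pi^{b}(\ud x)/Z(b)$ fixed, so that $b'$ enters only through the reference measure. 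The missing ingredient is then a correlation inequality: differentiating in $b'$ yields $\tfrac12\iint[f_b(\pi(y))-f_b(\pi(z))][\log\pi(y)-\log\pi(z)]\,\pi^{b'}(\ud y)\pi^{b'}(\ud z)/Z^2(b')$, which is strictly positive because $f_b\circ\pi$ and $\log\pi$ are both non-decreasing functions of $\pi$ and $\pi$ is non-constant on a set of positive measure (Chebyshev's sum inequality). Your sketch never supplies this step. Separately, your treatment of the limit $b'\downarrow0$ is only asserted: ``the measures become increasingly different'' does not by itself force $\Phi$ below $\alpha^*$. The paper proves $\lim_{b'\to0}\Phi(b,b')=0$ by showing $Z(b')\to\infty$ (using that the support of $\pi$ has infinite Lebesgue measure) while $\int\1_{\{\pi>\delta\}}\pi^{b'}$ stays bounded; some such quantitative argument is needed, and it is precisely the point at which the proposition can fail for compactly supported targets, as the paper's remark notes.
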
 %}}}

%%%%%%%%%%%%%%%%%%%%
\begin{remark} %{{{
In Proposition \autoref{prop:uniqueness-temperature}, it is sufficient
to assume that the support of $\pi$ has infinite Lebesgue measure and
that $\int \pi^\kappa(x) \ud x < \infty$ for all $0<\kappa\le 1$; see
\autoref{lem:monotonic:h}.
\end{remark} %}}}

%%%%%%%%%%%%%%%%%%%%
\begin{remark} %{{{
In case the support of $\pi$ has a finite Lebesgue measure, it is not
difficult to show that for a sufficiently large number of levels $L\ge
L_0$ there is no solution $\hat{\brho}$. On the contrary, in formal terms,
$\hat{\rho}^{(\ell)} = \infty$ for $\ell \ge L_0$, so that the
corresponding inverse temperatures $\hat{\beta}^{(\ell)} = 0$ for
$\ell \ge L_0+1$. For our algorithm, this would imply that
it simulates asymptotically $\pi^0/Z(0)$, the
uniform distribution on the support of $\pi$, with the levels $\ell\ge
L_0+1$.
\end{remark}
%}}}

For the convergence of the temperature adaptation, we require more stringent 
conditions on the step size sequence.
\begin{hypA}\label{a:step-size-v2}
 Assume that step sizes $\{ \gamma_{n,i}, n \in \nset \}$ defined in \eqref{def:temp_adapt},\eqref{eq:am-cov},\eqref{eq:am-mean} and \eqref{eq:scaling-adaptation} are non-negative and satisfy following conditions
\begin{enumerate}[(i)]
\item $\sum_{n \geq 1} \gamma_{n,i} = \infty$, $\sum_{n\geq 1} \gamma_{n,1}^2 < \infty$,
and $\sum_{n \geq 1} \gamma_{n,1} \gamma_{n,j} < \infty$, $j=2,3$.
\label{a:step-size-v2-item1}
\item $\sup_{n\in\nset}\gamma_{n,2}\leq 1$
\label{a:step-size-v2-item2}
\item $\sum_{n\geq1}|\gamma_{n+1,1}-\gamma_{n,1}|<\infty$
\label{a:step-size-v2-item3}
\end{enumerate}
\end{hypA}
It is easy to check that the sequences introduced in 
\autoref{rem:step-size-example} satisfy \A{a:step-size-v2} if we
assume in addition that $\xi_1,\xi_2,\xi_3\in(1/2,1]$.

\begin{theorem}\label{theo:convergence-temp-adapt}
 Assume \A{a:super-exp}-\A{a:step-size-v2}, $\E\bV_{\beta_0}(\bX_0)<\infty$. In addition for all $\ell=1,\dots,L-1$ we assume that
 $\underline{\rho}<\hat{\rho}^{(\ell)}<\overline{\rho}$, where $\hat{\brho}$ is given by \autoref{prop:uniqueness-temperature}. Then
\[
\lim_{n\to\infty} \brho_n= \hat{\brho}\quad \as\eqsp .
\]
\end{theorem}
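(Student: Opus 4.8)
The recursion \eqref{def:temp_adapt} for $\brho_n$ is a projected stochastic approximation scheme driven by the non-stationary Markov chain $\{\bX_n\}$, so the plan is to follow the classical Poisson-equation route (Benveniste--M\'etivier--Priouret; Andrieu--Moulines): decompose the update noise into a martingale part plus fluctuation terms, show that \A{a:step-size-v2} makes every fluctuation summable, and then run a Lyapunov argument for the mean-field ODE $\dot\brho=\bh(\brho)$, whose unique equilibrium is $\hat\brho$ by \autoref{prop:uniqueness-temperature}. The first step is to fix the admissible parameter range and the stability of the chain. Since $\brho_n$ is projected onto $[\underline\rho,\overline\rho]^{L-1}$ and $\bcovmat_n$ onto $\positivematrixset[\epsilon]{d}[L]$, the parametrisation \eqref{eq:beta-rho-def} keeps $\bbeta_n$ in a set $\mathcal{K}_{\beta_0,\eta}$ with $\beta_0,\eta>0$ depending only on $\underline\rho,\overline\rho$: the gaps $\beta^{(\ell)}-\beta^{(\ell+1)}=\beta^{(\ell)}\big(1-\exp(-\exp(\rho^{(\ell)}))\big)$ are bounded below and $\beta^{(L)}$ is bounded away from $0$. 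Hence \autoref{theo:geometric-ergodicity-parallel-tempering} holds with constants uniform over $(\bcovmat_n,\bbeta_n)$; iterating the associated geometric drift inequality (using $\sup_n\gamma_{n,2}\le1$ to stay inside the constraint set, and $\E\bV_{\beta_0}(\bX_0)<\infty$) one obtains $\sup_n\E\big[\bV_{\beta_0}^{p}(\bX_n)\big]<\infty$ for every $p\in(0,1]$. Fix $\alpha\in(0,1/2)$.

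Next comes the noise decomposition. Each $H^{(\ell)}(\brho,\cdot)$ is bounded uniformly in $\brho$, hence $\Vnorm{H^{(\ell)}(\brho,\cdot)}{\bV_{\beta_0}^\alpha}<\infty$, so \autoref{coro:Poisson} furnishes Poisson solutions $\hat H^{(\ell)}_{(\bcovmat,\bbeta)}$ with uniformly bounded $\bV_{\beta_0}^\alpha$-norm, and combining \autoref{theo:continuity-P} with the resolvent identity yields a local Lipschitz bound $\Vnorm{\hat H^{(\ell)}_{(\bcovmat,\bbeta)}-\hat H^{(\ell)}_{(\bcovmat',\bbeta')}}{\bV_{\beta_0}^\alpha}\lesssim|\bbeta-\bbeta'|+|\bcovmat-\bcovmat'|$. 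Writing $\bP_n\defeq\bP_{(\bcovmat_n,\bbeta_n)}$ and $\hat H^{(\ell)}_n\defeq\hat H^{(\ell)}_{(\bcovmat_n,\bbeta_n)}$, I would use
\[
H^{(\ell)}(\brho_{n-1},\bX_n)-h^{(\ell)}(\brho_{n-1})=\xi^{(\ell)}_n+\big(\bP_{n-1}\hat H^{(\ell)}_{n-1}(\bX_{n-1})-\bP_{n}\hat H^{(\ell)}_{n}(\bX_{n})\big)+r^{(\ell)}_n,
\]
where $\xi^{(\ell)}_n\defeq\hat H^{(\ell)}_{n-1}(\bX_n)-\bP_{n-1}\hat H^{(\ell)}_{n-1}(\bX_{n-1})$ is an $\mathcal{F}_n$-martingale increment and $r^{(\ell)}_n\defeq\bP_n\hat H^{(\ell)}_n(\bX_n)-\bP_{n-1}\hat H^{(\ell)}_{n-1}(\bX_n)$. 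Multiplying by $\gamma_{n,1}$ and summing: $\sum_n\gamma_{n,1}\xi^{(\ell)}_n$ converges a.s.\ because its conditional variances are $\lesssim\gamma_{n,1}^2\,\E[\bV_{\beta_0}^{2\alpha}(\bX_n)\mid\mathcal{F}_{n-1}]$ and $\sum_n\gamma_{n,1}^2<\infty$; the middle term telescopes, leaving a series controlled by the summability of $|\gamma_{n+1,1}-\gamma_{n,1}|$ in \A{a:step-size-v2}; and since the one-step increments satisfy $|\bbeta_n-\bbeta_{n-1}|+|\bcovmat_n-\bcovmat_{n-1}|\lesssim\gamma_{n,1}+\gamma_{n,2}+\gamma_{n,3}$, the Lipschitz bound gives $|r^{(\ell)}_n|\lesssim(\gamma_{n,1}+\gamma_{n,2}+\gamma_{n,3})\bV_{\beta_0}^\alpha(\bX_n)$, so $\sum_n\gamma_{n,1}|r^{(\ell)}_n|<\infty$ a.s.\ by the conditions $\sum_n\gamma_{n,1}^2<\infty$ and $\sum_n\gamma_{n,1}\gamma_{n,j}<\infty$ in \A{a:step-size-v2}. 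In short, $\rho^{(\ell)}_n=\Pi_\rho\big(\rho^{(\ell)}_{n-1}+\gamma_{n,1}h^{(\ell)}(\brho_{n-1})+\gamma_{n,1}\varepsilon^{(\ell)}_n\big)$ with $\sum_n\gamma_{n,1}\varepsilon^{(\ell)}_n$ convergent a.s.

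The last step is the deterministic ODE analysis. The mean field is triangular ($h^{(\ell)}$ depends only on $\rho^{(1)},\dots,\rho^{(\ell)}$), and by \autoref{lem:monotonic:h} each $h^{(\ell)}$ is strictly decreasing in $\rho^{(\ell)}$, with $\hat\brho$ its unique zero (\autoref{prop:uniqueness-temperature}) lying strictly inside $[\underline\rho,\overline\rho]^{L-1}$ by hypothesis. I would argue coordinatewise. For $\ell=1$, $\{\rho^{(1)}_n\}$ is a one-dimensional projected Robbins--Monro recursion with strictly decreasing mean field $h^{(1)}$ vanishing at the interior point $\hat\rho^{(1)}$, so the Lyapunov function $(\rho^{(1)}-\hat\rho^{(1)})^2$ gives $\rho^{(1)}_n\to\hat\rho^{(1)}$ a.s.\ (the projection is harmless because $h^{(1)}$ points inward at $\underline\rho,\overline\rho$). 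Assuming inductively $\rho^{(j)}_n\to\hat\rho^{(j)}$ for $j<\ell$, the $\ell$-th recursion is a perturbed one-dimensional projected scheme whose drift $h^{(\ell)}(\rho^{(1)}_n,\dots,\rho^{(\ell-1)}_n,\cdot)$ converges uniformly (by continuity of $h^{(\ell)}$) to $h^{(\ell)}(\hat\rho^{(1)},\dots,\hat\rho^{(\ell-1)},\cdot)$, still strictly decreasing with interior zero $\hat\rho^{(\ell)}$; absorbing the difference into the vanishing perturbation, the same argument yields $\rho^{(\ell)}_n\to\hat\rho^{(\ell)}$ a.s., and stacking gives $\brho_n\to\hat\brho$ a.s. Equivalently one can bypass the induction by exhibiting a single Lyapunov function $w(\brho)=\sum_\ell K_\ell(\rho^{(\ell)}-\hat\rho^{(\ell)})^2$ with hierarchically chosen weights $K_1\gg\cdots\gg K_{L-1}$, for which $\langle\nabla w(\brho),\bh(\brho)\rangle<0$ off $\hat\brho$, and quote a standard projected-SA convergence theorem.

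\textbf{Expected main difficulty.} The analytic core---geometric ergodicity and local Lipschitz continuity of the tempering kernels---is already available (\autoref{theo:geometric-ergodicity-parallel-tempering}, \autoref{theo:continuity-P}), so the delicate remaining point is the interplay between the \emph{cascade} structure of $\bh$ and the \emph{projection}: one must check that the projected mean-field flow retains $\hat\brho$ as its unique globally attracting equilibrium---no spurious equilibrium on the boundary of $[\underline\rho,\overline\rho]^{L-1}$, and no stalling of $\rho^{(\ell)}$ caused by slowly converging lower coordinates---which is exactly where the monotonicity of \autoref{lem:monotonic:h} and the interiority hypothesis $\underline\rho<\hat\rho^{(\ell)}<\overline\rho$ are used.
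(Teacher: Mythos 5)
Your proposal is correct and follows essentially the same route as the paper: a Poisson-equation decomposition of the noise into a martingale part, a telescoping/kernel-drift part and a summable remainder (the paper's \autoref{prop:noise}), followed by a coordinatewise induction in which the mean field of the $\ell$-th recursion is frozen at $\hat\rho^{(1)},\dots,\hat\rho^{(\ell-1)}$, the resulting one-dimensional projected scheme is handled with the Lyapunov function $(\rho^{(\ell)}-\hat\rho^{(\ell)})^2$ and the monotonicity of \autoref{lem:monotonic:h}, and the interiority hypothesis rules out boundary equilibria (the paper quotes Kushner--Yin where you quote a generic projected-SA theorem). The only cosmetic difference is that the paper makes the ``absorb the difference into the vanishing perturbation'' step explicit via a total-variation bound on the tempered marginals and the local Lipschitz continuity of $\log Z$, whereas you appeal directly to continuity of $h^{(\ell)}$.
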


%}}}

%%%%%%%%%%%%%%%%%%%%%%%%%%%%%%%%%%%%%%%%%%%%%%%%%%%%%%%%%%%%%%%%%%%%%%%%
\section{Experiments}
\label{sec:experiments}

We consider two different type of examples: mixture of Gaussians in
Section \ref{sec:mixture} and a challenging spatial imaging example in
Section \ref{sec:imaging}. In all the experiments, we use the step
size sequences $\gamma_{n,\uarg} = (n+1)^{-0.6}$, except for RAM
adaptation, where $\gamma_{n,2} = \min\{0.9, d\cdot (n+1)^{-0.6}\}$
(see \cite{vihola-ram} for a discussion). We did not observe numerical
instability issues, so the adaptations were not enforced to be stable
by projections. We used the following initial values for the adapted
parameters: temperature difference $\rho_0^{(\ell)} = 1$, covariances
$\Sigma_0^{(\ell)} = I$ and scalings $\theta_0^{(\ell)}=1$.

%%%%%%%%%%%%%%%%%%%%%%%%%%%%%%%%%%%%%%%%
\subsection{Mixture of Gaussians}
\label{sec:mixture}

We consider first a well-known two-dimensional mixture of Gaussians example
\cite[e.g.][]{liang-wong,baragatti-grimaud-pommeret}.  The example
consists of 20 mixture components with means in $[0,10]^2$ and each
component has a diagonal covariance $\sigma^2 I$, with
$\sigma^2=0.01$.  \autoref{fig:points-baragatti} shows an example of
the points simulated by our parallel tempering algorithm in this
example, when we use $L=5$ energy levels and the default (covariance)
adaptation to adjust the random walk proposals.
\autoref{fig:baragatti-temperatures} shows the convergence of the
temperature parameters in the same example.
%%%%%%%%%%%%%%%%%%%%
\begin{figure}
\includegraphics[width=\textwidth]{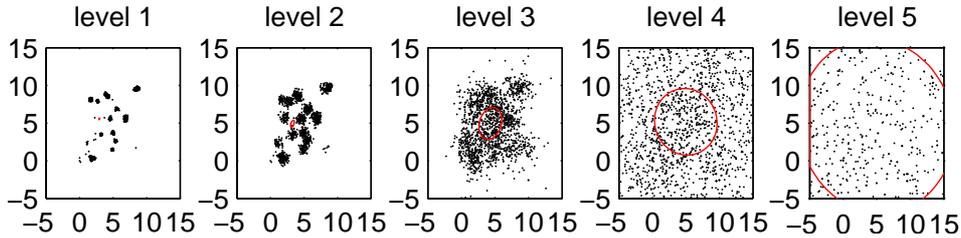}
\caption{Simulated points of the tempered distributions over 5000
iterations. The random-walk proposal is illustrated as an ellipsoid.}
\label{fig:points-baragatti}
\end{figure}
%%%%%%%%%%%%%%%%%%%%
\begin{figure}
\includegraphics[width=\textwidth]{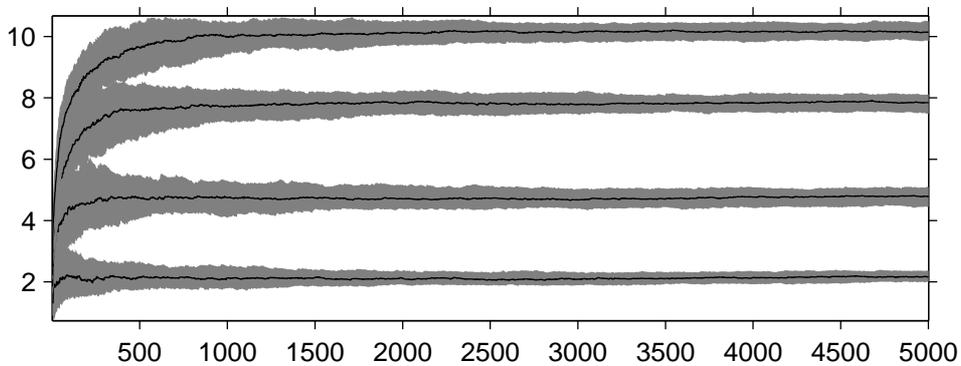}
\caption{Convergence of the log-temperatures in the mixture example.
The 10\%--90\% quantiles over 100 runs of the algorithm are drawn in grey
and the black line shows the median.}
\label{fig:baragatti-temperatures}
\end{figure}

We computed estimates of the means and the squares
of the coordinates with $N=5000$ iterations of which $2500$ burn-in,
and show the mean and standard deviation (in parenthesis) over
100 runs of our parallel tempering algorithm in \autoref{tab:baragatti}.
We considered three different random-walk adaptations:
the default adaptation in
\eqref{eq:am-cov}--\eqref{eq:scaling-adaptation} (Cov),
with common mean and covariance estimators as defined in
\eqref{eq:global-mean-def}--\eqref{eq:global-cov-def}
(Cov(g)) and the RAM update defined in \eqref{eq:ram-def}.
\autoref{tab:baragatti} shows the results in the same form as
\cite[Table 3]{baragatti-grimaud-pommeret} to allow easy comparison.
When comparing with \cite{baragatti-grimaud-pommeret},
our results show smaller deviation than the unadapted parallel
tempering, but bigger deviation than their samplers including also
equi-energy moves. We remark that we did not adjust our algorithm at
all for the example, but let the adaptation take care of that.
There are no significant differences between the
random-walk adaptation algorithms.
%%%%%%%%%%%%%%%%%%%%
\begin{table}
\caption{The test of \cite{baragatti-grimaud-pommeret}, Table 3 case
(A) with $L=5$ temperature levels, 5000 iterations and 2500 burn-in.}
\label{tab:baragatti}
\begin{tabular}{lcccc}
\toprule
           &  $\E[X_1]$ & $\E[X_2]$ & $\E[X_1^2]$ & $\E[X_2^2]$ \\
True value & 4.478 & 4.905 & 25.605 & 33.920 \\
\midrule
Cov & 4.469 (0.588) & 4.950 (0.813) & 25.329 (5.639) & 34.209
(8.106) \\
Cov(g) & 4.389 (0.537) & 4.803 (0.692) & 24.677 (5.411) & 32.865
(6.660) \\
RAM & 4.530 (0.524) & 4.946 (0.811) & 26.111 (5.308) & 34.331 (8.292) \\
\bottomrule
\end{tabular}
\end{table}

When looking the simulated points in \autoref{fig:points-baragatti},
it is clear that three temperature levels is enough to allow good
mixing in the above example. We repeated the example with $L=3$ energy
levels, and increased the number of iterations to $N=8333$ in order to
have a comparable computational cost.
The summary of the results in
\autoref{tab:baragatti2} indicates increased 
accuracy than with $L=5$ levels,
and the accuracy comes close to the results reported in
\cite{baragatti-grimaud-pommeret}
for samplers with equi-energy moves.
%%%%%%%%%%%%%%%%%%%%
\begin{table}
\caption{The test of \cite{baragatti-grimaud-pommeret}, Table 3 case
(A) with $L=3$ temperature levels, 8333 iterations and 4167 burn-in.}
\label{tab:baragatti2}
\begin{tabular}{lcccc}
\toprule
           &  $\E[X_1]$ & $\E[X_2]$ & $\E[X_1^2]$ & $\E[X_2^2]$ \\
True value & 4.478 & 4.905 & 25.605 & 33.920 \\
\midrule
Cov & 4.480 (0.416) & 4.957 (0.571) & 25.542 (4.164) & 34.420
(5.669) \\
Cov(g) & 4.488 (0.422) & 4.884 (0.551) & 25.719 (4.190) & 33.520
(5.476) \\
RAM & 4.490 (0.407) & 4.881 (0.541) & 25.667 (4.281) & 33.622 (5.631) \\
\bottomrule
\end{tabular}
\end{table}

We considered also a more difficult modification of the mixture example
above. We decreased the variance of the mixture components to
$\sigma^2=0.001$ and increased the dimension to $d=8$. The mixture
means of the added coordinates were all zero. We ran our adaptive
parallel tempering algorithm in this case with $L=8$ temperature
levels;
\autoref{tab:baragatti-harder} summarises the results
with different number of iterations. In all the cases, the first half
of the iterations were burn-in. In this scenario, the different
random-walk adaptation algorithms have slightly different behaviour.
Particularly, the common mean and covariance estimates (Cov(g))
seem to improve over the separate covariances (Cov). The RAM approach
seems to provide the most accurate results. However, we believe that
this is probably due to the special properties of the example, namely
the fact that all the mixture components have
a common covariance, and the RAM converges close to this in the first
level; see the discussion in \cite{vihola-ram}.

%%%%%%%%%%%%%%%%%%%%
\begin{table}
\caption{The root mean square errors in the
modified mixture example with $\sigma^2=0.001$,
$d=8$ and $L=8$.}
\label{tab:baragatti-harder}
\begin{tabular}{llccc}
\toprule
$N$ & Est. &  Cov & Cov(g) & RAM \\
\midrule
\multirow{2}{*}{$10k$} & $\E[X]$ &
3.080 & 2.245 & 1.660 \\
& $\E[|X|^2]$ &
27.577 & 20.426 & 16.428 \\
\multirow{2}{*}{$20k$} & $\E[X]$ &
1.788 & 1.580 & 1.429 \\
& $\E[|X|^2]$ &
18.577 & 15.712 & 14.475 \\
\multirow{2}{*}{$40k$} & $\E[X]$ &
1.439 & 1.267 & 0.952 \\
& $\E[|X|^2]$ &
15.471 & 12.769 & 9.364 \\
\multirow{2}{*}{$80k$} & $\E[X]$ &
1.257 & 0.975 & 0.698 \\
& $\E[|X|^2]$ &
13.017 & 9.414 & 6.981 \\
\multirow{2}{*}{$160k$} & $\E[X]$ &
1.096 & 0.680 & 0.508  \\    
& $\E[|X|^2]$ & 
11.093 & 7.038 & 5.122  \\
\bottomrule
\end{tabular}
\end{table}

%%%%%%%%%%%%%%%%%%%%%%%%%%%%%%%%%%%%%%%%
\subsection{Spatial imaging}
\label{sec:imaging}
As another example,
we consider identifying ice floes from polar satellite images as
described by Banfield and Raftery~\cite{Banfield1992}. The image under
consideration is a 200 by 200 gray-scale satellite image, and
we focus on the same 40 by 40 subregion as in \cite{LukeBornn2011}.
The goal is to
identify the presence and position of polar ice floes. Towards this
goal, Higdon \cite{higdon} employs a Bayesian model with an Ising
model prior and following posterior distribution on
$\{0,1\}^{40\times40}$,
\[
\log(\pi(x|y))\propto\sum_{1\leq i,j,\leq 40}\alpha\1\{x_{i,j}=y_{i,j}\} +\sum_{(i,j)\sim(i',j')}\beta\1\{x_{i,j}=x_{i',j'}\}\eqsp,
\]
where neighbourhood relation ($\sim$) is defined as vertical, horizontal and
diagonal adjacencies of each pixel. Posterior distribution favours $x$
which are similar to original image $y$ (first term) and for which
the neighbouring pixels are equal (second term).

In \cite{higdon} and
\cite{LukeBornn2011}, the authors observed that standard MCMC algorithms
which propose to flip one pixel at a time fail to explore the modes of
the posterior. There are, however, some advantages of using such an
algorithm, given we can overcome the difficulty in mixing
between the modes. Specifically, in order to compute (the log-difference of)
the unnormalised density values, we need only to explore
the neighbourhoods of the pixels that have changed.
Therefore, the proposal with one pixel flip at a time has a low
computational cost. Moreover, such an algorithm is easy to implement.

We used our parallel tempering algorithm with the above mentioned proposal
with $L=10$ temperature levels to simulate the posterior of this model
with parameters $\alpha=1$ and $\beta=0.7$.
We ran 100 replications of $N=100000$ iterations of the algorithm.
Obtained result are shown in \autoref{fig:ising}
is similar to \cite{higdon} and
\cite{LukeBornn2011}. We emphasize again that our algorithm provided
good results without any prior tuning.
\begin{figure}
\begin{center}
\includegraphics[width=\textwidth]{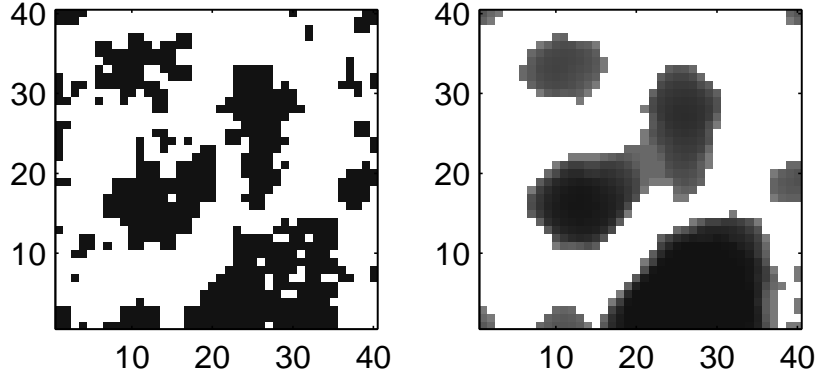}
\caption{Spatial model example: original image (left),
posterior estimate based on 100 replications of adaptive parallel
tempering (right)}
\label{fig:ising}
\end{center}
\end{figure}

%%%%%%%%%%%%%%%%%%%%%%%%%%%%%%%%%%%%%%%%%%%%%%%%%%%%%%%%%%%%%%%%%%%%%%%%
\section{Proofs}
\label{sec:proofs}
%%%%%%%%%%%%%%%%%%%%%%%%%%%%%%%%%%%%%%%%
\subsection{Proof of
\autoref{theo:geometric-ergodicity-parallel-tempering}} %{{{
The proof follows by arguments in the literature that guarantee a 
geometric ergodicity for the individual random-walk Metropolis
kernels, and by observing that the swap kernel is invariant under
permutation-invariant functions.

We start with an easy lemma showing that a drift in cooler chain
implies a drift in the higher-temperature chain.

%%%%%%%%%%%%%%%%%%%%
\begin{lemma}\label{lem:PWinequality} %{{{
Consider the drift function $W \eqdef c\pi^{-\kappa}$ for some positive constants $\kappa$ and $c$. Then, for any $\covmat \in \positivematrixset{d}$,
\[
\beta \leq \beta' \Longrightarrow M_{(\covmat,\beta^\prime)}W(x)\leq
M_{(\covmat,\beta)} W(x) \eqsp, \quad \text{for all $x \in \Xset$
\eqsp.}
\]
\end{lemma}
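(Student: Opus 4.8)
The plan is to compare $M_{(\covmat,\beta)}W(x)$ and $M_{(\covmat,\beta')}W(x)$ directly from the definition \eqref{eq:definitionMn}, exploiting the fact that both kernels share the same symmetric increment density $q_{\covmat}$. Writing $W = c\pi^{-\kappa}$, for a random-walk Metropolis kernel targeting $\pi^\beta$ we have
\[
M_{(\covmat,\beta)}W(x)
= W(x) + \int \bigl(W(y)-W(x)\bigr)\,\bigl(1\wedge \tfrac{\pi^\beta(y)}{\pi^\beta(x)}\bigr)\, q_{\covmat}(y-x)\,\rmd y.
\]
So it suffices to show that the integral is nonincreasing in $\beta$ (on $\beta\le\beta'$), i.e. that
\[
\int \bigl(W(y)-W(x)\bigr)\,\Bigl[\bigl(1\wedge \tfrac{\pi^{\beta'}(y)}{\pi^{\beta'}(x)}\bigr) - \bigl(1\wedge \tfrac{\pi^{\beta}(y)}{\pi^{\beta}(x)}\bigr)\Bigr]\, q_{\covmat}(y-x)\,\rmd y \le 0.
\]

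The key observation is a sign-matching between the two factors in the integrand, pointwise in $y$. Fix $x$ and split the domain according to whether $\pi(y)\ge\pi(x)$ or $\pi(y)<\pi(x)$. If $\pi(y)\ge\pi(x)$, then $W(y)-W(x)=c(\pi(y)^{-\kappa}-\pi(x)^{-\kappa})\le 0$, while the acceptance ratio is $1$ for both $\beta$ and $\beta'$, so the bracketed difference is $0$; the contribution vanishes. If $\pi(y)<\pi(x)$, then $W(y)-W(x)\ge 0$, and the acceptance ratios are $(\pi(y)/\pi(x))^{\beta}$ and $(\pi(y)/\pi(x))^{\beta'}$ with base in $(0,1)$; since $\beta\le\beta'$, raising a number in $(0,1)$ to a larger power makes it smaller, so $(\pi(y)/\pi(x))^{\beta'}\le(\pi(y)/\pi(x))^{\beta}$, i.e. the bracketed difference is $\le 0$. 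Hence on this region the integrand is (nonnegative)$\times$(nonpositive)$\le 0$. Combining the two regions, the whole integral is $\le 0$, which gives the claim.

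I would include a brief word on why everything is well-defined and finite: $\pi$ is bounded and bounded away from zero on compacts by \A{a:super-exp}, $q_{\covmat}$ is a Gaussian density, and the $W$-values enter only through differences weighted by the (bounded) acceptance probability, so the integrals converge (this is anyway implicit in the use of $W$ as a drift function). The only point requiring a moment's care is the measurability/integrability bookkeeping and the handling of the trivial region $\pi(y)=\pi(x)$, but there is no real obstacle here — the argument is entirely elementary, the ``hard part'' being merely to notice the sign cancellation that makes the comparison work termwise.
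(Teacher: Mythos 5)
Your proof is correct and follows essentially the same route as the paper's: both split the integral over $\{\pi(y)\ge\pi(x)\}$ (where the acceptance probability is $1$ and nothing depends on $\beta$) and $\{\pi(y)<\pi(x)\}$ (where monotonicity of $\beta\mapsto a^\beta$ for $a\in(0,1)$ gives the comparison). The paper phrases the second region via the explicit expression $1-a^\beta+a^{\beta-\kappa}$ for $M_{(\covmat,\beta)}W(x)/W(x)$, while you phrase it as a sign-matching in the difference $\int (W(y)-W(x))(\alpha_{\beta'}-\alpha_\beta)q_\covmat$, but this is the same elementary calculation organized slightly differently.
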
 %}}}
\begin{proof} %{{{
We write
\begin{multline*}\frac{M_{(\covmat,\beta)} W(x)}{W(x)}=\int_{\{y\;:\; \pi(y)\geq\pi(x)\}} \left( \frac{\pi(x)}{\pi(y)} \right)^\kappa \; q_\covmat(x-y)\ud y \\+\int_{\{y\;:\; \pi(y)<\pi(x)\}}\left[1-\left(\frac{\pi(y)}{\pi(x)}\right)^\beta+\left(\frac{\pi(y)}{\pi(x)}\right)^{\beta-\kappa}\right]q_\covmat(x-y)\ud y
\end{multline*}
First term is independent on $\beta$, since $\beta\mapsto 1- a^\beta+a^{\beta-\kappa}$ for $a\in[0,1]$ is non-increasing the second term is also non-increasing with respect to $\beta$.
\end{proof} %}}}

To control the ergodicity of each individual random-walk Metropolis
sampler, it is required to have a control on the minorisation and
drift constants for the kernels $M_{(\covmat,\beta)}$. The following
proposition provides such explicit control.
%%%%%%%%%%%%%%%%%%%%
\begin{lemma}\label{lem:small-set-RWM} %{{{
Assume \A{a:super-exp}. Let $\varepsilon > 0$ and $\beta > 0$.
There exist $\lambda_{\varepsilon,\beta} \in \coint{0,1}$ and $b_{\varepsilon,\beta} < \infty$ such that for any $\covmat \in \positivematrixset[\varepsilon]{d}$, we get
\begin{equation}
\label{eq:drift-condition}
M_{(\covmat,\beta)} V_\beta \leq \lambda_{\varepsilon,\beta} V_\beta + b_{\varepsilon,\beta}  \eqsp,
\end{equation}
where $V_\beta \eqdef (\pi/\supnorm{\pi})^{-\beta/2}$, where $\supnorm{\pi}= \sup_{x \in \Xset} \pi(x)$.
\end{lemma}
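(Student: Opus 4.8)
The plan is to reduce the statement to the classical geometric drift bound for the random‑walk Metropolis kernel with a super‑exponential target (Jarner--Hansen \cite{jarner-hansen}, Andrieu--Moulines \cite{andrieu-moulines}), the only additional point being uniformity of the constants over $\covmat\in\positivematrixset[\varepsilon]{d}$. First observe that $\pi^\beta$ inherits \A{a:super-exp} from $\pi$ for every $\beta>0$: it is bounded, bounded away from zero on compact sets and differentiable; one has $\frac{x}{|x|}\cdot\nabla\log\pi^\beta(x)=\beta\,\frac{x}{|x|}\cdot\nabla\log\pi(x)\to-\infty$ since $\beta>0$; and the normalised gradient satisfies $\frac{\nabla\pi^\beta(x)}{|\nabla\pi^\beta(x)|}=\frac{\nabla\pi(x)}{|\nabla\pi(x)|}$, so \eqref{eq:reg-contour} holds unchanged. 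Since $V_\beta=\supnorm{\pi}^{\beta/2}\,(\pi^\beta)^{-1/2}$ is, up to an inessential constant, exactly the Jarner--Hansen drift function $(\pi^\beta)^{-1/2}$ for the target $\pi^\beta$, existence of $\lambda_{\varepsilon,\beta}\in\coint{0,1}$ and $b_{\varepsilon,\beta}<\infty$ for each fixed $\covmat$ is their geometric ergodicity result; it remains to make the constants independent of $\covmat\in\positivematrixset[\varepsilon]{d}$.

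For the behaviour at infinity I would start from the identity obtained in the proof of \autoref{lem:PWinequality} with $\kappa=\beta/2$,
\[
\frac{M_{(\covmat,\beta)}V_\beta(x)}{V_\beta(x)}
=\int_{\{\pi(y)\ge\pi(x)\}}\Bigl(\frac{\pi(x)}{\pi(y)}\Bigr)^{\beta/2} q_\covmat(y-x)\,\rmd y
+\int_{\{\pi(y)<\pi(x)\}}\Bigl[1-\Bigl(\frac{\pi(y)}{\pi(x)}\Bigr)^{\beta}+\Bigl(\frac{\pi(y)}{\pi(x)}\Bigr)^{\beta/2}\Bigr] q_\covmat(y-x)\,\rmd y ,
\]
whose second integrand always lies in $[1,5/4]$, the excess over $1$ being $(\pi(y)/\pi(x))^{\beta/2}-(\pi(y)/\pi(x))^{\beta}\le 1/4$. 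By \eqref{eq:super-exp}, $|\nabla\log\pi(x)|\to\infty$, so for $|x|$ large the ``transition shell'' $\mathcal S(x)\eqdef\{y:|\log\pi(y)-\log\pi(x)|\le C,\ |y-x|\le R\}$ has width $O(1/|\nabla\log\pi(x)|)$ and hence $q_\covmat$‑mass $\to0$ \emph{uniformly} over $\covmat\in\positivematrixset[\varepsilon]{d}$; this uniformity uses only that the constraints $\varepsilon\le\varrho_{\min}(\covmat)\le\varrho_{\max}(\covmat)\le\varepsilon^{-1}$ confine the Gaussian densities $q_\covmat$ to a uniformly bounded ($q_\covmat\le(2\pi)^{-d/2}\varepsilon^{-d/2}$) and uniformly tight family. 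Off $\mathcal S(x)$ both the first integrand and the excess of the second over $1$ are $o(1)$ (for $C$, then $|x|$, large), so the two integrals equal, respectively, $o(1)$ and $q_\covmat(\{\pi(y)<\pi(x)\}-x)+o(1)$ uniformly in $\covmat$. Finally, \eqref{eq:reg-contour} forces the acceptance region $\{y:\pi(y)\ge\pi(x)\}$ to contain, asymptotically, a half‑space through $x$, so $\liminf_{|x|\to\infty}\inf_{\covmat\in\positivematrixset[\varepsilon]{d}}q_\covmat(\{\pi(y)\ge\pi(x)\}-x)\ge q_\ast>0$. Combining, $\limsup_{|x|\to\infty}\sup_{\covmat\in\positivematrixset[\varepsilon]{d}}M_{(\covmat,\beta)}V_\beta(x)/V_\beta(x)\le1-q_\ast<1$, so there exist $r_0<\infty$ and $\lambda_{\varepsilon,\beta}\in(0,1)$, depending only on $\varepsilon,\beta$ (and $\pi$), with $M_{(\covmat,\beta)}V_\beta(x)\le\lambda_{\varepsilon,\beta}V_\beta(x)$ for all $|x|\ge r_0$ and all $\covmat\in\positivematrixset[\varepsilon]{d}$.

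For $|x|\le r_0$ a crude bound suffices: since $\alpha_\beta(x,y)V_\beta(y)\le\max\{V_\beta(x),\supnorm{\pi}^{\beta}\pi(x)^{-\beta}\}$ for every $y$, integration against $q_\covmat$ gives $M_{(\covmat,\beta)}V_\beta(x)\le2\max\{V_\beta(x),\supnorm{\pi}^{\beta}\pi(x)^{-\beta}\}$, which is bounded on the compact set $\{|x|\le r_0\}$ because $\pi$ is bounded away from zero there (\A{a:super-exp}); denote this bound by $b_{\varepsilon,\beta}<\infty$ (it does not depend on $\covmat$). Since $V_\beta\ge1$ everywhere, combining the two regimes yields $M_{(\covmat,\beta)}V_\beta\le\lambda_{\varepsilon,\beta}V_\beta\charfun{|x|\ge r_0}+b_{\varepsilon,\beta}\charfun{|x|<r_0}\le\lambda_{\varepsilon,\beta}V_\beta+b_{\varepsilon,\beta}$ on all of $\Xset$, which is \eqref{eq:drift-condition}. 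I expect the main obstacle to be the second paragraph, namely the uniform‑in‑$\covmat$ form of the Jarner--Hansen geometry (vanishing $q_\covmat$‑mass of the transition shell, and a probability of proposing ``inwards'' bounded away from zero): this is exactly where \eqref{eq:reg-contour} is needed, to exclude level sets interacting badly with the anisotropy of $q_\covmat$, and the uniformity ultimately rests on $\positivematrixset[\varepsilon]{d}$ being a compact subset of the cone of positive definite matrices.
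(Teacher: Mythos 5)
Your proof is correct and follows essentially the same route as the paper: the paper's proof consists precisely of your first paragraph's observation that $\pi^\beta$ inherits \A{a:super-exp} (so $V_\beta\propto(\pi^\beta)^{-1/2}$ is the standard drift function for the tempered target), followed by a citation of \citet[Proposition~12]{andrieu-moulines}, which already supplies the uniformity of $\lambda_{\varepsilon,\beta}$ and $b_{\varepsilon,\beta}$ over the compact set $\positivematrixset[\varepsilon]{d}$. Your second and third paragraphs simply unfold the internal Jarner--Hansen/Andrieu--Moulines argument behind that citation (correctly, up to the cosmetic point that the acceptance region asymptotically contains a truncated cone of fixed aperture rather than a half-space, which suffices for the uniform lower bound $q_\ast>0$).
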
 %}}}
\begin{proof} %{{{
It is easily seen that if the target distribution is super-exponential
in the tails \A{a:super-exp}, then all the tempered versions
$\pi^\beta/ Z(\beta)$, where the normalising constant
$Z(\beta)$ is defined in \eqref{eq:definition-normalizing-constant},
satisfy \A{a:super-exp} as well.

The result then follows from \citet[Proposition~12]{andrieu-moulines}.
\end{proof} %}}}

%%%%%%%%%%%%%%%%%%%%
\begin{proposition} \label{prop:drift} %{{{
Assume \A{a:super-exp} and let $\epsilon > 0$ and $\beta_0 > 0$. 
Then, there exists $\lambda_{\epsilon,\beta_0}<1$, and $b_{\epsilon,\beta_0}<\infty$
such that, for all $\bcovmat \in \positivematrixset[\epsilon]{d}[L]$ and
$\bbeta \in \mathcal{K}_{\beta_0}$, 
\begin{align}
&\label{eq:la-preuve-pour-bM}
\bM_{(\bcovmat,\bbeta)} \bV_{\beta_0} \leq \lambda_{\epsilon,\beta_0} \bV_{\beta_0}+ b_{\epsilon,\beta_0} \eqsp, \\
&\label{eq:invariance-de-bS}
\bS_{\bbeta} \bV_{\beta_0}= \bV_{\beta_0} \eqsp, \\
&\label{eq:la-preuve-pour-bP}
\bP_{(\bcovmat,\bbeta)} \bV_{\beta_0} \leq \lambda_{\epsilon,\beta_0} \bV_{\beta_0}+ b_{\epsilon,\beta_0} \eqsp,
\end{align}
\end{proposition}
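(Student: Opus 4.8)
The plan is to establish the three inequalities in the stated order, reducing each to results already in hand: the one-dimensional drift bound of \autoref{lem:small-set-RWM}, the monotonicity in the inverse temperature provided by \autoref{lem:PWinequality}, and the permutation-invariance of $\bV_{\beta_0}$ as a function on $\Xset^L$.

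For \eqref{eq:la-preuve-pour-bM}, I would first combine the tensor-product form \eqref{eq:product-kernel} with the additive structure of $\bV_{\beta_0}$ in \eqref{eq:lyapunov-function-junct}: since each factor $M_{(\covmat^{(m)},\beta^{(m)})}$ is a Markov kernel and hence integrates constants to one, the coordinates decouple, giving
\[
\bM_{(\bcovmat,\bbeta)}\bV_{\beta_0}(\bx)=\sum_{\ell=1}^L M_{(\covmat^{(\ell)},\beta^{(\ell)})}V_{\beta_0}(x^{(\ell)})\eqsp.
\]
Now $\bbeta\in\mathcal{K}_{\beta_0}$ forces $\beta^{(\ell)}\ge\beta^{(L)}\ge\beta_0$ for every $\ell$, and $V_{\beta_0}=c\,\pi^{-\kappa}$ with $\kappa=\beta_0/2>0$ and $c=\|\pi\|_\infty^{\beta_0/2}>0$, so \autoref{lem:PWinequality} yields $M_{(\covmat^{(\ell)},\beta^{(\ell)})}V_{\beta_0}\le M_{(\covmat^{(\ell)},\beta_0)}V_{\beta_0}$ pointwise. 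Applying \autoref{lem:small-set-RWM} with $\varepsilon=\epsilon$ and $\beta=\beta_0$ then produces constants $\lambda_{\epsilon,\beta_0}\in\coint{0,1}$ and $b_{\epsilon,\beta_0}<\infty$, valid for every $\covmat^{(\ell)}\in\positivematrixset[\epsilon]{d}$, with $M_{(\covmat^{(\ell)},\beta_0)}V_{\beta_0}\le\lambda_{\epsilon,\beta_0}V_{\beta_0}+b_{\epsilon,\beta_0}$. Summing over $\ell$ and absorbing the factor $L$ into a redefined $b_{\epsilon,\beta_0}$ gives \eqref{eq:la-preuve-pour-bM}, with constants depending only on $\epsilon$, $\beta_0$ and $L$.

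For \eqref{eq:invariance-de-bS} I would argue directly from \eqref{eq:definitionSn}. Each of the $L-1$ terms of $\bS_{\bbeta}$ maps the state either to itself or to the state with coordinates $j$ and $j+1$ transposed; because $\bV_{\beta_0}$ is a sum over \emph{all} coordinates, both of these states have the same $\bV_{\beta_0}$-value, namely $\bV_{\beta_0}(\bx)$. Hence for each $j$ the accept contribution (weight $\varpi^{(j)}_{\bbeta}$) and the reject contribution (weight $1-\varpi^{(j)}_{\bbeta}$) combine to $\frac{1}{L-1}\bV_{\beta_0}(\bx)$, and summing over $j$ gives $\bS_{\bbeta}\bV_{\beta_0}=\bV_{\beta_0}$ exactly.

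Finally, \eqref{eq:la-preuve-pour-bP} follows by composing the two bounds: with $\bP_{(\bcovmat,\bbeta)}=\bS_{\bbeta}\bM_{(\bcovmat,\bbeta)}$ and $\bS_{\bbeta}$ a monotone, constant-preserving Markov kernel,
\[
\bP_{(\bcovmat,\bbeta)}\bV_{\beta_0}=\bS_{\bbeta}\bigl(\bM_{(\bcovmat,\bbeta)}\bV_{\beta_0}\bigr)\le\bS_{\bbeta}\bigl(\lambda_{\epsilon,\beta_0}\bV_{\beta_0}+b_{\epsilon,\beta_0}\bigr)=\lambda_{\epsilon,\beta_0}\,\bV_{\beta_0}+b_{\epsilon,\beta_0}\eqsp,
\]
using \eqref{eq:la-preuve-pour-bM} for the inequality and \eqref{eq:invariance-de-bS} for the last equality. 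None of these steps is genuinely hard; the one point that needs care is the \emph{uniformity} of $(\lambda_{\epsilon,\beta_0},b_{\epsilon,\beta_0})$ over $\bcovmat$ and $\bbeta$. Uniformity over $\bcovmat\in\positivematrixset[\epsilon]{d}[L]$ comes from \autoref{lem:small-set-RWM}, while uniformity over $\bbeta\in\mathcal{K}_{\beta_0}$ is exactly what \autoref{lem:PWinequality} buys us, by allowing each level-$\ell$ contribution to be dominated by the one at the coolest admissible inverse temperature $\beta_0$.
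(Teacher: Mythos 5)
Your proof is correct and follows essentially the same route as the paper: decouple the tensor-product kernel over the additive drift function, use \autoref{lem:PWinequality} to dominate each level by the coolest inverse temperature $\beta_0$, apply \autoref{lem:small-set-RWM} uniformly over $\positivematrixset[\epsilon]{d}$, and exploit the permutation-invariance of $\bV_{\beta_0}$ to show $\bS_{\bbeta}$ leaves it fixed before composing the two bounds. No gaps.
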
 %}}}
\begin{proof} %{{{
By \autoref{lem:PWinequality}, since $\bbeta \in \mathcal{K}_{\beta_0}$, we get
\begin{align}
\label{eq:MV_inequality}
\bM_{(\bcovmat,\bbeta)}\bV_{\beta_0}(\chunk{x}{1}{L})
&=\sum_{\ell=1}^L
M_{(\covmat^{(\ell)},\beta^{(\ell)})}V_{\beta_0}(x^{(\ell)}) \\
&\leq \sum_{\ell=1}^L
M_{(\covmat^{(\ell)},\beta_0)}V_{\beta_0}(x^{(\ell)}) \eqsp.\nonumber
\end{align}
Then, by \autoref{lem:small-set-RWM}, since $\bcovmat \in \positivematrixset[\epsilon]{d}$, it holds
\[
\sum_{\ell=1}^L M_{(\covmat^{(\ell)},\beta_0)}V_{\beta_0}(x^{(\ell)}) \leq \lambda_{\epsilon,\beta_0}
\sum_{\ell=1}^L V_{\beta_0}(x^{(\ell)}) + Lb_{\epsilon,\beta_0}  \eqsp.
\]
Thanks to the definition of the swapping kernel
\eqref{eq:definitionSn}-%\eqref{eq:probabilite-swap}-
-\eqref{eq:definitionJ},
for any positive measurable function $F: \Xset^L \to \rset_+$ which is
invariant by permutation\footnote{$F(\chunk{x}{1}{L})=
F(\chunk{x}{\sigma(1)}{\sigma(L)})$ for any $(\chunk{x}{1}{L}) \in
\Xset^L$ and any permutation $\sigma$ over the set $\{1,\dots,L\}$},
we get $\bS_{\bbeta} F(\chunk{x}{1}{L})= F(\chunk{x}{1}{L})$. Since
the drift function $\bV_{\beta_0}$ defined in
\eqref{eq:lyapunov-function-junct} is invariant by permutation we
obtain
\begin{align*}
\bS_{\bbeta} \left[\lambda_{\epsilon,\beta_0}\bV_{\beta_0}(\chunk{x}{1}{L})+Lb_{\epsilon,\beta_0}\right] &=\lambda_{\epsilon,\beta_0}\bS_{\bbeta}\bV_{\beta_0}(\chunk{x}{1}{L})+Lb_{\epsilon,\beta_0}
\\
&=\lambda_{\epsilon,\beta_0}\bV_{\beta_0}(\chunk{x}{1}{L})+Lb_{\epsilon,\beta_0} \eqsp.
\qedhere
\end{align*}
\end{proof} %}}}

%%%%%%%%%%%%%%%%%%%%
\begin{proposition} \label{prop:small_set} %{{{
Assume \A{a:super-exp}. Let $\epsilon > 0$,
$\beta_0 > 0$ and $r>1$, and consider the level set
$\set{C}_r\eqdef\{\bx\in\Xset^L\eqsp:\eqsp \bV_{\beta_0}(\bx)\leq
r\}$. There exists a constant $\delta_{r,\epsilon,\beta_0} > 0$ such
that for all $\bcovmat \in
\positivematrixset[\epsilon]{d}$ and $\bbeta \in
\mathcal{K}_{\beta_0}$, the set $\set{C}_r$ is a
$(1,\delta_{r,\epsilon,\beta_0},\nu_r)$-small set for
$\bP_{(\bcovmat,\bbeta)}$, that is, 
\begin{equation}
\label{eq:smallset}
\bP_{(\bcovmat,\bbeta)}(\bx,\cdot)\geq \delta_{r,\epsilon,\beta_0}
\nu_{r,\epsilon,\beta_0}(\cdot)\quad \bx\in \set{C}_r,
\end{equation}
where $\nu_{r}(\cdot)= \lleb(\cdot \cap \set{C}_r)/\lleb(\set{C}_r)$
is a probability measure on $\set{C}_r$ and $\lleb$ stands for the
Lebesgue measure.
\end{proposition}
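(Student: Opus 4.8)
The plan is to transfer the entire minorisation onto the absolutely continuous part of the random-walk kernel $\bM_{(\bcovmat,\bbeta)}$, exploiting the fact that the swap kernel $\bS_{\bbeta}$ only permutes coordinates and hence cannot move the chain out of a permutation-invariant compact set. In particular I will never need a lower bound on the atomic (rejection) part of $\bS_{\bbeta}$.

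First I would confine $\set{C}_r$ to a compact product set depending only on $r$ and $\beta_0$. Since $V_{\beta_0}(x)=(\pi(x)/\supnorm{\pi})^{-\beta_0/2}\ge 1$ for every $x$, the inequality $\bV_{\beta_0}(\bx)\le r$ forces $\pi(x^{(\ell)})\ge\supnorm{\pi}\,r^{-2/\beta_0}$ for every $\ell$. Under \A{a:super-exp} the density $\pi$ is continuous and $\pi(x)\to 0$ as $|x|\to\infty$ (the condition \eqref{eq:super-exp} drives $\log\pi$ to $-\infty$ along every ray), so $\set{K}_r\defeq\{x\in\Xset:\pi(x)\ge\supnorm{\pi}\,r^{-2/\beta_0}\}$ is compact and $\set{C}_r\subset\set{K}_r^L$. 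Since $\set{K}_r^L$ is invariant under permutations of the $L$ coordinates and, by \eqref{eq:definitionSn}--\eqref{eq:definitionJ}, $\bS_{\bbeta}(\bx,\cdot)$ is carried by $\bx$ together with its adjacent transpositions, we get $\bS_{\bbeta}(\bx,\set{K}_r^L)=1$ for every $\bx\in\set{C}_r$ and every $\bbeta$.

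Next I would bound the continuous part of $\bM_{(\bcovmat,\bbeta)}$ from below, uniformly in the parameters. For $z,y\in\set{K}_r$ one has $\pi(y)/\pi(z)\ge r^{-2/\beta_0}$, and since $\beta^{(\ell)}\le 1$ and $r>1$ this gives $\alpha_{\beta^{(\ell)}}(z,y)=1\wedge(\pi(y)/\pi(z))^{\beta^{(\ell)}}\ge r^{-2/\beta_0}$. For $\covmat\in\positivematrixset[\epsilon]{d}$ and $|y-z|\le\mathrm{diam}(\set{K}_r)=:d_r$, using $\varrho_{\max}(\covmat^{-1})\le\epsilon^{-1}$ and $\det\covmat\le\epsilon^{-d}$,
\[
q_{\covmat}(y-z)\ \ge\ (2\pi)^{-d/2}\,\epsilon^{d/2}\exp\!\left(-\tfrac{1}{2\epsilon}d_r^2\right)\ =:\ c_{r,\epsilon}\ >\ 0\eqsp.
\]
By \eqref{eq:definitionMn} the absolutely continuous component of $M_{(\covmat,\beta)}(z,\cdot)$ has density $y\mapsto\alpha_{\beta}(z,y)q_{\covmat}(y-z)$, so the product structure \eqref{eq:product-kernel} gives, for every $z\in\set{K}_r^L$ and every measurable $A\subset\set{K}_r^L$,
\[
\bM_{(\bcovmat,\bbeta)}(z,A)\ \ge\ \int_A\prod_{\ell=1}^L\alpha_{\beta^{(\ell)}}\!\left(z^{(\ell)},y^{(\ell)}\right)q_{\covmat^{(\ell)}}\!\left(y^{(\ell)}-z^{(\ell)}\right)\lleb(\rmd\by)\ \ge\ \left(r^{-2/\beta_0}c_{r,\epsilon}\right)^{L}\lleb(A)\eqsp.
\]

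Finally I would assemble the estimate: for $\bx\in\set{C}_r$ and any measurable $B$, since $\bS_{\bbeta}(\bx,\cdot)$ is a probability measure supported in $\set{K}_r^L$,
\[
\bP_{(\bcovmat,\bbeta)}(\bx,B)=\int\bS_{\bbeta}(\bx,\rmd z)\,\bM_{(\bcovmat,\bbeta)}(z,B)\ \ge\ \left(r^{-2/\beta_0}c_{r,\epsilon}\right)^{L}\lleb(B\cap\set{K}_r^L)\ \ge\ \left(r^{-2/\beta_0}c_{r,\epsilon}\right)^{L}\lleb(B\cap\set{C}_r)\eqsp,
\]
which is \eqref{eq:smallset} with $\delta_{r,\epsilon,\beta_0}=\left(r^{-2/\beta_0}c_{r,\epsilon}\right)^{L}\lleb(\set{C}_r)$ and $\nu_{r}=\lleb(\,\cdot\,\cap\set{C}_r)/\lleb(\set{C}_r)$; here $\lleb(\set{C}_r)<\infty$ because $\set{C}_r\subset\set{K}_r^L$, and $\lleb(\set{C}_r)>0$ for all sufficiently large $r$, which is the only regime used in the drift-and-small-set argument. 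The one step that needs genuine care is the uniformity of this lower bound on the continuous part of $\bM$: one has to check that $c_{r,\epsilon}$ and the acceptance-ratio bound $r^{-2/\beta_0}$ do not depend on $\bcovmat\in\positivematrixset[\epsilon]{d}$ or $\bbeta\in\mathcal{K}_{\beta_0}$, which comes down to the compactness of $\positivematrixset[\epsilon]{d}$ and of $\set{K}_r$ together with $\beta^{(\ell)}\le 1$; everything else is elementary.
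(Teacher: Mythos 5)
Your proof is correct and follows essentially the same route as the paper: the swap kernel only permutes coordinates of a permutation-invariant set, so the whole minorisation is carried by the absolutely continuous part of the product Metropolis kernel, which is bounded below on a compact set using $\bcovmat\in\positivematrixset[\epsilon]{d}$ and $\beta^{(\ell)}\le 1$. The only cosmetic differences are that you pass through the slightly larger set $\set{K}_r^L\supset\set{C}_r$ and that you spell out why the level set is compact (namely $\pi(x)\to 0$ as $|x|\to\infty$ under \A{a:super-exp}) and why $\lleb(\set{C}_r)\in(0,\infty)$, points the paper leaves implicit.
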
 %}}}
\begin{proof} %{{{
It is easy to see that the set $\set{C}_r$ is absorbing for $\bS_{\bbeta}$
because $\bS_{\bbeta} V_{\bbeta}(x)=
V_{\bbeta}(x)$ as observed in the proof of
\autoref{prop:small_set}, implying $\bS_{\bbeta}(x,\set{C}_r)=1$ for all
$x\in \set{C}_r$. Hence for $x\in \set{C}_r$
\begin{equation}
\label{eq:small_set_eq_1}
\bP_{(\bcovmat,\bbeta)}(\bx,\set{A})
\geq\int_{\set{C}_r\cap \set{A}}\prod_{\ell=1}^L \left(1\wedge
\frac{\pi(y^{(\ell)})}{\pi(x^{(\ell)})}\right)^{\beta^{(\ell)}}q_{\Sigma^{(\ell)}}(y^{(\ell)}-x^{(\ell)})
\rmd y^{(1:L)}\eqsp,
\end{equation}
where $q_{\covmat}$ is the multivariate Gaussian density with zero
mean and covariance $\covmat$.
Since the set $\set{C}_r$ is compact and $\bcovmat \in
\positivematrixset[\epsilon]{\beta_0}[L]$, there exists a
constant $C_{r,\epsilon,\beta_0}>0$ such that for any
$\ell=1,\dots,L$
\[
\inf_{(x,y) \in \set{C}_r \times \set{C}_r} q^{(\ell)}(y^{(\ell)}-x^{(\ell)}) \geq C_{r,\epsilon,\beta_0} \eqsp.
\]
Therefore, by \eqref{eq:small_set_eq_1} 
and since $\beta^{(\ell)}\in(0,1]$, 
we obtain for $x\in \set{C}_r$
\begin{align*}
\bP_{(\bcovmat,\bbeta)}(\bx,\set{A})
&\geq C_{r,\epsilon,\beta_0}^L \int_{\set{C}_r\cap \set{A}}\prod_{\ell=1}^L \left(1\wedge \frac{\pi(y^{(\ell)})}{\pi(x^{(\ell)})}\right)
\rmd y^{(1:L)}\eqsp.
\end{align*}
If $y= (\chunk{y}{1}{L})\in \set{C}_r$, we get $(\pi(y^{(\ell)})/\supnorm{\pi})^{-\beta_0} \leq r/L$ for all $\ell \in \{1,\dots,L\}$,
which implies that $(L/r)^{2/\beta_0} \supnorm{\pi} \leq \pi(y)$. Hence, for all $(x,y) \in C_r \times C_r$,
$\pi(y)/\pi(x) \leq (L/r)^{2/\beta_0}$, showing that
\begin{equation*}
\bP_{(\bcovmat,\bbeta)}(\bx,\set{A}) \geq C_{r,\epsilon,\beta_0}^L \left[(1 \wedge (L/r))^{2/\beta_0}\right]^L \lleb(C_r) \lleb_{C_r}(\set{C}_r\cap \set{A})
\eqsp,
\end{equation*}
where $\lleb_{\set{C}_r}(\cdot)= \lleb(\set{C}_r \cap \cdot)/ \lleb(\set{C}_r)$.
\end{proof} %}}}

%%%%%%%%%%%%%%%%%%%%
\begin{proof}[Proof of
\autoref{theo:geometric-ergodicity-parallel-tempering}] %{{{
Choose a sufficiently large $r>1$ so that there exists a
$\tilde{\lambda}_{\epsilon,\beta_0}<1$ such that
\[
\bP_{(\bcovmat,\bbeta)} \bV_{\beta_0}(x) 
\leq \lambda_{\epsilon,\beta_0} \bV_{\beta_0}(x)
+ \charfun{x\notin C} b_{\epsilon,\beta_0}
\]
by \autoref{prop:drift}, 
where $\set{C}_r$ is defined in
\autoref{prop:small_set}.
This drift inequality, with the minorisation inequality in 
\autoref{prop:small_set} imply
$V$-uniform ergodicity 
\eqref{eq:geom-erg}
with constants depending only on
$\tilde{\lambda}_{\epsilon,\beta_0}<1$, 
$b_{\epsilon,\beta_0}$ and $ \delta_{r,\epsilon,\beta_0}$
\cite[e.g.][]{meyn-tweedie-computable}.
\end{proof}
%}}}

%}}}

%%%%%%%%%%%%%%%%%%%%%%%%%%%%%%%%%%%%%%%%
\subsection{Proof of \autoref{theo:continuity-P}} %{{{
We preface the proof of this Theorem by several technical lemmas.

%%%%%%%%%%%%%%%%%%%%
\begin{lemma}\label{lem:sup-z-beta-beta} %{{{
    For all $ (\beta,\beta') \in \ooint{0,1}^2 $ we have
\[ 
    \sup_{z \in
      \ccint{0,1}}|z^\beta-z^{\beta'}|\leq\frac{1}{\max\{\beta,
    \beta'\}}|\beta'-\beta|\eqsp. 
\]
\end{lemma}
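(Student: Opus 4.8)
The plan is to use the symmetry of $|z^{\beta}-z^{\beta'}|$ under interchanging $\beta$ and $\beta'$ to assume without loss of generality that $\beta\le\beta'$; the case $\beta=\beta'$ is trivial, so I may assume $\beta<\beta'$, whence $\max\{\beta,\beta'\}=\beta'$. Since $z^{\beta'}\le z^{\beta}$ for every $z\in\ccint{0,1}$, the absolute value can be dropped and the claim reduces to
\[
\sup_{z\in\ccint{0,1}}\bigl(z^{\beta}-z^{\beta'}\bigr)\le\frac{\beta'-\beta}{\beta'}\eqsp.
\]

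I would then make the substitution $z=w^{1/\beta'}$, which is a bijection of $\ccint{0,1}$ onto itself, and set $a\defeq\beta/\beta'\in\ooint{0,1}$. Then $z^{\beta}=w^{a}$ and $z^{\beta'}=w$, so the inequality becomes the one-variable statement $\sup_{w\in\ccint{0,1}}(w^{a}-w)\le 1-a$. To prove the latter, note that $g(w)\defeq w^{a}-w$ vanishes at the endpoints $w=0$ and $w=1$, and that $g'(w)=aw^{a-1}-1$ has, since $0<a<1$, a unique zero at $w^{\star}=a^{1/(1-a)}\in\ooint{0,1}$, which is therefore the global maximiser of $g$ on $\ccint{0,1}$. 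Evaluating,
\[
g(w^{\star})=(w^{\star})^{a}-w^{\star}=a^{a/(1-a)}-a^{1/(1-a)}=(1-a)\,a^{a/(1-a)}\eqsp,
\]
and since $a\in\ooint{0,1}$ forces $a^{a/(1-a)}\in\ooint{0,1}$, we get $g(w^{\star})\le 1-a$. Undoing the substitution and recalling that $1-a=(\beta'-\beta)/\beta'=|\beta-\beta'|/\max\{\beta,\beta'\}$ yields the claimed bound.

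This argument presents no real obstacle; the only point deserving a moment's attention is the verification that $w^{\star}$ lies in the open interval $\ooint{0,1}$ and, together with the vanishing of $g$ at the endpoints, is the global maximum of $g$ on $\ccint{0,1}$. Everything else is routine algebra, and I would expect the write-up to be very short.
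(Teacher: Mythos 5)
Your proof is correct and takes essentially the same route as the paper's: both locate the interior maximiser of $z^{\beta}-z^{\beta'}$ by setting the derivative to zero and then bound the resulting value $(1-a)\,a^{a/(1-a)}$ by $1-a$. Your substitution $z=w^{1/\beta'}$ merely normalises the computation to a one-parameter family; the critical point and the final estimate coincide with the paper's.
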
 %}}}
\begin{proof} %{{{
 Without loss of generality, assume that $0<\beta<\beta'<1$.
 The function $w: \ccint{0,1}\to \rset$ defined as
 $w(z)= z^\beta-z^{\beta'}$ is continuous, non-negative and 
 $w(0)=w(1)=0$. Therefore, the maximum of this function is obtained 
 inside the interval $(0,1)$. By computing the derivative 
 $w'(z)=\beta z^{\beta-1}-\beta' z^{\beta'-1}$ and setting $w'(z)=0$,
 we find the maximum at 
 $z=\big(\frac{\beta}{\beta'}\big)^{1/(\beta'-\beta)}$, so
\begin{multline*}
\sup_{z \in \ccint{0,1}}
|z^\beta-z^{\beta'}|
=\left(\frac{\beta}{\beta'}\right)^{\frac{\beta}{\beta'-\beta}}\left(1-\frac{\beta}{\beta'}\right)
\leq\left(1-\frac{\beta}{\beta'}\right)=\frac{1}{\beta'}(\beta'-\beta)\eqsp.
\qedhere
\end{multline*}
\end{proof}
\begin{lemma}\label{lem:continuity-M-temp}
Set $\beta_0 \in \ocint{0,1}$.
There exists a constant $K_{\beta_0} < \infty$ such that for any $(\beta,\beta') \in \ccint{\beta_0,1}^2$, any covariance matrix $\covmat\in \positivematrixset{d}$,
\begin{multline}
\label{eq:bound-integral}
\int_\Xset \left(V_{\beta_0}(y)+V_{\beta_0}(x)\right)\left|\alpha_\beta(x,y)-\alpha_{\beta'}(x,y)\right|q_\covmat(y-x)
\lleb(\ud y) \\
\leq K_{\beta_0} |\beta - \beta'| V_{\beta_0}(x) \eqsp.
\end{multline}
In addition, 
for any measurable function $g$ with $\Vnorm{g}{V_{\beta_0}}\leq1$,
\begin{equation}
\label{eq:bound-difference-M}
\left|M_{(\covmat,\beta)}g(x)-M_{(\covmat,\beta')}g(x)\right|\leq K_{\beta_0}
|\beta-\beta'|V_{\beta_0}(x)\eqsp.
\end{equation}
\end{lemma}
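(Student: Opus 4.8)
The plan is to treat the two claimed inequalities in sequence, deriving the second from the first. For \eqref{eq:bound-integral}, the key observation is that the map $\beta \mapsto \alpha_\beta(x,y) = 1 \wedge (\pi(y)/\pi(x))^\beta$ is Lipschitz in $\beta$ with a controllable constant. Concretely, writing $a = \pi(y)/\pi(x)$, when $a \le 1$ we have $\alpha_\beta(x,y) = a^\beta$, and when $a \ge 1$ we have $\alpha_\beta(x,y) = 1$ regardless of $\beta$. So the difference $|\alpha_\beta(x,y) - \alpha_{\beta'}(x,y)|$ is nonzero only on the set $\{\pi(y) < \pi(x)\}$, where it equals $|a^\beta - a^{\beta'}|$ with $a \in [0,1)$; by \autoref{lem:sup-z-beta-beta} this is bounded by $|\beta-\beta'|/\max\{\beta,\beta'\} \le |\beta-\beta'|/\beta_0$. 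Thus the integrand is pointwise at most $(V_{\beta_0}(y)+V_{\beta_0}(x)) \cdot \beta_0^{-1}|\beta-\beta'| \cdot q_\covmat(y-x) \cdot \charfun{\pi(y)<\pi(x)}$.

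The next step is to integrate this bound and absorb the factor $V_{\beta_0}(y)+V_{\beta_0}(x)$. The term with $V_{\beta_0}(x)$ is immediate since $\int q_\covmat(y-x)\,\ud y = 1$, giving a contribution $\le \beta_0^{-1}|\beta-\beta'|V_{\beta_0}(x)$. For the term with $V_{\beta_0}(y)$, I restrict to $\{\pi(y)<\pi(x)\}$; there $V_{\beta_0}(y) = (\pi(y)/\supnorm{\pi})^{-\beta_0/2} \le (\pi(x)/\supnorm{\pi})^{-\beta_0/2} \cdot (\pi(y)/\pi(x))^{-\beta_0/2}$. Hmm — on this set $\pi(y)<\pi(x)$ means $V_{\beta_0}(y) > V_{\beta_0}(x)$, so I cannot simply bound it by $V_{\beta_0}(x)$. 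Instead I use the finer estimate from \autoref{lem:sup-z-beta-beta}: on $\{\pi(y)<\pi(x)\}$, $|a^\beta-a^{\beta'}|$ is not just bounded by a constant times $|\beta-\beta'|$, but (taking WLOG $\beta<\beta'$) equals $a^\beta(1-a^{\beta'-\beta}) \le a^\beta \wedge \big(\beta_0^{-1}|\beta-\beta'|\big)$, and in fact $a^\beta \le a^{\beta_0/2} \cdot a^{\beta_0/2}$ when $\beta \ge \beta_0$, so $V_{\beta_0}(y)|a^\beta - a^{\beta'}| \le V_{\beta_0}(x) \cdot a^{\beta-\beta_0/2} \cdot \beta_0^{-1}|\beta-\beta'|$ with $a^{\beta-\beta_0/2}\le 1$ since $\beta \ge \beta_0$ and $a<1$. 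Integrating against $q_\covmat$ then yields a bound $\le \beta_0^{-1}|\beta-\beta'|V_{\beta_0}(x)$ for this term as well, and \eqref{eq:bound-integral} follows with $K_{\beta_0} = 2/\beta_0$ (or a suitable such constant).

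For \eqref{eq:bound-difference-M}, I write out $M_{(\covmat,\beta)}g(x) - M_{(\covmat,\beta')}g(x)$ using the definition \eqref{eq:definitionMn}. The accept part contributes $\int (\alpha_\beta(x,y)-\alpha_{\beta'}(x,y)) g(y) q_\covmat(y-x)\,\ud y$, and the reject (Dirac) part contributes $g(x)\int (\alpha_{\beta'}(x,y)-\alpha_\beta(x,y))q_\covmat(y-x)\,\ud y$. Combining them, the whole difference equals $\int (\alpha_\beta(x,y)-\alpha_{\beta'}(x,y))(g(y)-g(x))q_\covmat(y-x)\,\ud y$. Now $|g(y)-g(x)| \le |g(y)| + |g(x)| \le V_{\beta_0}(y) + V_{\beta_0}(x)$ since $\Vnorm{g}{V_{\beta_0}} \le 1$, so the absolute value of the difference is bounded exactly by the left-hand side of \eqref{eq:bound-integral}, and \eqref{eq:bound-difference-M} follows with the same constant $K_{\beta_0}$. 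The main obstacle is the bookkeeping in the second paragraph — getting the $V_{\beta_0}(y)$ factor under control on the set $\{\pi(y)<\pi(x)\}$ where $V_{\beta_0}(y)$ can be large — which is resolved by pairing it with the gain $a^{\beta}$ hidden inside $|a^\beta - a^{\beta'}|$ rather than using the crude uniform Lipschitz bound alone.
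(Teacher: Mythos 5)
Your overall strategy is exactly the paper's: split off the $V_{\beta_0}(x)$ and $V_{\beta_0}(y)$ contributions, note the difference of acceptance ratios vanishes unless $\pi(y)<\pi(x)$, handle the $V_{\beta_0}(y)$ term by absorbing its growth $a^{-\beta_0/2}$ (with $a=\pi(y)/\pi(x)$) into the decay of $a^\beta$, and then derive \eqref{eq:bound-difference-M} from \eqref{eq:bound-integral} via the identity $M_{(\covmat,\beta)}g(x)-M_{(\covmat,\beta')}g(x)=\int (g(y)-g(x))(\alpha_\beta(x,y)-\alpha_{\beta'}(x,y))q_\covmat(y-x)\,\ud y$. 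That last identity and the treatment of the $V_{\beta_0}(x)$ term are correct and match the paper.

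However, the displayed intermediate inequality in your second paragraph is false as written. You claim $V_{\beta_0}(y)\,|a^\beta-a^{\beta'}|\le V_{\beta_0}(x)\cdot a^{\beta-\beta_0/2}\cdot\beta_0^{-1}|\beta-\beta'|$; since $V_{\beta_0}(y)\,|a^\beta-a^{\beta'}|=V_{\beta_0}(x)\,a^{\beta-\beta_0/2}\,(1-a^{\beta'-\beta})$ exactly, this amounts to asserting $1-a^{\beta'-\beta}\le\beta_0^{-1}|\beta-\beta'|$, which fails for $a$ near $0$ when $\beta'-\beta<\beta_0$. The factorisation into a ``gain'' $a^{\beta-\beta_0/2}\le 1$ times a Lipschitz bound on $1-a^{\beta'-\beta}$ cannot be done factor by factor: for small $a$ it is the first factor that must save you, for $a$ near $1$ the second. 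The repair is the one the paper uses: observe that $V_{\beta_0}(y)\,|a^\beta-a^{\beta'}|=V_{\beta_0}(x)\,\bigl|a^{\beta-\beta_0/2}-a^{\beta'-\beta_0/2}\bigr|$ and apply \autoref{lem:sup-z-beta-beta} \emph{once, to the shifted exponents} $\beta-\beta_0/2$ and $\beta'-\beta_0/2$, both of which lie in $[\beta_0/2,1]$; this gives the uniform bound $\bigl|a^{\beta-\beta_0/2}-a^{\beta'-\beta_0/2}\bigr|\le (2/\beta_0)\,|\beta-\beta'|$ and hence \eqref{eq:bound-integral} with, say, $K_{\beta_0}=3/\beta_0$. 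With that one step replaced, your argument is complete and coincides with the paper's proof.
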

\begin{proof}
Without loss of generality, we assume that $\beta<\beta'$.
Recall that $V_{\beta_0}(y) \propto \pi^{-\beta_0/2} (y)$.
Note that
\begin{multline*}
\int_\Xset V_{\beta_0}(y)\left|\alpha_\beta(x,y)-\alpha_{\beta'}(x,y)\right|q_\covmat(y-x)
\ud y \\
=
V_{\beta_0}(x)\int_{\set{R}_x} \left|\left(\frac{\pi(y)}{\pi(x)}\right)^{\beta-\beta_0/2}-\left(\frac{\pi(y)}{\pi(x)}\right)^{\beta'-\beta_0/2}\right|q_\covmat(y-x)
\ud y \eqsp,
\end{multline*}
where $\set{R}_x\defeq \{y\in \Xset: \pi(y)<\pi(x)\}$, and 
\begin{multline*}
V_{\beta_0}(x) \int_\Xset \left|\alpha_\beta(x,y)-\alpha_{\beta'}(x,y)\right|q_\covmat(y-x)
\ud y \\
=
V_{\beta_0}(x)\int_{\set{R}_x} \left|\left(\frac{\pi(y)}{\pi(x)}\right)^\beta-\left(\frac{\pi(y)}{\pi(x)}\right)^{\beta'}\right|q_\covmat(y-x)
\ud y \eqsp.
\end{multline*}
Using \autoref{lem:sup-z-beta-beta}, we get
\[
\int_{\set{R}_x} \left|\left(\frac{\pi(y)}{\pi(x)}\right)^{\beta-\beta_0/2}-\left(\frac{\pi(y)}{\pi(x)}\right)^{\beta'-\beta_0/2}\right|q_\covmat(y-x)
\ud y  
\leq \frac{1}{\beta-\beta_0/2} |\beta' - \beta| \eqsp,
\]
and similarly
\[
\int_{\set{R}_x} \left|\left(\frac{\pi(y)}{\pi(x)}\right)^\beta-\left(\frac{\pi(y)}{\pi(x)}\right)^{\beta'}\right|q_\covmat(y-x)
\ud y \leq \frac{1}{\beta} |\beta' - \beta| \eqsp,
\]
which concludes the proof of \eqref{eq:bound-integral}.

We consider now \eqref{eq:bound-difference-M}. Note that
\begin{align}\label{eq:M-M}
&\left|M_{(\covmat,\beta)}g(x)-M_{(\covmat,\beta')}g(x)\right|\\\nonumber
&=\left|\int_\Xset \left(g(y)-g(x)\right)\left(\alpha_\beta(x,y)-\alpha_{\beta'}(x,y)\right)q_\covmat(y-x)
\ud y \right|\\\nonumber
&\leq\int_\Xset \left|g(y)-g(x)\right|\left|\alpha_\beta(x,y)-\alpha_{\beta'}(x,y)\right|q_\covmat(y-x)
\ud y \\\nonumber
&\leq\int_\Xset \left(V_{\beta_0}(y)+V_{\beta_0}(x)\right)\left|\alpha_\beta(x,y)-\alpha_{\beta'}(x,y)\right|q_\covmat(y-x)
\ud y\nonumber
\end{align}
and we conclude using \eqref{eq:bound-integral}.
\end{proof} %}}}

The following result is a restatement of 
\cite[Proposition 12 and (the proof of) Lemma 13]{andrieu-moulines}.
%%%%%%%%%%%%%%%%%%%%
\begin{lemma}\label{lem-continuity-M-covmat} %{{{
    For any $\epsilon>0$ 
there exists $K_\epsilon<\infty$ such that for any
$(\covmat,\covmat')\in\positivematrixset[\epsilon]{d}$, 
$\beta\in[0,1]$, and function $g$ with $\Vert
g\Vert_{V_{\beta_0}}\leq1$, we have
\[\left|M_{(\covmat,\beta)}g(x)-M_{(\covmat',\beta)}g(x)\right|\leq K_\epsilon|\covmat-\covmat'|V_{\beta_0}(x)\eqsp.\]
In addition there exists $K_q<\infty$ such that
\begin{equation}\label{eq:qcovmat}
\int_\Xset|q_\covmat(z)-q_{\covmat'}(z)| \ud z \leq K_q|\covmat-\covmat'|.
\end{equation}
\end{lemma}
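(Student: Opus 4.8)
The plan is to follow \cite[Proposition~12]{andrieu-moulines} and reduce both assertions to a single $L^1$-Lipschitz estimate for the Gaussian family $\covmat\mapsto q_\covmat$, combined with the elementary structure of the Metropolis acceptance ratio.

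First I would establish \eqref{eq:qcovmat}. The set $\positivematrixset[\epsilon]{d}$ is a \emph{compact, convex} subset of the open cone of positive definite matrices, and $\covmat\mapsto q_\covmat(z)=(2\pi)^{-d/2}(\det\covmat)^{-1/2}\exp(-\tfrac12 z^{\mathsf T}\covmat^{-1}z)$ is smooth there. Each partial derivative of $q_\covmat(z)$ in the entries of $\covmat$ equals $q_\covmat(z)$ times a polynomial in $z$ of degree at most two whose coefficients depend continuously on $\covmat$, hence are bounded on $\positivematrixset[\epsilon]{d}$ thanks to $\epsilon\le\varrho_{\min}(\covmat)\le\varrho_{\max}(\covmat)\le\epsilon^{-1}$; since $\int(1+|z|^2)q_\covmat(z)\,\ud z$ is also bounded uniformly over $\positivematrixset[\epsilon]{d}$, the map $\covmat\mapsto q_\covmat\in L^1(\ud z)$ has differential bounded in $L^1$ by some $K_q=K_q(\epsilon,d)$. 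Integrating this bound along the segment from $\covmat$ to $\covmat'$, which remains in $\positivematrixset[\epsilon]{d}$ by convexity, yields \eqref{eq:qcovmat}.

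For the first assertion I would start from the identity $M_{(\covmat,\beta)}g(x)=g(x)+\int_\Xset\bigl(g(y)-g(x)\bigr)\alpha_\beta(x,y)q_\covmat(y-x)\,\ud y$, read off \eqref{eq:definitionMn}. Since $\alpha_\beta(x,y)$ does not involve $\covmat$, subtracting the same expression with $\covmat'$ cancels the leading $g(x)$ term and leaves
\[
M_{(\covmat,\beta)}g(x)-M_{(\covmat',\beta)}g(x)=\int_\Xset\bigl(g(y)-g(x)\bigr)\,\alpha_\beta(x,y)\,\bigl(q_\covmat(y-x)-q_{\covmat'}(y-x)\bigr)\,\ud y.
\]
It then suffices to bound the integrand by a constant multiple of $V_{\beta_0}(x)$, uniformly in $y$ and $\beta$, and invoke \eqref{eq:qcovmat}. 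From $\Vert g\Vert_{V_{\beta_0}}\le1$ one has $|g(y)-g(x)|\le V_{\beta_0}(y)+V_{\beta_0}(x)$, and because $V_{\beta_0}\propto\pi^{-\beta_0/2}$ is a decreasing function of $\pi$ one checks that $\alpha_\beta(x,y)V_{\beta_0}(y)\le V_{\beta_0}(x)$ for all $y$: on $\{\pi(y)\ge\pi(x)\}$ the acceptance ratio is $1$ and $V_{\beta_0}(y)\le V_{\beta_0}(x)$, while on $\{\pi(y)<\pi(x)\}$ one writes $\alpha_\beta(x,y)V_{\beta_0}(y)/V_{\beta_0}(x)=(\pi(y)/\pi(x))^{\beta-\beta_0/2}\le1$, using $\pi(y)<\pi(x)$ and $\beta\ge\beta_0/2$ (which holds for every temperature occurring in the paper, since these lie in $\mathcal{K}_{\beta_0}$, resp.\ $\mathcal{K}_{\beta_0,\eta}$). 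Hence $|g(y)-g(x)|\alpha_\beta(x,y)\le 2V_{\beta_0}(x)$, and combining with \eqref{eq:qcovmat},
\[
\bigl|M_{(\covmat,\beta)}g(x)-M_{(\covmat',\beta)}g(x)\bigr|\le 2V_{\beta_0}(x)\int_\Xset\bigl|q_\covmat(z)-q_{\covmat'}(z)\bigr|\,\ud z\le 2K_q\,|\covmat-\covmat'|\,V_{\beta_0}(x),
\]
so $K_\epsilon=2K_q(\epsilon,d)$ works.

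I expect the crux to be the pointwise estimate $\alpha_\beta(x,y)V_{\beta_0}(y)\le V_{\beta_0}(x)$ rather than the Gaussian calculus: under \A{a:super-exp} the drift function $V_{\beta_0}(y)\propto\pi(y)^{-\beta_0/2}$ grows super-exponentially in the tails, whereas $q_\covmat-q_{\covmat'}$ only decays like a Gaussian, so discarding the acceptance factor would produce a divergent integral; it is precisely the Metropolis acceptance probability that restores integrability, and one must check that the exponents are compatible. The only other point needing care is making the $L^1$ bound on the differential of $\covmat\mapsto q_\covmat$ uniform over all of $\positivematrixset[\epsilon]{d}$, which is exactly where compactness and the eigenvalue bounds are used.
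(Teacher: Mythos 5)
Your proof is correct, and it supplies in full the argument that the paper itself omits: the text simply declares this lemma to be a restatement of Proposition~12 and Lemma~13 of \cite{andrieu-moulines}, and your route (the $L^1$-Lipschitz estimate for $\covmat\mapsto q_\covmat$ on the compact convex set $\positivematrixset[\epsilon]{d}$, the cancellation identity $M_{(\covmat,\beta)}g-M_{(\covmat',\beta)}g=\int(g(y)-g(x))\alpha_\beta(x,y)(q_\covmat-q_{\covmat'})(y-x)\,\ud y$, and the domination $\alpha_\beta(x,y)V_{\beta_0}(y)\le V_{\beta_0}(x)$) is exactly the mechanism of that reference. One remark worth keeping: as you correctly observe, the estimate $\alpha_\beta(x,y)V_{\beta_0}(y)\le V_{\beta_0}(x)$ genuinely requires $\beta\ge\beta_0/2$, and for smaller $\beta$ (e.g.\ $\pi\propto e^{-|x|^4}$ and $\beta=0$) the relevant integral can diverge, so the lemma as literally stated for all $\beta\in[0,1]$ is too generous; this is harmless because every kernel used in the paper has $\beta\ge\beta_0$, but your flagging of the restriction is a genuine (minor) correction to the statement rather than a gap in your proof.
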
 %}}}
\autoref{lem-continuity-M-covmat} can be generalised also 
for non-Gaussian proposal distributions, including the multivariate
Student \cite[Appendix B]{vihola-asm}.

Now we show the local Lipschitz-continuity of the mapping
$(\bcovmat,\bbeta) \mapsto \bM_{(\bcovmat,\bbeta)}$.
%%%%%%%%%%%%%%%%%%%%
\begin{proposition} %{{{
\label{prop:continuity-bM}
Assume \A{a:super-exp}. Let $\epsilon > 0$ and $\beta_0 > 0$, and $\eta > 0$.
There exists a constant $K_{\epsilon,\beta_0,\eta} < \infty$ such that such that
for all $\bcovmat,\bcovmat' \in \positivematrixset[\epsilon]{d}[L]$,
$\bbeta,\bbeta' \in \mathcal{K}_{\beta_0,\eta}$, and
functions $g$ with $\Vnorm{g}{\bV_{\beta_0}}\leq1$,  we have
\[\left|\bM_{(\bcovmat,\bbeta)}g(\bx)-\bM_{(\bcovmat',\bbeta')}g(\bx)\right|\leq K_{\epsilon,\beta_0,\eta}\left\{|\bbeta-\bbeta'|+|\bcovmat-\bcovmat'|\right\}\bV_{\beta_0}(\bx)\eqsp.\]
\end{proposition}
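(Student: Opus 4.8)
The plan is to expand the difference of the two tensor-product kernels as a telescoping sum over the $L$ levels, replacing one single-level random-walk kernel at a time, and to bound each term using a mild strengthening of the single-level continuity estimates \autoref{lem:continuity-M-temp} and \autoref{lem-continuity-M-covmat} together with the uniform drift supplied by \autoref{lem:PWinequality} and \autoref{lem:small-set-RWM}. For $g$ with $\Vnorm{g}{\bV_{\beta_0}}\le 1$, write
\[
\bM_{(\bcovmat,\bbeta)}g(\bx)-\bM_{(\bcovmat',\bbeta')}g(\bx)=\sum_{\ell=1}^{L}T_\ell(\bx)\eqsp,
\]
where $T_\ell(\bx)$ is obtained by integrating $g$ against $M_{(\covmat^{(j)},\beta^{(j)})}$ in the coordinates $j<\ell$, against $M_{(\covmat'^{(j)},\beta'^{(j)})}$ in the coordinates $j>\ell$, and against the difference $M_{(\covmat^{(\ell)},\beta^{(\ell)})}-M_{(\covmat'^{(\ell)},\beta'^{(\ell)})}$ in the coordinate $\ell$ (with input $x^{(\ell)}$).

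The key auxiliary step is the following refinement, valid for any measurable $h$ of one variable with $|h(u)|\le V_{\beta_0}(u)+c$ for some constant $c\ge 0$, any $\covmat,\covmat'\in\positivematrixset[\epsilon]{d}$ and any $\beta,\beta'\in\ccint{\beta_0,1}$:
\[
\bigl|(M_{(\covmat,\beta)}-M_{(\covmat',\beta')})h(x)\bigr|\le C_{\epsilon,\beta_0}\bigl(|\covmat-\covmat'|+|\beta-\beta'|\bigr)\bigl(V_{\beta_0}(x)+c\bigr)\eqsp.
\]
To obtain this, decompose $h=h_1+h_2$ with $h_1\defeq hV_{\beta_0}/(V_{\beta_0}+c)$ and $h_2\defeq hc/(V_{\beta_0}+c)$, so that $\Vnorm{h_1}{V_{\beta_0}}\le 1$ and $\supnorm{h_2}\le c$. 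For $h_1$, split $M_{(\covmat,\beta)}-M_{(\covmat',\beta')}$ into a temperature change and a covariance change and apply \autoref{lem:continuity-M-temp} and \autoref{lem-continuity-M-covmat} directly, which gives the bound $C_{\epsilon,\beta_0}(|\covmat-\covmat'|+|\beta-\beta'|)V_{\beta_0}(x)$. For $h_2$, since $M_{(\covmat,\beta)}h(x)=h(x)+\int(h(y)-h(x))\alpha_\beta(x,y)q_\covmat(y-x)\ud y$, the $h_2(x)$ terms cancel and, using $|h_2(y)-h_2(x)|\le 2c$, the pointwise bound $|\alpha_\beta q_\covmat-\alpha_{\beta'}q_{\covmat'}|\le|\alpha_\beta-\alpha_{\beta'}|q_\covmat+|q_\covmat-q_{\covmat'}|$, the inequality $\int|\alpha_\beta(x,y)-\alpha_{\beta'}(x,y)|q_\covmat(y-x)\ud y\le\beta_0^{-1}|\beta-\beta'|$ (a consequence of \autoref{lem:sup-z-beta-beta}) and \eqref{eq:qcovmat}, one gets $|(M_{(\covmat,\beta)}-M_{(\covmat',\beta')})h_2(x)|\le 2c\bigl(\beta_0^{-1}|\beta-\beta'|+K_q|\covmat-\covmat'|\bigr)$. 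The essential point is that $M-M'$ annihilates additive constants, so $c$ enters only through these $L^1$-type bounds and appears additively rather than multiplicatively.

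To assemble the estimate, inside $T_\ell(\bx)$ first integrate the coordinates $j>\ell$; since $\bV_{\beta_0}$ is a sum of coordinatewise terms and $M_{(\covmat',\beta')}V_{\beta_0}\le\lambda_{\epsilon,\beta_0}V_{\beta_0}+b_{\epsilon,\beta_0}$ by \autoref{lem:PWinequality} and \autoref{lem:small-set-RWM}, the resulting function of the $\ell$-th argument is dominated by $V_{\beta_0}(\cdot)+c$ with $c=\sum_{k<\ell}V_{\beta_0}(y^{(k)})+\sum_{j>\ell}\bigl(\lambda_{\epsilon,\beta_0}V_{\beta_0}(x^{(j)})+b_{\epsilon,\beta_0}\bigr)$. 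Apply the refined estimate, then integrate the coordinates $j<\ell$ and again use the drift bound; one finds
\[
|T_\ell(\bx)|\le C_{\epsilon,\beta_0}\bigl(|\covmat^{(\ell)}-\covmat'^{(\ell)}|+|\beta^{(\ell)}-\beta'^{(\ell)}|\bigr)\bigl(\bV_{\beta_0}(\bx)+(L-1)b_{\epsilon,\beta_0}\bigr)\eqsp.
\]
Since $\pi\le\supnorm{\pi}$ forces $V_{\beta_0}\ge 1$ pointwise, hence $\bV_{\beta_0}(\bx)\ge L$, the last bracket is at most $(1+b_{\epsilon,\beta_0})\bV_{\beta_0}(\bx)$; summing over $\ell$ and bounding $\sum_{\ell}\bigl(|\covmat^{(\ell)}-\covmat'^{(\ell)}|+|\beta^{(\ell)}-\beta'^{(\ell)}|\bigr)\le\sqrt{L}\,\bigl(|\bcovmat-\bcovmat'|+|\bbeta-\bbeta'|\bigr)$ yields the proposition, with $K_{\epsilon,\beta_0,\eta}$ depending only on $\epsilon$, $\beta_0$ and $L$ (in fact $\eta$ plays no role for this kernel).

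The main obstacle is precisely the refined single-level estimate: upgrading \autoref{lem:continuity-M-temp} and \autoref{lem-continuity-M-covmat} from $\Vnorm{h}{V_{\beta_0}}\le 1$ to $|h|\le V_{\beta_0}+c$ while keeping $c$ additive in the bound. A naive truncation of $h$ to $V_{\beta_0}$-norm one is too lossy, producing after integration cross terms $V_{\beta_0}(x^{(\ell)})V_{\beta_0}(x^{(j)})$ that are not controlled by $\bV_{\beta_0}(\bx)$; the decomposition $h=h_1+h_2$ above, which isolates a genuinely bounded part, is what circumvents this.
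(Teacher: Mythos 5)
Your argument is correct and is essentially the paper's proof: the paper likewise telescopes over the $L$ coordinates (phrased as an induction on the partial products $\bM^{(k)}$, after first splitting into a temperature change $R_1$ and a covariance change $R_2$), invokes \autoref{lem:continuity-M-temp} and \autoref{lem-continuity-M-covmat} at each level, controls the spectator coordinates with the drift bound of \autoref{lem:small-set-RWM}, and exploits exactly your key observation --- that the difference of two Metropolis kernels annihilates the additive constant contributed by the other coordinates, so no cross terms $V_{\beta_0}(x^{(\ell)})V_{\beta_0}(x^{(j)})$ arise. Your packaging of that observation as a standalone estimate for $|h|\le V_{\beta_0}+c$, and your doing both parameter changes in a single telescoping, are only cosmetic departures; the paper also notes, as you do, that $\eta$ is not needed for this kernel.
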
 %}}}
\begin{remark} %{{{
In this proof, it is possible to set $\eta= 0$. The use of $\eta > 0$ is required in the proof of continuity.
\end{remark} %}}}
\begin{proof} %{{{
    We may write
\begin{multline}
\label{eq:definition-R}
\left|\bM_{(\bcovmat,\bbeta)}g(\bx)-\bM_{(\bcovmat',\bbeta')}g(\bx)\right|
\leq\left|\bM_{(\bcovmat,\bbeta)}g(\bx)-\bM_{(\bcovmat,\bbeta')}g(\bx)\right|\\
+\left|\bM_{(\bcovmat,\bbeta')}g(\bx)-\bM_{(\bcovmat',\bbeta')}g(\bx)\right|
\eqdef R_1(\bx)+R_2(\bx) \eqsp.
\end{multline}
First, we consider $R_1$. We prove by induction that  there exists a constant $K_{k,\epsilon,\beta_0,\eta}<\infty$ such that,
for all measurable $g$
such that $\Vnorm{g}{\bV_{\beta_0}} \leq 1$,
\begin{equation}\label{eq:prodM-M}
\left|\bM^{(k)}_{(\bcovmat,\bbeta)}g(\bx)-\bM^{(k)}_{(\bcovmat,\bbeta')}g(\bx)\right|
\leq K_{k,\epsilon,\beta_0,\eta}|\bbeta-\bbeta'|\bV_{\beta_0}(\bx) \eqsp,
\end{equation}
where 
$\bM^{(k)}_{(\bcovmat,\bbeta)}\eqdef\prod_{1\leq\ell\leq k}
\check{M}_{(\covmat^{(\ell)},\beta^{(\ell)})}$, and 
\[
    \check{M}_{(\covmat^{(\ell)},\beta^{(\ell)})}(x^{(1:L)},A_1\times\cdots\times
A_L)
    \eqdef
    M_{(\covmat^{(\ell)},\beta^{(\ell)})}(x^{(\ell)}, A_\ell)\prod_{i\neq\ell}
      \delta_{x^{(i)}}(A_i).
\]
We first establish the result for $k=1$. For any $\ell \in \{1,\dots,L\}$, we get
\begin{align*}
&\left|\check{M}_{(\covmat^{(\ell)},\beta^{(\ell)})}g(\bx)-
\check{M}_{(\covmat^{(\ell)},\beta'^{(\ell)})}g(\bx)\right|\\
\nonumber
&\leq\int_\Xset \left|g(\chunk{x}{1}{\ell-1},y,\chunk{x}{\ell+1}{L})-g(\chunk{x}{1}{L})\right|\\\nonumber&\phantom{=\int_Xset()}\left|\alpha_{\beta^{(\ell)}}(x^{(\ell)},y)-\alpha_{\beta'^{(\ell)}}(x^{(\ell)},y)\right|q_{\covmat^{(\ell)}}(y-x^{(\ell)})
\ud y\\
\nonumber
&\leq\int_\Xset \left(V_{\beta_0}(y)+V_{\beta_0}(x^{(\ell)})+2\sum_{k\neq\ell}V_{\beta_0}(x^{(k)})\right)\\\nonumber&\phantom{=\int_Xset()}\left|\alpha_{\beta^{(\ell)}}(x^{(\ell)},y)-\alpha_{\beta'^{(\ell)}}(x^{(\ell)},y)\right|q_{\covmat^{(\ell)}}(y-x^{(\ell)})
\ud y
\end{align*}
Applying \eqref{eq:bound-integral}, there exists $K<\infty$ such that for any $\ell\in\{1,\dots,L\}$,
we get
\begin{equation}
\label{eq:M-M-ell}
\left|\check{M}_{(\covmat^{(\ell)},\beta^{(\ell)})}g(\bx)
    -\check{M}_{(\covmat^{(\ell)},\beta'^{(\ell)})}g(\bx)\right|
\leq K|\beta^{(\ell)}-\beta'^{(\ell)}|\sum_{k=1}^LV_{\beta_0}(x^{(k)})
\end{equation}
Taking $\ell=1$ establishes \eqref{eq:prodM-M} with $k=1$.
Assume now that \eqref{eq:prodM-M} is satisfied  for some $k\in\{2,\dots, L-1\}$.
We have
\begin{align*}
\big| \bM^{(k+1)}_{(\bcovmat,\bbeta)}g(\bx)
     &-\bM^{(k+1)}_{(\bcovmat,\bbeta')}g(\bx)\big| \\
&\leq \big|\left(\check{M}_{(\covmat^{(k+1)},\beta^{(k+1)})}
-\check{M}_{(\covmat^{(k+1)},\beta'^{(k+1)})}\right)
\bM^{(k)}_{(\bcovmat,\bbeta)}g(\bx)\big|
\\
&\phantom{\leq} +\big|\check{M}_{(\covmat^{(k+1)},\beta'^{(k+1)})}
\big(\bM^{(k)}_{(\bcovmat,\bbeta)}-\bM^{(k)}_{(\bcovmat,\bbeta')}\big)
    g(\bx)\big|
\end{align*}
For any $\Vnorm{g}{\bV_{\beta_0}} \leq 1$, we have
\begin{equation*}
\big| \bM^{(k)}_{(\bcovmat,\bbeta)} g(\bx)\big| \leq \bM^{(k)}_{(\bcovmat,\bbeta)} \bV_{\beta_0} (\bx)
= \sum_{\ell=1}^k M_{(\covmat^{(\ell)},\beta^{(\ell)})} V_{\beta_0}(x^{(\ell)}) +
\sum_{\ell=k+1}^L V_{\beta_0}(x^{(\ell)}) \eqsp.
\end{equation*}
\autoref{lem:small-set-RWM} implies that,  for any 
$\Vnorm{g}{\bV_{\beta_0}} \leq 1$,
\[
\big| \bM^{(k)}_{(\bcovmat,\bbeta)} g(\bx)\big| \leq
\bV_{\beta_0}(\bx) + k b_{\epsilon,\beta_0} \eqsp,
\]
showing that $\|\bM^{(k)}_{(\bcovmat,\bbeta)}g\|_{\bV_{\beta_0}} < \infty$.
Hence by \eqref{eq:M-M-ell} there exists a constant $K^{(1)}_{k+1,\epsilon,\beta_0,\eta}<\infty$ such that
\begin{multline*}
\big|\big(\check{M}_{(\covmat^{(k+1)},\beta^{(k+1)})}
    -\check{M}_{(\covmat^{(k+1)},\beta'^{(k+1)})}\big)
    \bM^{(k)}_{(\bcovmat,\bbeta)}g(\bx)\big| \\
\leq K^{(1)}_{k+1,\epsilon,\beta_0,\eta}\; |\beta^{(k+1)}-\beta'^{(k+1)}|\bV_{\beta_0}(\bx) \eqsp.
\end{multline*}
By the induction assumption \eqref{eq:prodM-M} and \autoref{lem:small-set-RWM} there exists $K^{(2)}_{k+1,\epsilon,\beta_0,\eta}<\infty$ such that second term is bounded by
\begin{multline*} 
\big|M_{(\covmat^{(k+1)},\beta'^{(k+1)})}
    \big(\bM^{(k)}_{(\bcovmat,\bbeta)}
    -\bM^{(k)}_{(\bcovmat,\bbeta')}\big)g(\bx)\big|\\
\leq K_{k,\epsilon,\beta_0,\eta}|\bbeta-\bbeta'|M_{(\covmat^{(k+1)},\beta'^{(k+1)})}\bV_{\beta_0}(\bx)
\leq K^{(2)}_{k+1,\epsilon,\beta_0,\eta}|\bbeta-\bbeta'|\bV_{\beta_0}(\bx) \eqsp.
\end{multline*}
This shows that \eqref{eq:prodM-M} is satisfied for $k+1$. Carrying out the induction until $k=L$, there exists a constant $K_{L,\epsilon,\beta_0,\eta}<\infty$ such that,
for all $\Vnorm{g}{\bV_{\beta_0}} \leq 1$,
\begin{equation}\label{eq:prodM-M-pour-L}
R_1(x)= \left|\bM_{(\bcovmat,\bbeta)}g(\bx)-\bM_{(\bcovmat,\bbeta')}g(\bx)\right|
\leq K_{L,\epsilon,\beta_0,\eta}|\bbeta-\bbeta'|\bV_{\beta_0}(\bx) \eqsp,
\end{equation}
where $R_1$ is defined in \eqref{eq:definition-R}.

Consider now $R_2$. It is easy to see that by \eqref{eq:qcovmat} we
obtain analogous formula to \eqref{eq:M-M-ell}; that is, with the same
temperatures $\bbeta$ but different covariance matrices $\bcovmat$ and
$\bcovmat'$. The proof is concluded by using the same induction proof
as for the term $R_1$.
\end{proof}
%}}}
%%%%%%%%%%%%%%%%%%%%
\begin{lemma}\label{lem:continuity-S} %{{{
Let $\beta_0 > 0$ and $\eta > 0$.  Then, there exists a constant
$K_{\beta_0,\eta}$ such that, for any $\bbeta,\bbeta' \in
\mathcal{K}_{\beta_0,\eta}$ and for any measurable 
function $g$ with $\Vnorm{
g}{\bV_{\beta_0}}\leq1$, it holds that \[ |\bS_{\bbeta}g(\bx)
-\bS_{\bbeta'}g(\bx)|\leq
K_{\beta_0,\eta}|\bbeta-\bbeta'|\bV_{\beta_0}(\bx)\eqsp. \]
\end{lemma}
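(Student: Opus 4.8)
The plan is to exploit the elementary ``propose a transposition'' form of the swap kernel. Write $\sigma_j$ for the transposition interchanging coordinates $j$ and $j+1$, so that the kernel $J^{(j)}(x^{(j)},x^{(j+1)};\cdot)$ in \eqref{eq:definitionJ} is simply $\delta_{\sigma_j\bx}$. Then, for any bounded measurable $g$,
\begin{multline*}
\bS_{\bbeta}g(\bx)=\frac{1}{L-1}\sum_{j=1}^{L-1}\Big[\varpi^{(j)}_{\bbeta}(x^{(j)},x^{(j+1)})\,g(\sigma_j\bx)\\
+\big(1-\varpi^{(j)}_{\bbeta}(x^{(j)},x^{(j+1)})\big)\,g(\bx)\Big],
\end{multline*}
and subtracting the analogous identity for $\bbeta'$ the ``stay'' terms partly cancel, leaving
\[
\bS_{\bbeta}g(\bx)-\bS_{\bbeta'}g(\bx)=\frac{1}{L-1}\sum_{j=1}^{L-1}\big(\varpi^{(j)}_{\bbeta}-\varpi^{(j)}_{\bbeta'}\big)(x^{(j)},x^{(j+1)})\,\big(g(\sigma_j\bx)-g(\bx)\big).
\]
It then suffices to bound, uniformly in $\bx$ and in $j$, the product $\big|\varpi^{(j)}_{\bbeta}-\varpi^{(j)}_{\bbeta'}\big|\cdot|g(\sigma_j\bx)-g(\bx)|$ by a constant times $|\bbeta-\bbeta'|\,\bV_{\beta_0}(\bx)$.

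For the second factor I would use that $\bV_{\beta_0}$ is invariant under coordinate permutations, so $\bV_{\beta_0}(\sigma_j\bx)=\bV_{\beta_0}(\bx)$, whence $\Vnorm{g}{\bV_{\beta_0}}\le1$ yields $|g(\sigma_j\bx)-g(\bx)|\le2\bV_{\beta_0}(\bx)$. For the first factor I would set $a\defeq\pi(x^{(j+1)})/\pi(x^{(j)})$, $\delta\defeq\beta^{(j)}-\beta^{(j+1)}$ and $\delta'\defeq\beta'^{(j)}-\beta'^{(j+1)}$, so that $\varpi^{(j)}_{\bbeta}=1\wedge a^{\delta}$ and $\varpi^{(j)}_{\bbeta'}=1\wedge a^{\delta'}$, and then split into two cases. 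Since $\bbeta,\bbeta'\in\mathcal{K}_{\beta_0,\eta}$, the exponents $\delta,\delta'$ both lie in $[\eta,1-\beta_0]\subset(0,1)$; in particular they are strictly positive, so if $a\ge1$ then $a^{\delta},a^{\delta'}\ge1$ and $\varpi^{(j)}_{\bbeta}=\varpi^{(j)}_{\bbeta'}=1$, a zero contribution. If $a<1$, then (using that $t\mapsto1\wedge t$ is $1$-Lipschitz and \autoref{lem:sup-z-beta-beta} with $z=a\in[0,1]$)
\[
\big|\varpi^{(j)}_{\bbeta}-\varpi^{(j)}_{\bbeta'}\big|\le|a^{\delta}-a^{\delta'}|\le\frac{1}{\max\{\delta,\delta'\}}\,|\delta-\delta'|\le\frac{2}{\eta}\,|\bbeta-\bbeta'|,
\]
where the last inequality uses $\max\{\delta,\delta'\}\ge\eta$ together with $|\delta-\delta'|\le|\beta^{(j)}-\beta'^{(j)}|+|\beta^{(j+1)}-\beta'^{(j+1)}|\le2|\bbeta-\bbeta'|$. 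Multiplying the two bounds and averaging over $j$ gives the statement, e.g.\ with $K_{\beta_0,\eta}=4/\eta$.

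The only point that needs care is the potential blow-up of $|a^{\delta}-a^{\delta'}|$ when $a$ is large; the case split $\{a<1\}$ versus $\{a\ge1\}$ resolves it, and it works precisely because $\bbeta,\bbeta'\in\mathcal{K}_{\beta_0,\eta}$ forces $\delta,\delta'\ge\eta>0$, so the two truncations at level one are switched on together. Everything else is routine.
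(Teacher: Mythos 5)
Your proof is correct and follows essentially the same route as the paper's: the same telescoping of the ``stay'' terms to reduce to $\sum_j(\varpi^{(j)}_{\bbeta}-\varpi^{(j)}_{\bbeta'})(g(\sigma_j\bx)-g(\bx))$, the same case split on $\pi(x^{(j+1)})\gtrless\pi(x^{(j)})$, the same application of \autoref{lem:sup-z-beta-beta} with the lower bound $\eta$ on the exponent gaps, and the same use of permutation invariance of $\bV_{\beta_0}$ to bound $|g(\sigma_j\bx)-g(\bx)|\le 2\bV_{\beta_0}(\bx)$. No gaps; only the bookkeeping of the final constant differs.
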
 %}}}
\begin{proof} %{{{
Using the definition \eqref{eq:definitionSn} of $\bS_{\bbeta}$, we get
\begin{multline}
    \label{eq:S-V}
|\bS_{\bbeta}g(\bx) -\bS_{\bbeta'}g(\bx)|=\frac{1}{L-1}\Big|\sum_{\ell=1}^{L-1}\left(\varpi^{(\ell)}_{\bbeta} (x^{(\ell)},x^{(\ell+1)})-\varpi^{(\ell)}_{\bbeta'} (x^{(\ell)},x^{(\ell+1)})\right)\\
\times\left(g(\chunk{x}{1}{\ell-1},x^{(\ell+1)},x^{(\ell)},\dots,x^{(L)})-
g(\chunk{x}{1}{\ell-1},x^{(\ell)},x^{(\ell+1)},\dots,x^{(L)})\right)\Big|\eqsp.
\end{multline}
By \eqref{eq:probabilite-swap}, it holds that
$\varpi^{(\ell)}_{\bbeta} (x^{(\ell)},x^{(\ell+1)})= \varpi^{(\ell)}_{\bbeta'} (x^{(\ell)},x^{(\ell+1)})=1$ whenever $ \pi(x^{(\ell+1)}) \geq \pi(x^{(\ell)})$.
Therefore, using  \autoref{lem:sup-z-beta-beta}, we get that
\begin{align}
\label{eq:diff-swap-probabilities}
& \big|\varpi^{(\ell)}_{\bbeta} (x^{(\ell)},x^{(\ell+1)})
-\varpi^{(\ell)}_{\bbeta'} (x^{(\ell)},x^{(\ell+1)})\big|\\
\nonumber
&=\1_{\{\pi(x^{(\ell+1)})\leq\pi(x^{(\ell)}) \}} \left|\left(\frac{\pi(x^{(\ell+1)})}{\pi(x^{(\ell)})}\right)^{\beta^{(\ell)}-\beta^{(\ell+1)}}-\left(\frac{\pi(x^{(\ell+1)})}{\pi(x^{(\ell)})}\right)^{\beta'^{(\ell)}-\beta'^{(\ell+1)}}\right|\\
&\leq\frac{|\beta^{(\ell+1)}-\beta'^{(\ell+1)}|+|\beta^{(\ell)}-\beta'^{(\ell)}|}{\left(\beta^{(\ell)}-\beta^{(\ell+1)}\right)\wedge\left(\beta'^{(\ell)}-\beta'^{(\ell+1)}\right)}
\eqsp.
\nonumber
\end{align}
Since $\bbeta, \bbeta' \in \mathcal{K}_{\beta_0,\eta}$, 
$\max\{\left(\beta^{(\ell)}-\beta^{(\ell+1)}\right),
  \left(\beta'^{(\ell)}-\beta'^{(\ell+1)}
\right)\} \geq \eta$ for all $\ell\in\{1,\dots,L\}$.
Because $\Vnorm{g}{\bV_{\beta_0}}\leq1$ and $\bV_{\beta_0}$ are
invariant under permutations, we have by 
\eqref{eq:S-V} and \eqref{eq:diff-swap-probabilities}
\[
    |\bS_{\bbeta}g(\bx) -\bS_{\bbeta'}g(\bx)|\leq\frac{
4}{(L-1)\eta}\bV_{\beta_0}(\bx)\sum_{\ell=1}^{L}|\beta^{(\ell)}-\beta'^{(\ell)}|\eqsp.
\qedhere
\]
\end{proof} %}}}
Now we are ready to conclude with the continuity of the parallel tempering
kernels.
\begin{proof}[Proof of \autoref{theo:continuity-P}]
The definition \eqref{eq:lyapunov-function-junct} of $\bV_{\bbeta_0}$ implies that,
for any $\alpha \in \ocint{0,1}$,
\[
L^{\alpha-1} \sum_{i=1}^L V_{\alpha \beta_0}(x^{(i)}) \leq
\left( \sum_{i=1}^L V_{\beta_0}(x^{(i)}) \right)^\alpha \leq \sum_{i=1}^L V_{\alpha \beta_0}(x^{(i)})
\eqsp,
\]
showing that $L^{\alpha-1} \bV_{\alpha \beta_0}(\bx) \leq \bV^\alpha_{\beta_0}(\bx) \leq  \bV_{\alpha \beta_0}(\bx)$. Therefore, the norms $\Vnorm{\cdot}{\bV^\alpha_{\beta_0}}$ and $\Vnorm{\cdot}{\bV_{\alpha \beta_0}}$ are
equivalent. It suffices to prove the results with $\alpha=1$. Write
\[
|\bP_{(\bcovmat,\bbeta)}g(\bx) -\bP_{(\bcovmat',\bbeta')}g(\bx)|\leq T_1(\bx)+T_2(\bx)
\eqsp,
\]
where
\begin{align*}
T_1(\bx)&\eqdef\left|\bS_{\bbeta}(\bM_{(\bcovmat,\bbeta)}-\bM_{(\bcovmat',\bbeta')})g(\bx)\right|\\
T_2(\bx)&\eqdef\left|(\bS_{\bbeta}-\bS_{\bbeta'})\bM_{(\bcovmat',\bbeta')}g(\bx)\right|\eqsp.
\end{align*}
By \autoref{prop:continuity-bM}, we obtain
\begin{multline*}
T_1(\bx)\leq K_{\epsilon,\beta_0,\eta}
\left\{|\bbeta-\bbeta'|+|\bcovmat-\bcovmat'|\right\}\bS_{\bbeta}\bV_{\beta_0}(\bx)
\\
\leq K_{\epsilon,\beta_0,\eta} \, \left\{|\bbeta-\bbeta'|+|\bcovmat-\bcovmat'|\right\}\bV_{\beta_0}(\bx)\eqsp.
\end{multline*}
By \eqref{eq:la-preuve-pour-bM} of
\autoref{lem:small-set-RWM}, 
we obtain that $\Vnorm{\bM_{(\bcovmat',\bbeta')}g}{\bV_{\beta_0}}\leq C$. Hence, by \autoref{lem:continuity-S} we get that
\[T_2(\bx)\leq C K_{\beta_0,\eta} |\bbeta-\bbeta'|\bV_{\beta_0}(\bx)\eqsp,\]
which concludes the proof.
\end{proof}

%}}}

%%%%%%%%%%%%%%%%%%%%%%%%%%%%%%%%%%%%%%%%
\subsection{Proof of \autoref{theo:law-of-large-numbers}} %{{{
We now turn into the proof of the strong law of large numbers. We
start by gathering some known results and by technical lemmas.

%%%%%%%%%%%%%%%%%%%%
\begin{lemma} %{{{
\label{lem:expectation_V} Assume \A{a:super-exp} and that, in addition,  $\E \left[ \bV_{\beta_0}(\bX_0) \right]<\infty$. Then,
\[
\sup_{n \geq 1} \E \left[ \bV_{\beta_0}(\bX_n) \right]<\infty \eqsp,
\]
where $\bX_n$ is the state of the adaptive parallel tempering algorithm defined in \eqref{eq:definition-state-pt}.
\end{lemma}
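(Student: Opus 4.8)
The plan is to exploit the drift inequality \eqref{eq:la-preuve-pour-bP} of \autoref{prop:drift} together with the conditional expectation identity \eqref{def:expn}, and to average over the random variation of the adapted parameters $(\bcovmat_n,\bbeta_n)$ using a standard supermartingale-type argument. The key point is that the drift constants $\lambda_{\epsilon,\beta_0}<1$ and $b_{\epsilon,\beta_0}<\infty$ in \autoref{prop:drift} are uniform over $\bcovmat\in\positivematrixset[\epsilon]{d}[L]$ and $\bbeta\in\mathcal{K}_{\beta_0}$, and that by construction the adapted parameters always stay in these sets: the covariances $\covmat_n^{(\ell)}=\exp(T_n^{(\ell)})\Gamma_n^{(\ell)}$ are built from $\Gamma_n^{(\ell)}$ projected onto $\positivematrixset[\varepsilon]{d}$ and $T_n^{(\ell)}$ projected onto a compact interval $[\underline T,\overline T]$, hence lie in $\positivematrixset[\epsilon]{d}$ for some $\epsilon>0$ depending on $\varepsilon,\underline T,\overline T$; similarly the inverse temperatures $\bbeta_n$ stay in $\mathcal{K}_{\beta_0}$ because $\rho_n^{(\ell)}$ is projected onto $[\underline\rho,\overline\rho]$, which via \eqref{eq:beta-rho-def} forces $\beta_0\le\beta_n^{(L)}\le\cdots\le\beta_n^{(1)}=1$ for a suitable $\beta_0=\beta_0(\underline\rho,\overline\rho,L)>0$.

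Concretely, I would first record that since $(\bcovmat_n,\bbeta_n)$ is $\F_n$-measurable (as noted after the definition of the filtration) and $\bX_{n+1}\mid\F_n$ has law $\bP_{(\bcovmat_n,\bbeta_n)}(\bX_n,\cdot)$, \eqref{def:expn} with $f=\bV_{\beta_0}$ and \eqref{eq:la-preuve-pour-bP} give
\[
\E\bigl[\bV_{\beta_0}(\bX_{n+1})\mid\F_n\bigr]
=\bP_{(\bcovmat_n,\bbeta_n)}\bV_{\beta_0}(\bX_n)
\le \lambda_{\epsilon,\beta_0}\,\bV_{\beta_0}(\bX_n)+b_{\epsilon,\beta_0}\eqsp.
\]
Taking expectations and writing $v_n\eqdef\E[\bV_{\beta_0}(\bX_n)]$, which is finite by induction starting from $v_0<\infty$, yields the linear recursion $v_{n+1}\le\lambda_{\epsilon,\beta_0}v_n+b_{\epsilon,\beta_0}$. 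Iterating, $v_n\le\lambda_{\epsilon,\beta_0}^n v_0+b_{\epsilon,\beta_0}/(1-\lambda_{\epsilon,\beta_0})$, so $\sup_{n\ge1}v_n\le v_0+b_{\epsilon,\beta_0}/(1-\lambda_{\epsilon,\beta_0})<\infty$, which is the claim. One technical subtlety to address: since $\bV_{\beta_0}\ge1$ is unbounded, the identity \eqref{def:expn} (stated there for bounded $f$) must be extended to $\bV_{\beta_0}$; this is done by monotone convergence applied to the truncations $\bV_{\beta_0}\wedge N$ together with the drift bound, which is why the finiteness of $v_n$ must be propagated inductively.

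The main obstacle — really the only non-routine point — is verifying that the adapted parameter processes genuinely remain in the compact sets $\positivematrixset[\epsilon]{d}[L]$ and $\mathcal{K}_{\beta_0}$ so that a \emph{single} pair of drift constants works for all $n$ simultaneously; once that uniformity is in hand, the argument is the elementary deterministic recursion above. I would therefore spell out explicitly how the projections $\Pi_\Gamma,\Pi_T,\Pi_\rho$ and the defining relations \eqref{eq:definition-covmat_n} and \eqref{eq:beta-rho-def} pin down admissible $\epsilon$ and $\beta_0$, and only then invoke \autoref{prop:drift}.
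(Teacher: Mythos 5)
Your proposal is correct and follows essentially the same route as the paper: apply the uniform drift inequality \eqref{eq:la-preuve-pour-bP} from \autoref{prop:drift} together with \eqref{def:expn}, take expectations, and iterate the resulting linear recursion. The two points you flag as needing care --- that the projections keep $(\bcovmat_n,\bbeta_n)$ in the compact sets so a single pair of constants works, and the extension of \eqref{def:expn} to the unbounded function $\bV_{\beta_0}$ --- are implicitly assumed in the paper's proof, so spelling them out only strengthens the argument.
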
 %}}}
\begin{proof} %{{{
    Under \A{a:super-exp}, by \autoref{prop:drift}, for all $n\in\nset$ we have that
\begin{equation}
\label{eq:drift_n}
\bP_{(\bcovmat_n,\bbeta_n)} \bV_{\beta_0} \leq \lambda_{\epsilon,\beta_0} \bV_{\beta_0}+ b\eqsp.
\end{equation}
Iterating \eqref{eq:drift_n}, by \eqref{def:expn} we obtain
\begin{multline*}
\E[\bV_{\beta_0}(\bX_n)]=\E\left[\cesp{\bV_{\beta_0}(\bX_n)}{\F_{n-1}}\right]= \E \left[ \bP_{(\bcovmat_{n-1},\bbeta_{n-1})} \bV_{\beta_0}(\bX_{n-1}) \right] \\
\leq\lambda_{\epsilon,\beta_0}\E\left[\bV_{\beta_0}(\bX_{n-1})\right]+b_{\epsilon,\beta_0} \eqsp.
\end{multline*}
By iterating this majorisation, we get recursively
\begin{align*}
\E[\bV_{\beta_0}(\bX_n)] 
&\leq\lambda_{\epsilon,\beta_0}^n\E\left[\bV_{\beta_0}(\bX_{0})\right]+b\sum_{k=1}^n\lambda_{\epsilon,\beta_0}^k
\\ &\leq\lambda_{\epsilon,\beta_0}\E\left[\bV_{\beta_0}(\bX_{0})\right]+\frac{b}{1-\lambda_{\epsilon,\beta_0}}\eqsp.
\end{align*}
Because the term on the right is independent from $n$ and since, by assumption, $\E \bV_{\beta_0}(\bX_0)<\infty$, the proof is concluded.
\end{proof} %}}}

The following Lemma is adapted from \cite[Lemma~4.2]{fort-moulines-priouret}.
%%%%%%%%%%%%%%%%%%%%
\begin{lemma}\label{lem:continuity-Poisson} %{{{
  Assume \A{a:super-exp}. Let $\epsilon > 0$, $\beta_0 > 0$, $\eta > 0$.
  For any $(\bcovmat,\bbeta) \in \mathcal{K}_{\epsilon,\beta_0,\eta} \eqdef \positivematrixset[\epsilon]{d}[L] \times \mathcal{K}_{\beta_0,\eta}$, let
  $F_{(\bcovmat,\bbeta)}: \Xset^L \to \rset^+$ be a measurable function such that
  \begin{equation}
  \label{eq:bound-F}
  \sup_{(\bcovmat,\bbeta) \in \mathcal{K}_{\epsilon,\beta_0,\eta} } \Vnorm{F_{(\bcovmat,\bbeta)}}{\bV_{\beta_0}} < +\infty \eqsp.
  \end{equation}
  Define $\hat{F}_{(\bcovmat,\bbeta)}$ the solution of the Poisson equation (see \autoref{coro:Poisson})
\[
\hat{F}_{(\bcovmat,\bbeta)} \eqdef \sum_{n \geq 0} \bP_{(\bcovmat,\bbeta)}^n\{F_{(\bcovmat,\bbeta)} - \bpi_{\bbeta}(F_{(\bcovmat,\bbeta)}) \} \eqsp.
\]
There exist a constant $L_{\epsilon,\beta_0} < \infty$ such that, for any ${(\bcovmat,\bbeta)}, {(\bcovmat',\bbeta')} \in \mathcal{K}_{\epsilon,\beta_0,\eta}$,
\begin{equation}
\label{eq:differencepi}
\Vnorm{\bpi_{\bbeta} - \bpi_{\bbeta'}}{\bV_{\beta_0}} \\
\leq L_{\epsilon,\beta_0}  \ D_{\bV_{\beta_0}}\left[(\bcovmat,\bbeta),(\bcovmat',\bbeta')\right] \eqsp,
\end{equation}
and
\begin{align}
\label{eq:differencePF}
\big\|&\bP_{(\bcovmat,\bbeta)} \hat{F}_{(\bcovmat,\bbeta)} 
  - \bP_{{(\bcovmat',\bbeta')}} \hat{F}_{{(\bcovmat',\bbeta')}}\|_{\bV_{\beta_0}}
\\
&\leq L_{\epsilon,\beta_0} \bigg\{ 
  \Vnorm{F_{(\bcovmat,\bbeta)} - F_{{(\bcovmat',\bbeta')}}}{\bV_{\beta_0}}
\nonumber\\
&\phantom{ \leq L_{\epsilon,\beta_0} \bigg\{ }
  +\sup_{(\bcovmat,\bbeta) \in \mathcal{K}_{\epsilon,\beta_0,\eta}}
\Vnorm{F_{(\bcovmat,\bbeta)}}{\bV_{\beta_0}} \
D_{\bV_{\beta_0}}\left[(\bcovmat,\bbeta),(\bcovmat',\bbeta') \right]
\bigg\} \eqsp.
\nonumber
\end{align}
\end{lemma}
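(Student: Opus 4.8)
The plan is to run the perturbation argument of \cite{fort-moulines-priouret}, using only the uniform geometric drift of \autoref{prop:drift}, the uniform Poisson bound of \autoref{coro:Poisson} (with $\alpha=1$), and the invariance $\bpi_{\bbeta}\bP_{(\bcovmat,\bbeta)}=\bpi_{\bbeta}$. I would begin by recording three facts that hold uniformly over $\mathcal{K}_{\epsilon,\beta_0,\eta}$. First, since $\bV_{\beta_0}=\sum_{\ell=1}^L V_{\beta_0}(x^{(\ell)})\ge L\ge 1$, every constant $c$ satisfies $\Vnorm{c}{\bV_{\beta_0}}\le|c|$. Second, integrating $\bP_{(\bcovmat,\bbeta)}\bV_{\beta_0}\le\lambda_{\epsilon,\beta_0}\bV_{\beta_0}+b_{\epsilon,\beta_0}$ against $\bpi_{\bbeta}$ gives $\bpi_{\bbeta}(\bV_{\beta_0})\le b_{\epsilon,\beta_0}/(1-\lambda_{\epsilon,\beta_0})\defeq B$ (that $\bpi_{\bbeta}(\bV_{\beta_0})<\infty$ is part of the $\bV_{\beta_0}$-uniform ergodicity of \autoref{theo:geometric-ergodicity-parallel-tempering}). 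Third, by \autoref{coro:Poisson}, $\Vnorm{\hat F_{(\bcovmat,\bbeta)}}{\bV_{\beta_0}}\le D_{\epsilon,\beta_0}M$ with $M\defeq\sup_{(\bcovmat,\bbeta)\in\mathcal{K}_{\epsilon,\beta_0,\eta}}\Vnorm{F_{(\bcovmat,\bbeta)}}{\bV_{\beta_0}}<\infty$ by \eqref{eq:bound-F}, and $\bpi_{\bbeta}(\hat F_{(\bcovmat,\bbeta)})=0$ from the series representation. I will also use, directly from \eqref{supp:eq:defi:DVnorm}, that $\Vnorm{(\bP_{(\bcovmat,\bbeta)}-\bP_{(\bcovmat',\bbeta')})h}{\bV_{\beta_0}}\le D\,\Vnorm{h}{\bV_{\beta_0}}$ for every $h$ with finite norm, where $D\defeq D_{\bV_{\beta_0}}[(\bcovmat,\bbeta),(\bcovmat',\bbeta')]$.

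For \eqref{eq:differencepi}, fix $f$ with $\Vnorm{f}{\bV_{\beta_0}}\le1$ and let $\hat f$ be the Poisson solution of $f-\bpi_{\bbeta'}(f)$ for $\bP_{(\bcovmat',\bbeta')}$. Using $\bpi_{\bbeta}(\hat f-\bP_{(\bcovmat,\bbeta)}\hat f)=0$ one gets
\[
\bpi_{\bbeta}(f)-\bpi_{\bbeta'}(f)=\bpi_{\bbeta}\bigl(\hat f-\bP_{(\bcovmat',\bbeta')}\hat f\bigr)=\bpi_{\bbeta}\bigl((\bP_{(\bcovmat,\bbeta)}-\bP_{(\bcovmat',\bbeta')})\hat f\bigr),
\]
and, bounding the integrand by $D\,\bV_{\beta_0}\Vnorm{\hat f}{\bV_{\beta_0}}\le D\,D_{\epsilon,\beta_0}\bV_{\beta_0}$, integrating, and taking the supremum over $f$, this yields \eqref{eq:differencepi} with $L_{\epsilon,\beta_0}\ge B\,D_{\epsilon,\beta_0}$.

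For \eqref{eq:differencePF}, write $g\defeq\hat F_{(\bcovmat,\bbeta)}$, $g'\defeq\hat F_{(\bcovmat',\bbeta')}$, $c\defeq\bpi_{\bbeta}(F_{(\bcovmat,\bbeta)})-\bpi_{\bbeta'}(F_{(\bcovmat',\bbeta')})$. Subtracting the two Poisson equations and writing $\bP_{(\bcovmat,\bbeta)}g-\bP_{(\bcovmat',\bbeta')}g'=\bP_{(\bcovmat',\bbeta')}(g-g')+(\bP_{(\bcovmat,\bbeta)}-\bP_{(\bcovmat',\bbeta')})g$ gives simultaneously
\[
\bP_{(\bcovmat,\bbeta)}g-\bP_{(\bcovmat',\bbeta')}g'=(g-g')-(F_{(\bcovmat,\bbeta)}-F_{(\bcovmat',\bbeta')})+c
\]
and $(I-\bP_{(\bcovmat',\bbeta')})(g-g')=R$, where $R\defeq(F_{(\bcovmat,\bbeta)}-F_{(\bcovmat',\bbeta')})-c+(\bP_{(\bcovmat,\bbeta)}-\bP_{(\bcovmat',\bbeta')})g$ has finite $\bV_{\beta_0}$-norm. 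Since $\bpi_{\bbeta'}(R)=\bpi_{\bbeta'}\bigl((I-\bP_{(\bcovmat',\bbeta')})(g-g')\bigr)=0$, \autoref{coro:Poisson} applies to $R$, and by uniqueness of the Poisson solution up to a constant $g-g'=\sum_{n\ge0}\bP_{(\bcovmat',\bbeta')}^n R+\bpi_{\bbeta'}(g-g')$, so $\Vnorm{g-g'}{\bV_{\beta_0}}\le D_{\epsilon,\beta_0}\Vnorm{R}{\bV_{\beta_0}}+|\bpi_{\bbeta'}(g-g')|$. Now $\Vnorm{R}{\bV_{\beta_0}}\le\Vnorm{F_{(\bcovmat,\bbeta)}-F_{(\bcovmat',\bbeta')}}{\bV_{\beta_0}}+|c|+D\,D_{\epsilon,\beta_0}M$, with $|c|\le B\,\Vnorm{F_{(\bcovmat,\bbeta)}-F_{(\bcovmat',\bbeta')}}{\bV_{\beta_0}}+L_{\epsilon,\beta_0}D\,M$ (split $c$ through $\bpi_{\bbeta}(F_{(\bcovmat',\bbeta')})$ and apply \eqref{eq:differencepi}), and $|\bpi_{\bbeta'}(g-g')|=|(\bpi_{\bbeta'}-\bpi_{\bbeta})(g)|\le L_{\epsilon,\beta_0}D\,D_{\epsilon,\beta_0}M$ (using $\bpi_{\bbeta}(g)=\bpi_{\bbeta'}(g')=0$ and \eqref{eq:differencepi}). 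Substituting into the first displayed identity and collecting the constants gives \eqref{eq:differencePF}, with a common $L_{\epsilon,\beta_0}$ obtained by enlarging the one from \eqref{eq:differencepi}.

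The crux is the second part: expanding $\hat F_{(\bcovmat,\bbeta)}-\hat F_{(\bcovmat',\bbeta')}$ term-by-term as $\sum_n(\bP_{(\bcovmat,\bbeta)}^n-\bP_{(\bcovmat',\bbeta')}^n)(\,\cdot\,)$ is fatal, because each piece converges to a nonzero constant ($\bpi_{\bbeta}$ versus $\bpi_{\bbeta'}$ of the centered summand) and the two series diverge separately; one must keep everything together through the single Poisson equation $(I-\bP_{(\bcovmat',\bbeta')})(g-g')=R$. The one subtlety is then that this equation pins down $g-g'$ only up to the additive constant $\bpi_{\bbeta'}(g-g')$, and seeing that this constant is itself $O(D)$ rather than merely bounded requires feeding back the already-proven \eqref{eq:differencepi}. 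Everything else is routine bookkeeping with the $\bV_{\beta_0}$-norm and the triangle inequality.
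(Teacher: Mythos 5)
Your proof is correct, and it supplies in full the argument that the paper itself omits by simply citing \cite[Lemma~4.2]{fort-moulines-priouret}: the route you take (the identity $\bpi_{\bbeta}(f)-\bpi_{\bbeta'}(f)=\bpi_{\bbeta}\bigl((\bP_{(\bcovmat,\bbeta)}-\bP_{(\bcovmat',\bbeta')})\hat f\bigr)$ for the first bound, and the single Poisson equation $(I-\bP_{(\bcovmat',\bbeta')})(g-g')=R$ with the constant $\bpi_{\bbeta'}(g-g')$ controlled by feeding back \eqref{eq:differencepi} for the second) is precisely the standard perturbation argument that the cited lemma rests on. Your remark about why the naive term-by-term expansion of $\hat F_{(\bcovmat,\bbeta)}-\hat F_{(\bcovmat',\bbeta')}$ fails correctly identifies the only real pitfall, and all the uniformity you invoke (\autoref{prop:drift}, \autoref{coro:Poisson} with $\alpha=1$, and $\bpi_{\bbeta}(\bV_{\beta_0})\le b_{\epsilon,\beta_0}/(1-\lambda_{\epsilon,\beta_0})$) is available over $\mathcal{K}_{\epsilon,\beta_0,\eta}$.
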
 %}}}

We use repeatedly the following elementary result on 
a projection to a closed convex set.
%%%%%%%%%%%%%%%%%%%%
\begin{lemma}\label{lem:proj-distance} %{{{
Let $\set{E}$ be an Euclidean space. 
Given any nonempty closed convex set $\set{K} \subset \set{E}$, 
denote by $\Pi_{\set{K}}$ the projection on the set $\set{K}$. 
For any $(x,y) \in \set{E} \times \set{E}$, 
$ \Vert \Pi_{\set{K}}(x) - \Pi_{\set{K}}(y) \Vert \leq \Vert x - y \Vert$, 
where $\Vert \cdot \Vert$ is the Euclidean norm.
\end{lemma}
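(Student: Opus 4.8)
The plan is to reduce the claim to the standard variational characterization of the metric projection onto a closed convex set, and then to obtain nonexpansiveness from that characterization by a short inner-product computation.

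First I would recall why the projection is well defined: since $\set{K}$ is nonempty, closed and convex, for each $x\in\set{E}$ the function $z\mapsto\|x-z\|^2$ is coercive and strictly convex on $\set{K}$, so it attains its infimum at a unique point, which is by definition $\Pi_{\set{K}}(x)$. The point $p=\Pi_{\set{K}}(x)$ is characterized by the obtuse-angle (variational) inequality
\[
\langle x-p,\, w-p\rangle \le 0 \qquad \text{for all } w\in\set{K}.
\]
This is obtained by noting that for any $w\in\set{K}$ and $t\in[0,1]$ the point $p+t(w-p)$ lies in $\set{K}$ by convexity, so $t\mapsto \|x-p-t(w-p)\|^2$ is minimized at $t=0$ over $[0,1]$; differentiating at $t=0^+$ gives the inequality. (Alternatively this can simply be quoted from any convex-analysis text.)

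Next, fix $x,y\in\set{E}$ and put $p=\Pi_{\set{K}}(x)$, $q=\Pi_{\set{K}}(y)$. Using the characterization for $p$ with test point $w=q\in\set{K}$ gives $\langle x-p,\, q-p\rangle\le 0$, and using it for $q$ with test point $w=p\in\set{K}$ gives $\langle y-q,\, p-q\rangle\le 0$. Adding these and rearranging yields
\[
\langle (x-y)-(p-q),\, q-p\rangle \le 0,
\]
hence $\|p-q\|^2 \le \langle x-y,\, p-q\rangle$. By Cauchy--Schwarz the right-hand side is at most $\|x-y\|\,\|p-q\|$, so $\|p-q\|^2\le\|x-y\|\,\|p-q\|$; dividing by $\|p-q\|$ when it is nonzero (the inequality being trivial otherwise) gives $\|\Pi_{\set{K}}(x)-\Pi_{\set{K}}(y)\|\le\|x-y\|$, as claimed.

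I do not expect any genuine obstacle here: the only slightly delicate point is the justification of the variational inequality characterizing the projection, and even that is entirely classical. Since the rest of the paper uses only the Lipschitz bound, one could also state the lemma without proof and cite a standard reference.
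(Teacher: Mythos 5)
Your proof is correct and is the canonical argument (variational characterization of the projection, summing the two obtuse-angle inequalities, then Cauchy--Schwarz). The paper states this lemma as an elementary fact and gives no proof at all, so there is nothing to compare against; your write-up supplies exactly the standard justification one would cite.
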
 %}}}

%%%%%%%%%%%%%%%%%%%%
\begin{lemma}\label{lem:DV-n-n+1} %{{{
    Assume \A{a:super-exp} and $\sup_{n\in\nset}\gamma_{n,2}\leq 1$, where $\gamma_{n,2}$ is defined in \eqref{eq:am-cov} and \eqref{eq:am-mean}. Then, for all $\kappa>0$ and $\alpha \in \ccint{0,1}$,
there exists constant $K_{\kappa,\alpha,\epsilon,\beta_0,\eta}<\infty$ such that for any $n\in\nset$
it holds
\begin{multline*}
D_{\bV^\alpha_{\bbeta_0}}\left[ (\bcovmat_n,\bbeta_n), (\bcovmat_{n+1},\bbeta_{n+1}) \right]\\
\leq K_{\kappa,\epsilon,\beta_0,\eta} \gamma_{n+1} \left[\bV^\kappa_{\beta_0}(\bX_{n+1})+\sum_{k=0}^n a_{n,k} \bV^\kappa_{\beta_0}(\bX_{k})\right]\eqsp,
\end{multline*}
where $\gamma_n = \sum_{i=1}^3 \gamma_{n,i}$, and 
$a_{n,n} \eqdef \gamma_{n,2}$ and for $k \in \{0, \dots, n-1\}$
\begin{equation}
\label{eq:definition-a}
a_{n,k} \eqdef \gamma_{k,2} \prod_{j=k+1}^n (1-\gamma_{j,2}) \eqsp.
\end{equation}
\end{lemma}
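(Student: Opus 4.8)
The plan is to reduce everything to the local Lipschitz estimate of \autoref{theo:continuity-P} and then control the one-step increments of the three adapted quantities. First I would record that, because of the projections in \eqref{def:temp_adapt}, \eqref{eq:am-cov} and \eqref{eq:scaling-adaptation}, the adapted parameters stay in fixed compact sets for every $n$: the temperatures $\bbeta_n$ lie in some $\mathcal{K}_{\beta_0,\eta}$ with $\beta_0,\eta>0$ read off from $\ccint{\underline{\rho},\overline{\rho}}$ via \eqref{eq:beta-rho-def}, and since $T^{(\ell)}_n\in\ccint{\underline{T},\overline{T}}$ and $\Gamma^{(\ell)}_n\in\positivematrixset[\varepsilon]{d}$, the definition \eqref{eq:definition-covmat_n} gives $\bcovmat_n\in\positivematrixset[\epsilon]{d}[L]$ for an $\epsilon>0$ depending only on $\varepsilon,\underline{T},\overline{T}$; these are the constants in the statement. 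Then \autoref{theo:continuity-P}, applied with exponent $\alpha$, bounds $D_{\bV^\alpha_{\beta_0}}[(\bcovmat_n,\bbeta_n),(\bcovmat_{n+1},\bbeta_{n+1})]$ by a constant times $|\bbeta_n-\bbeta_{n+1}|+|\bcovmat_n-\bcovmat_{n+1}|$, so the remaining work is to estimate these two increments.

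For the temperature increment I would use that the map $\brho\mapsto\bbeta$ in \eqref{eq:beta-rho-def} is $C^1$, hence Lipschitz on the compact constraint cube, so $|\bbeta_n-\bbeta_{n+1}|\leq C|\brho_n-\brho_{n+1}|$; then \autoref{lem:proj-distance} applied to the coordinatewise $\Pi_\rho$, together with the trivial bound $|H^{(\ell)}|\leq1$ (since $H^{(\ell)}$ takes values in $\ccint{-\alpha^*,1-\alpha^*}$), gives $|\brho_n-\brho_{n+1}|\leq\sqrt{L-1}\,\gamma_{n+1,1}$ and hence $|\bbeta_n-\bbeta_{n+1}|\leq C\gamma_{n+1,1}$. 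For the covariance increment I would split $\covmat^{(\ell)}_{n+1}-\covmat^{(\ell)}_n=\exp(T^{(\ell)}_{n+1})(\Gamma^{(\ell)}_{n+1}-\Gamma^{(\ell)}_n)+(\exp(T^{(\ell)}_{n+1})-\exp(T^{(\ell)}_n))\Gamma^{(\ell)}_n$ and use boundedness of $\exp(T^{(\ell)}_n)$ and $\Gamma^{(\ell)}_n$ on the compact sets, plus Lipschitzness of $\exp$ on $\ccint{\underline{T},\overline{T}}$, to reduce to bounding $|T^{(\ell)}_{n+1}-T^{(\ell)}_n|$ and $|\Gamma^{(\ell)}_{n+1}-\Gamma^{(\ell)}_n|$. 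For the former, \autoref{lem:proj-distance} on $\Pi_T$ and the fact that the bracket in \eqref{eq:scaling-adaptation} lies in $\ccint{-\alpha^*,1-\alpha^*}$ give $\leq\gamma_{n+1,3}$. For the latter, since $\positivematrixset[\varepsilon]{d}$ is closed and convex and $\Gamma^{(\ell)}_n$ is already in it, \autoref{lem:proj-distance} on $\Pi_\Gamma$ and the fact that the rank-one increment in \eqref{eq:am-cov} has Frobenius norm $|X^{(\ell)}_{n+1}-\mu^{(\ell)}_n|^2$ give $|\Gamma^{(\ell)}_{n+1}-\Gamma^{(\ell)}_n|\leq\gamma_{n+1,2}(|X^{(\ell)}_{n+1}-\mu^{(\ell)}_n|^2+|\Gamma^{(\ell)}_n|)\leq\gamma_{n+1,2}(2|X^{(\ell)}_{n+1}|^2+2|\mu^{(\ell)}_n|^2+C)$.

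The last ingredient is to control the running mean: iterating \eqref{eq:am-mean} and using $\sup_j\gamma_{j,2}\leq1$ writes $\mu^{(\ell)}_n$ as a convex combination of the initial value $\mu^{(\ell)}_0$ and $X^{(\ell)}_1,\dots,X^{(\ell)}_n$, whose weights, up to a fixed multiplicative factor on the initial term, are exactly the $a_{n,k}$ of \eqref{eq:definition-a}; Jensen's inequality applied to $x\mapsto|x|^2$ then yields $|\mu^{(\ell)}_n|^2\leq C\sum_{k=0}^n a_{n,k}|X^{(\ell)}_k|^2+C$, the initial value contributing either a constant or, if random, a term handled by the $k=0$ summand. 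Finally I would invoke \A{a:super-exp}: the super-exponential decay \eqref{eq:super-exp} forces $V_{\beta_0}$ of \eqref{eq:lyapunov-function}, which satisfies $V_{\beta_0}\geq1$, to dominate every power of $|x|$, so that for each $\kappa>0$ there is $K_{\kappa,\beta_0}<\infty$ with $|x|^2\leq K_{\kappa,\beta_0}V^\kappa_{\beta_0}(x)$ for all $x\in\Xset$, and all additive constants above are likewise absorbed into $\bV^\kappa_{\beta_0}\geq1$. Collecting the three increments, using $\gamma_{n+1,i}\leq\gamma_{n+1}$, summing over $\ell$, and using the elementary equivalence of $\bV^\kappa_{\beta_0}(\bx)$ with $\sum_\ell V^\kappa_{\beta_0}(x^{(\ell)})$ (constants depending on $\kappa,L$) then gives the announced bound. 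I expect the only genuine subtleties to be the passage from polynomial to $V^\kappa_{\beta_0}$-growth, which rests entirely on the super-exponential tails, and the bookkeeping of the convex weights of $\mu^{(\ell)}_n$; once \autoref{theo:continuity-P} and \autoref{lem:proj-distance} are available, the remaining steps are routine.
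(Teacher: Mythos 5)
Your proposal is correct and follows essentially the same route as the paper's proof: reduce to \autoref{theo:continuity-P}, bound the one-step increments of $\brho$, $T^{(\ell)}$ and $\Gamma^{(\ell)}$ via \autoref{lem:proj-distance} and the boundedness of the update terms, expand $\mu^{(\ell)}_n$ as a convex combination with weights $a_{n,k}$ and apply Jensen, and finally dominate $|x|^2$ by $V^\kappa_{\beta_0}(x)$ using the super-exponential tails. Your slightly more careful treatment of the initial term $\mu^{(\ell)}_0$ in the convex-combination expansion is a minor refinement of the same argument, not a different approach.
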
 %}}}
\begin{proof} %{{{
According to \autoref{theo:continuity-P}, under \A{a:super-exp},
there exists a constant $K_{\epsilon,\alpha,\beta_0,\eta}$ such that
\[
D_{\bV_{\bbeta_0}^\alpha}\left[ (\bcovmat_n,\bbeta_n), (\bcovmat_{n-1},\bbeta_{n-1}) \right]
\leq K_{\epsilon,\alpha,\beta_0,\eta} \left\{|\bbeta_n-\bbeta_{n+1}|+|\bcovmat_n-\bcovmat_{n+1}|\right\} \eqsp.
\]
For any $\ell\in\{1,\dots,L-1\}$ consider 
$|\rho^{(\ell)}_{n+1}-\rho^{(\ell)}_{n}|$, where
$\seq{\smash{\rho^{(\ell)}}}$ 
is defined by \eqref{def:temp_adapt}.
Since, by \eqref{eq:definition-H-ell}, 
$|H^{(\ell)}(\chunk{\rho}{1}{\ell},x)|\leq 1$ for any $\chunk{\rho}{1}{\ell} \in \rset^\ell$ and $x \in \Xset$, applying \autoref{lem:proj-distance} we obtain
\begin{equation}\label{eq:diff-rho-ell}
|\rho^{(\ell)}_{n+1}-\rho^{(\ell)}_{n}|\leq \gamma_{n,1} |H^{(\ell)}(\chunk{\rho_{n-1}}{1}{\ell},\bX_n)|\leq \gamma_{n,1}\eqsp.
\end{equation}
Define the function 
\[
\bbeta: \chunk{\rho}{1}{L-1} \to (1, \beta^{(2)}(\rho^{(1)}), \dots, \beta^{(L)}(\chunk{\rho}{1}{L-1}))
\]
where the functions $\beta^{(\ell)}$ are defined in \eqref{eq:beta-rho-def}.
The function $\bbeta$ is continuously differentiable. By definition \eqref{def:temp_adapt}, for all $n\in\nset$,  $\brho_n\in\ccint{\underline{\rho},\overline{\rho}}^{L-1}$. Hence \eqref{eq:diff-rho-ell} implies that there exists $K<\infty$ such that
\begin{equation}\label{eq:diff-beta}
|\bbeta_n - \bbeta_{n+1}| = |\bbeta(\brho_n)-\bbeta(\brho_{n+1})|\leq K\gamma_{n,1}\eqsp.
\end{equation}
Now consider $|\covmat_{n}-\covmat_{n+1}|$. By definition \eqref{eq:definition-covmat_n} we get
\[
|\covmat^{(\ell)}_{n}-\covmat^{(\ell)}_{n+1}|\leq \exp(T^{(\ell)}_n)|\Gamma^{(\ell)}_n-\Gamma^{(\ell)}_{n+1}|+|\exp(T^{(\ell)}_n)-\exp(T^{(\ell)}_{n+1})||\Gamma^{(\ell)}_{n+1}|\eqsp.
\]
The scale adaptation procedure \eqref{eq:scaling-adaptation} by \autoref{lem:proj-distance} satisfies
\[|T^{(\ell)}_n-T^{(\ell)}_{n+1}|\leq \gamma_{n+1,3}\eqsp,\]
which implies that there exists constant $K<\infty$ such that $|\exp(T^{(\ell)}_n)-\exp(T^{(\ell)}_{n+1})|\leq K\gamma_{n+1,3}$.  By \eqref{eq:am-cov} and \autoref{lem:proj-distance} we obtain
\begin{align*}
|\Gamma^{(\ell)}_n-\Gamma^{(\ell)}_{n+1}|&\leq \gamma_{n+1,2}\left|(X_{n+1}^{(\ell)}-\mu_{n}^{(\ell)})(X_{n+1}^{(\ell)}-\mu_{n}^{(\ell)})^T
+\Gamma^{(\ell)}_n
\right|
\\
&\leq \gamma_{n+1,2}\left[2| X^{(\ell)}_{n+1}|^2+ 2|\mu^{(\ell)}_n|^2 + |\Gamma_n^{(\ell)}| \right]
\eqsp.
\end{align*}
By definition \eqref{eq:am-cov} and \eqref{eq:scaling-adaptation} $\exp(T_n)$ and $|\Gamma_n|$ are uniformly bounded for all $n\in\nset$.
Hence, gathering all terms, there exists a constant $K < \infty$ such that
\[D_{\bV_{\bbeta_0}^\alpha}\left[ (\bcovmat_n,\bbeta_n), (\bcovmat_{n-1},\bbeta_{n-1}) \right]
\leq K \gamma_{n+1} \left[| X^{(\ell)}_{n+1}|^2+ |\mu^{(\ell)}_n|^2\right]\eqsp,
\]
where $\gamma_n =\sum_{i=1}^3 \gamma_{n,i}$. Under assumption
$\sup_{n\in\nset}\gamma_{n,2} \leq 1$, by \eqref{eq:am-mean}, we get
that for any $n\in \nset$ $\mu^{(\ell)}_n=\sum_{k=0}^n a_{n,k}
X_{n-k}^{(\ell)}$, where the positive weights $a_{n,k}$, $k \in
\{0,\dots,n\}$ are defined in \eqref{eq:definition-a}. 
Because $\sum_{k=0}^n a_{n,k}=1$, the Jensen
inequality implies 
\[
    |\mu^{(\ell)}_n|^2\leq\left(\sum_{k=0}^n a_{n,k}
|X_{n-k}^{(\ell)}|\right)^2\leq\sum_{k=0}^n a_{n,k}
|X_{n-k}^{(\ell)}|^2\eqsp.
\] Finally, under \A{a:super-exp} for any
$\kappa>0$ there exists $K_\kappa$ such that, for all $\bx \in
\Xset^L$, $| \bx |^2 \leq K_\kappa\bV_{\beta_0}^\kappa(\bx)$.
\end{proof}
%}}}

%%%%%%%%%%%%%%%%%%%%
\begin{lemma}\label{lem:series-DV} %{{{
    Assume \A{a:super-exp} and $\sup_{n\in\nset}\gamma_{n,2}\leq 1$. 
For any non-negative sequence $\sequence{b}[n]$ satisfying $\sum_{n\geq1}b_n(\gamma_{n,1}+\gamma_{n,2}+\gamma_{n,3})<\infty$ and for all $\alpha\in\ooint{0,1}$, we have
\[\sum_{n=1}^\infty b_n D_{\bV_{\bbeta_0}^\alpha}\left[ (\bcovmat_n,\bbeta_n), (\bcovmat_{n-1},\bbeta_{n-1}) \right] \ \bV_{\beta_0}^\alpha(\bX_n) < +\infty\ \as\eqsp.\]
\end{lemma}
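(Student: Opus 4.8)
The plan is to bound the expectation of the (non-negative) series term by term and then deduce that the series is a.s.\ finite. Fix $\alpha\in\ooint{0,1}$ and choose $\kappa\eqdef 1-\alpha$, which is strictly positive precisely because $\alpha<1$. Applying \autoref{lem:DV-n-n+1} with this value of $\kappa$ (and with $n$ in place of $n+1$), there is a finite constant $K$ such that, for every $n\ge1$,
\[
D_{\bV^\alpha_{\beta_0}}\!\left[(\bcovmat_n,\bbeta_n),(\bcovmat_{n-1},\bbeta_{n-1})\right]
\le K\gamma_n\Big[\bV^\kappa_{\beta_0}(\bX_n)+\sum_{k=0}^{n-1}a_{n-1,k}\bV^\kappa_{\beta_0}(\bX_k)\Big],
\]
with $\gamma_n=\gamma_{n,1}+\gamma_{n,2}+\gamma_{n,3}$ and the weights $a_{n-1,k}\ge0$ of \eqref{eq:definition-a} satisfying $\sum_{k=0}^{n-1}a_{n-1,k}=1$.

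Next I would multiply by $\bV^\alpha_{\beta_0}(\bX_n)$ and reduce every power of $\bV_{\beta_0}$ to the first power. Since $\bV_{\beta_0}\ge L\ge1$ pointwise, for any $\bx,\by\in\Xset^L$ one has $\bV^\alpha_{\beta_0}(\bx)\bV^\kappa_{\beta_0}(\by)\le\max\{\bV_{\beta_0}(\bx),\bV_{\beta_0}(\by)\}^{\alpha+\kappa}\le\bV_{\beta_0}(\bx)+\bV_{\beta_0}(\by)$, using $\alpha+\kappa=1$; in particular $\bV^{\alpha+\kappa}_{\beta_0}\le\bV_{\beta_0}$. Plugging these into the bound above and using $\sum_k a_{n-1,k}=1$ gives
\[
b_n\,D_{\bV^\alpha_{\beta_0}}\!\left[(\bcovmat_n,\bbeta_n),(\bcovmat_{n-1},\bbeta_{n-1})\right]\bV^\alpha_{\beta_0}(\bX_n)
\le Kb_n\gamma_n\Big[2\bV_{\beta_0}(\bX_n)+\sum_{k=0}^{n-1}a_{n-1,k}\bV_{\beta_0}(\bX_k)\Big].
\]
Summing over $n$, taking expectations termwise (legitimate by Tonelli, all terms being non-negative), and using $M\eqdef\sup_{n\ge0}\E[\bV_{\beta_0}(\bX_n)]<\infty$ from \autoref{lem:expectation_V} together with $\sum_k a_{n-1,k}=1$, the expectation of the series is at most $3KM\sum_{n\ge1}b_n\gamma_n$, which is finite by hypothesis. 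A non-negative random variable with finite expectation is a.s.\ finite, which is the claim.

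The argument is essentially bookkeeping and I do not expect a genuine obstacle. The two points requiring care are: first, the choice $\kappa=1-\alpha$ (any smaller positive value works), which is what forces all the powers of $\bV_{\beta_0}$ that occur to be $\le 1$, so that their expectations are uniformly bounded via \autoref{lem:expectation_V}; and second, the collapse of the inner average over past states through $\sum_{k=0}^{n-1}a_{n-1,k}=1$, which is what turns the double series into the single series $\sum_n b_n(\gamma_{n,1}+\gamma_{n,2}+\gamma_{n,3})$ controlled by the assumption on $\{b_n\}$. Should one prefer not to invoke $\E[\bV_{\beta_0}(\bX_0)]<\infty$ at this point, the same computation run conditionally on $\bX_0$ (for which $\bV_{\beta_0}(\bX_0)<\infty$ a.s., and iterating the drift inequality of \autoref{prop:drift} gives $\E[\bV_{\beta_0}(\bX_n)\mid\bX_0]\le\bV_{\beta_0}(\bX_0)+b/(1-\lambda)$) yields a.s.\ finiteness of the conditional expectation, hence of the series.
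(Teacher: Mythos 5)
Your proof is correct and follows essentially the same route as the paper: apply \autoref{lem:DV-n-n+1} with $\kappa=1-\alpha$, bound the expectation of the resulting non-negative series using the uniform moment bound of \autoref{lem:expectation_V} together with $\sum_k a_{n,k}=1$ and the hypothesis $\sum_n b_n\gamma_n<\infty$, and conclude a.s.\ finiteness. The only (immaterial) difference is that you reduce the mixed term $\bV^{1-\alpha}_{\beta_0}(\bX_k)\,\bV^{\alpha}_{\beta_0}(\bX_n)$ to first powers by the pointwise bound through the maximum, whereas the paper uses H\"older's inequality in expectation; both give the same uniform bound.
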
 %}}}
\begin{proof} %{{{
Since $\sup_{n\in\nset}\gamma_{n,2}\leq 1$, under \A{a:super-exp} by \autoref{lem:DV-n-n+1} for all $\kappa>0$ there exists  $K<\infty$ such that
\begin{multline*}
D_{\bV^\alpha_{\bbeta_0}}\left[ (\bcovmat_n,\bbeta_n), (\bcovmat_{n-1},\bbeta_{n-1}) \right]\\
\leq K \gamma_{n+1} \left[\bV^\kappa_{\beta_0}(\bX_{n+1})+\sum_{k=0}^n a_{n,k}\bV^\kappa_{\beta_0}(\bX_{k})\right]\eqsp,
\end{multline*}
where $\gamma_{n} = \sum_{i=1}^3 \gamma_{n,i}$ and the triangular
array $\{a_{n,k} \}_{k=0}^n$ is defined in \eqref{eq:definition-a}.
Set $\kappa=1-\alpha$. Since $\sum_{n \geq 1} b_n \gamma_n < \infty$, 
it is enough to show that
\[
A \eqdef \sup_{n\in\nset}\E\left\{\left(\bV^{1-\alpha}_{\beta_0}(\bX_{n+1})+\sum_{k=0}^n a_{n,k} \bV^{1-\alpha}_{\beta_0}(\bX_{k})\right)\bV_{\beta_0}^\alpha(\bX_{n+1})\right\} < \infty
\eqsp.
\]
By H{\"o}lder inequality we get for all $k,n\in\nset$
\[
\E\left\{ \bV_{\beta_0}^{1-\alpha}(\bX_{k})\bV_{\beta_0}^\alpha(\bX_{n+1})\right\}
\leq \left\{\E\bV_{\beta_0}(\bX_{n+1})\right\}^\alpha \, \left\{\E\bV_{\beta_0}(\bX_{k})\right\}^{1-\alpha}
\eqsp.
\]
Since the weights $a_{n,k}$ are non-negative and $\sum_{k=0}^n
a_{n,k}=1$, we get $A \leq2\sup_{n\in\nset}\E\bV_{\beta_0}(\bX_{n})$
the proof is concluded by applying \autoref{lem:expectation_V}.
\end{proof} %}}}

%%%%%%%%%%%%%%%%%%%%
\begin{proof}[Proof of \autoref{theo:law-of-large-numbers}] %{{{
\blazej{need to check which corollary we are speaking about} According
to \cite[Corollary~2.9]{fort-moulines-priouret}, we need to check that
\textbf{(i)} $\sup_{n \geq 0} \E[\bV(\bX_n)] < +\infty$, and
\textbf{(ii)} there exists $\alpha \in \ooint{0,1}$ such that
\[
    \sum_{n=1}^\infty n^{-1} D_{\bV_{\bbeta_0}^\alpha}\left[
(\bcovmat_n,\bbeta_n), (\bcovmat_{n-1},\bbeta_{n-1}) \right] \
\bV_{\beta_0}^\alpha(\bX_n) < +\infty\ \as,
\]
which follow from 
\autoref{lem:expectation_V} and \autoref{lem:series-DV}, respectively.
\end{proof} %}}}

%}}}

%%%%%%%%%%%%%%%%%%%%%%%%%%%%%%%%%%%%%%%%
\subsection{Proof of \autoref{prop:uniqueness-temperature}} %{{{

In order to prove the existence and uniqueness of the root
of the mean field $\bh$, we first introduce some notation and prove 
the key \autoref{lem:monotonic:h}.

Below, we omit the set $\Xset$ from the integrals.
Let us define a symmetric function $\tilde{h}:(0,1]^2\to [0,1]$
as follows
\begin{equation}
\label{eq:definition-tilde-h}
\tilde{h}(u,v)=
\iint \bigg(1\wedge\frac{\pi^{u}(x)\pi^{v}(y)}{\pi^{u}(y)\pi^{v}(x)}\bigg)
\frac{\pi^v(\ud x)}{Z(v)} \frac{\pi^u(\ud y)}{Z(u)} \eqsp.
\end{equation}
Note that for all $(u,v) \in \ocint{0,1}^2$, we get
\begin{equation}
\label{eq:definition-tildeh}
\tilde{h}(u,v)= \int f_v( \pi(y) ) \frac{\pi^u(y)}{Z(u)} \ud y = 
\int f_u( \pi(y) ) \frac{\pi^v(y)}{Z(v)} \ud y
\eqsp,
\end{equation}
where for $v \in \ocint{0,1}$, 
\begin{equation}
\label{eq:definition-fv}
f_v(z)\eqdef \int \frac{\pi^v(\ud x)}{Z(v)} \left[\1_{\{ \pi\leq z\}}(x)+ \1_{\{\pi<z\}}(x) \right] \eqsp.
\end{equation}

%%%%%%%%%%%%%%%%%%%%
\begin{lemma}\label{lem:monotonic:h} %{{{
Assume that $\lleb(\Xset) = \infty$ and the density $\pi$ is positive, bounded,
and $\int \pi^\kappa(x) \lleb(\ud x)<\infty$ for all $0<\kappa\le 1$.
Then, for all
$v\in(0,1]$ the function $u\mapsto \tilde{h}(u,v)$ restricted to $(0,v)$
is differentiable
and monotonic with the following derivative
\begin{multline*}
\frac{\partial \tilde{h}}{\partial u}(u,v)
    =\frac{1}{2}\iint \left[f_v(\pi(y))-f_v(\pi(z)) \right] \\ \times
    \left[\log(\pi(y))-\log(\pi(z)) \right]
      \frac{\pi^u(\ud y)}{Z(u)}\frac{\pi^u(\ud z)}{Z(u)} \eqsp.
\end{multline*}
Moreover, $\lim_{u\to 0}\tilde{h}(u,v)=0$ and $\lim_{u\to v}\tilde{h}(u,v)=1$.
\end{lemma}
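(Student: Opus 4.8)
The plan is to work from the representation \eqref{eq:definition-tildeh}, $\tilde h(u,v)=\int f_v(\pi(y))\pi^u(y)\,\ud y/Z(u)$ with $Z(u)=\int\pi^u(x)\,\ud x$, noting that $f_v$ is a fixed function (independent of $u$), nondecreasing on $(0,\infty)$ and valued in $[0,2]$, both of which are clear from \eqref{eq:definition-fv}. First I would show that $u\mapsto Z(u)$ and $u\mapsto\int f_v(\pi(y))\pi^u(y)\,\ud y$ are $C^1$ on $(0,v)$ by differentiation under the integral sign, giving $Z'(u)=\int\pi^u(x)\log\pi(x)\,\ud x$ and the analogous formula for the numerator. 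On a compact interval $[a,b]\subset(0,v)$ the needed dominating function is built by splitting $\Xset$: on $\{\pi\le1\}$ one has $\pi^u|\log\pi|\le\pi^a|\log\pi|\le C_a\,\pi^{a/2}$ since $\sup_{0<t\le1}t^{a/2}|\log t|<\infty$, and $\pi^{a/2}$ is integrable because $a/2\in(0,1]$; on $\{\pi>1\}$ one uses that $\pi$ is bounded and $\lleb(\{\pi>1\})\le\int\pi<\infty$. (This is the step that uses $\int\pi^\kappa<\infty$ for \emph{every} $\kappa\in(0,1]$: one needs room to lower the exponent.) The quotient rule then yields, writing $\nu_u\eqdef\pi^u/Z(u)$ for the tempered probability measure,
\[
\frac{\partial\tilde h}{\partial u}(u,v)=\int f_v(\pi(y))\log\pi(y)\,\nu_u(\ud y)-\tilde h(u,v)\int\log\pi(x)\,\nu_u(\ud x)\eqsp,
\]
all integrals being finite by the same domination.

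The right-hand side is $\mathrm{Cov}_{\nu_u}(f_v\circ\pi,\log\circ\pi)$, using $\int f_v(\pi(y))\,\nu_u(\ud y)=\tilde h(u,v)$. I would then invoke the elementary symmetrisation identity: for i.i.d.\ $Y,Y'$ with law $\nu_u$ and sufficiently integrable $\xi,\zeta$, $\mathrm{Cov}(\xi(Y),\zeta(Y))=\tfrac12\E[(\xi(Y)-\xi(Y'))(\zeta(Y)-\zeta(Y'))]$, obtained by expanding the product and using independence. Taking $\xi=f_v\circ\pi$ and $\zeta=\log\circ\pi$ gives precisely the displayed formula for $\partial_u\tilde h$ in the statement.

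Monotonicity is then immediate: $f_v$ nondecreasing and $\log$ increasing imply $(f_v(a)-f_v(b))(\log a-\log b)\ge0$ for all $a,b>0$; with $a=\pi(y)$, $b=\pi(z)$ the integrand in the derivative formula is pointwise $\ge0$, so $\partial_u\tilde h(u,v)\ge0$ and $u\mapsto\tilde h(u,v)$ is nondecreasing on $(0,v)$. For the limit $u\uparrow v$: from \eqref{eq:definition-tilde-h}, $1-\tilde h(u,v)=\iint_{\{\pi(y)<\pi(x)\}}\bigl(1-(\pi(y)/\pi(x))^{v-u}\bigr)\,\nu_v(\ud x)\,\nu_u(\ud y)$, and $1-e^{-s}\le s$ for $s\ge0$ bounds this by $(v-u)\bigl(\int|\log\pi|\,\ud\nu_v+\int|\log\pi|\,\ud\nu_u\bigr)$; the two integrals remain bounded for $u$ near the fixed $v>0$, so $\tilde h(u,v)\to1$. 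For the limit $u\downarrow0$: first $Z(u)=\int\pi^u\ge c^u\lleb(\{\pi\ge c\})$ for every $c>0$, so $\liminf_{u\to0}Z(u)\ge\lleb(\{\pi\ge c\})\uparrow\lleb(\{\pi>0\})=\lleb(\Xset)=\infty$, i.e.\ $Z(u)\to\infty$; and $\sup_{0<z\le\epsilon}f_v(z)\to f_v(0^+)=0$ as $\epsilon\to0$ because $\{\pi\le z\}\downarrow\emptyset$ as $z\downarrow0$ (here $\pi>0$ is used). Splitting $\tilde h(u,v)\le\sup_{0<z\le\epsilon}f_v(z)+2\nu_u(\{\pi>\epsilon\})$ and bounding $\nu_u(\{\pi>\epsilon\})\le\max(1,\supnorm{\pi})\,\lleb(\{\pi>\epsilon\})/Z(u)$ with $\lleb(\{\pi>\epsilon\})\le\epsilon^{-1}\int\pi<\infty$, letting first $u\to0$ and then $\epsilon\to0$ gives $\tilde h(u,v)\to0$.

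The only genuinely delicate points are the domination justifying differentiation under the integral sign — the sole place the full hypothesis $\int\pi^\kappa<\infty$ for all $\kappa\in(0,1]$ is needed — and the divergence $Z(u)\to\infty$ as $u\to0$, which is the sole place $\lleb(\Xset)=\infty$ enters and without which $\lim_{u\to0}\tilde h(u,v)$ would generally be strictly positive. The symmetrisation identity and the comonotonicity argument for the sign of the derivative are routine.
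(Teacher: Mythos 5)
Your argument follows essentially the same route as the paper's proof: differentiation under the integral sign in the representation \eqref{eq:definition-tildeh}, with a dominating function obtained by lowering the exponent (the paper writes $\pi=c\,\mathrm{e}^{-J}$ and dominates the difference quotient by a multiple of $\pi^{u/2}$, which is the same device as your bound $\pi^{a}|\log\pi|\le C_a\pi^{a/2}$ on $\{\pi\le1\}$); the quotient rule expressing $\partial_u\tilde h(u,v)$ as the covariance of $f_v\circ\pi$ and $\log\circ\pi$ under $\nu_u(\ud y)\eqdef\pi^u(y)\,\ud y/Z(u)$; the symmetrisation identity; and the same two splitting arguments for the limits. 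Your quantitative bound $1-\tilde h(u,v)\le(v-u)\{\int|\log\pi|\,\ud\nu_v+\int|\log\pi|\,\ud\nu_u\}$ for $u\uparrow v$ is in fact more explicit than the paper's one-line appeal to dominated convergence, and your lower bound $Z(u)\ge c^u\lleb(\{\pi\ge c\})$ replaces the paper's monotone-convergence argument on $\{\pi\le1\}$; both variants are fine.

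The one genuine gap concerns monotonicity. You conclude only that the integrand in the derivative formula is pointwise nonnegative, hence that $u\mapsto\tilde h(u,v)$ is nondecreasing. What the paper proves, and what the proof of \autoref{prop:uniqueness-temperature} actually invokes, is that $\partial_u\tilde h(u,v)>0$ on $(0,v)$: strict monotonicity is what makes the root of $h^{(\ell)}=0$ unique, and weak monotonicity together with the endpoint limits $0$ and $1$ does not exclude a flat piece at height $\alpha^*$. To close this you must show the integrand is strictly positive on a set of positive $(\nu_u\otimes\nu_u)$-measure. Since $\pi>0$, $\lleb(\Xset)=\infty$ and $\int\pi^u\,\ud\lleb<\infty$, the density $\pi$ is not $\lleb$-a.e.\ constant, so some level $a>0$ splits $\Xset$ into $\{\pi\le a\}$ and $\{\pi>a\}$, both of positive Lebesgue (hence positive $\nu_u$-) measure; for a.e.\ pair $(y,z)$ in their product one has $\log\pi(z)>\log\pi(y)$ and $f_v(\pi(z))-f_v(\pi(y))\ge\nu_v(\{\pi(y)\le\pi\le\pi(z)\})>0$, the last inequality holding for a.e.\ such $y$ because the set of $y$ with $\lleb(\{\pi(y)\le\pi\le a\})=0$ is itself $\lleb$-null. (The paper's own justification of strictness is also terse, but it does assert and use it.)
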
 %}}}
\begin{proof} %{{{
We will first show that, for any bounded measurable function $f$ and $u\in(0,1)$,
\begin{equation}
    \frac{\ud}{\ud u}\left(
    \int_{\Xset} f(y)\pi^{u}(\ud y) \right)
    =\int_{\Xset} f(y)\log(\pi(y))\pi^{u}(\ud y)  \eqsp.
    \label{eq:diff-condition}
\end{equation}
    Since $\pi$ is bounded, there exists a function
    $J:\Xset\to[0,\infty)$ such that
    $\pi(x) = c \exp(-J(x))$.
    By the dominated convergence theorem, we only need to show that
    $|h^{-1}(\pi^{u+h}(y)-\pi^{u}(y)|$
is bounded uniformly for $h$ in some neighbourhood of $0$ by
an integrable function depending only on $y$. We may write
\[
    \left|\frac{\pi^{u+h}(y)-\pi^{u}(y)}{h}\right|
    =c\exp(-u J(y))\frac{|\exp(-h J(y))-1|}{|h|}\eqsp .
\]
Applying the mean value theorem we obtain
$|\exp(-h J(y))-1|\leq |h|\exp(|h|J(y))$.
Hence for all $|h|\leq\frac{u}{2}$ we get that
\[
    \left|\frac{\pi^{u+h}(y)-\pi^{u}(y)}{h}\right|
        \leq c \exp\bigg(-\frac{u J(y)}{2}\bigg)
        = c \pi^{u/2}(y)\eqsp ,
\]
which concludes the proof of \eqref{eq:diff-condition}.

For any given $v$, using \eqref{eq:diff-condition}, we compute the partial derivative of $\tilde{h}$ with respect to $u$
\begin{align*}
&\frac{\partial \tilde{h}}{\partial u}(u,v) \\
&=\int f_v(\pi(y))\log(\pi(y))\frac{\pi^u(\ud y)}{Z(u)}
 -\int f_v(\pi(y))\frac{\pi^u(\ud y)}{Z(u)}
  \int\log(\pi(y))\frac{\pi^u(\ud y)}{Z(u)} \\
&=\frac{1}{2}\int\left[f_v(\pi(y))-f_v(\pi(z)) \right]
   \left[\log(\pi(y))-\log(\pi(z)) \right]
       \frac{\pi^u(\ud y)}{Z(u)}\frac{\pi^u(\ud z)}{Z(u)} \eqsp.
\end{align*}
Since the functions $z \mapsto f_v(z)$ and $z \mapsto \log(z)$ are 
non-decreasing, 
$[f_v(\pi(y))-f_v(\pi(z)) ][\log(\pi(y))-\log(\pi(z))]\geq 0$
for all $y,z$.
Moreover, because $\pi$ is positive the $(\pi\times\pi)$-measure of
the sets $\{y,z\in\Xset\eqsp:\eqsp\pi(y)\neq\pi(z)\}$ must be
positive due to $\lleb(\Xset)=\infty$.
Hence $\frac{\partial \tilde{h}}{\partial u}(u,v)>0$
for $u\in(0,v)$ and this completes the proof of the first part.

Since $\pi^{u}\vee\pi^{v}$ is integrable, the dominated convergence theorem
implies $\tilde{h}(u,v)\to 1$ as $u\to v$.
For the second limit consider \eqref{eq:definition-tildeh}.
For any
$\varepsilon>0$ we can find
a $\delta>0$ such that $f_v(\delta)\leq\varepsilon/2$.
Therefore,
\[
    \int f_v(\pi(y))\frac{\pi^u(\ud y)}{Z(u)}
    \leq \frac{\varepsilon}{2} \int \frac{\pi^u(\ud y)}{Z(u)}
    +\frac{\int \1_{\{\pi> \delta\}}(y)\pi^{u}(\ud y)}{\int\pi^{u}(\ud y)}.
\]
There exists a constant $c<\infty$ such that
$\int \1_{\{ \pi> \delta\}} (y) \pi^{u}(\ud y)\leq c$
for all $u\in[0,1]$
and
\[
    \int\pi^{u}(\ud y) \geq \int\pi^{u}(\ud y) \1_{\{\pi\leq1\}}(y) \eqsp.
\]
Observe that, for any $y \in \Xset$, the function  $u \mapsto\pi^{u}(y) \1_{\{\pi\leq1\}}(y)$ is
non-increasing. Therefore, using the monotone convergence theorem
and $\lleb(\Xset)= \infty$, we get
$\lim_{u\to 0}\int\pi^{u}(\ud y)=\infty$.
Hence we can find $u_0$ such that for all $u<u_0$ the
normalising constant $\int\pi^{u}( \ud y) \ge c/\epsilon$ and therefore
$\int f_v(\pi(y))\frac{\pi^u(\ud y)}{Z(u)} \le \epsilon$.
\end{proof} %}}}

%%%%%%%%%%%%%%%%%%%%
\begin{proof}[Proof of \autoref{prop:uniqueness-temperature}] %{{{
    We recall that $h^{(1)}(\rho)$ depends only on $\rho^{(1)}$ and we
    may write
 $h^{(1)}(\rho)=\tilde{h}(\exp(-\exp(\rho^{(1)})),1)-\alpha^*$, where $\tilde{h}$ is defined in \eqref{eq:definition-tilde-h}.
By \autoref{lem:monotonic:h} the function $u \mapsto \tilde{h}(u,1)$ is strictly monotonic on $\ccint{0,1}$ and $\lim_{u \to 0} \tilde{h}(u,1)=0$ and $\lim_{u \to 1} \tilde{h}(u,1)=1$. Since $u \to \tilde{h}(u,1)$ is continuous, there exists a unique
$\hat{\rho}^{(1)}$ such that
$h^{(1)}(\hat{\rho}^{(1)},\rho^{(2)},\ldots,\rho^{(L-1)})
=h^{(1)}(\hat{\rho}^{(1)})=0$.
We proceed by induction and
assume the existence of unique
$\hat{\rho}^{(1)},\dots,\hat{\rho}^{(\ell-1)}$ such that for any $k<\ell$
we have $h^{(k)}(\hat{\rho}^{(1)},\dots,\hat{\rho}^{(k)})=0$.
Denoting
\begin{align}
\label{eq:definition-u}
&u(\rho)= \exp( - \exp( \rho) ) \\
\label{eq:definition-v}
&v_{\ell-1}(\chunk{\rho}{1}{\ell-1})=\exp\big(-\sum_{i=1}^{\ell-1}\exp(\rho^{(i)})\big) \eqsp,
\end{align}
we may write
\[
h^{(\ell)}(\chunk{\hat{\rho}}{1}{\ell-1},\rho^{(\ell)})
=\tilde{h}(u(\rho^{(\ell)})v_{\ell-1}(\chunk{\hat{\rho}}{1}{\ell-1}),v_{\ell-1}(\chunk{\hat{\rho}}{1}{\ell-1})) - \alpha^* \eqsp.
\]
We conclude as before from the monotonicity
of $u \mapsto \tilde{h}(u,v)$ and the limits $\lim_{u \to 0} \tilde{h}(u,v)=0$ and
$\lim_{u \to v} \tilde{h}(u,v)=1$ (see \autoref{lem:monotonic:h}).
\end{proof}
%}}}

%}}}

%%%%%%%%%%%%%%%%%%%%%%%%%%%%%%%%%%%%%%%%
\subsection{Proof of \autoref{theo:convergence-temp-adapt}} %{{{
Our proof of convergence of the temperature adaptation employs
classical convergence results on stochastic approximation. It is,
however, not easy to construct a `global' Lyapunov function for the
mean field $\bh$, but the specific structure of the problem allows to
deduce the convergence recursively for 
$\rho^{(1)},\ldots,\rho^{(L-1)}$.

In this section, for notational simplicity,
$\bP_{(\bcovmat,\bbeta(\brho))}= \bP_{(\bcovmat,\brho)}$ and 
$\bpi_{\bbeta(\brho)}= \bpi_{\brho}$.
For any $\ell=1,\dots,L-1$ we rewrite \eqref{def:temp_adapt} as follows
\begin{equation}\label{def:temp_adapt_second}
\rho^{(\ell)}_n = \Pi_{\rho} [ \rho^{(\ell)}_{n-1}+\gamma_{n,1} g^{(\ell)}(\rho_{n-1}^{(\ell)})
+\gamma_{n,1} \varepsilon^{(\ell)}_n+ \gamma_{n,1} r^{(\ell)}_n] \eqsp,
\end{equation}
where $\Pi_\rho$ is defined in \eqref{def:temp_adapt} and
\begin{align}
&g^{(\ell)}(\rho)= \tilde{h}\left(u(\rho)v_{\ell-1}(\chunk{\hat{\rho}}{1}{\ell-1}), v_{\ell-1}(\chunk{\hat{\rho}}{1}{\ell-1})\right) -\alpha^\star
\label{eq:definition-g}
\\
\label{eq:definition-epsilon}
&\varepsilon_n^{(\ell)}= H^{(\ell)}_{\chunk{\rho_{n-1}}{1}{\ell}} (\bX_{n})-
\pi_{\brho_{n-1}}\big[ 
H^{(\ell)}_{\chunk{\rho_{n-1}}{1}{\ell}}(\bX_{n}) \big] \\
\label{eq:definition-r}
&r_n^{(\ell)}=\tilde{h}
\big(u(\rho_{n-1}^{(\ell)})v_{\ell-1}(\chunk{\rho_n}{1}{\ell-1}),v_{\ell-1}(\chunk{\rho_n}{1}{\ell-1})\big)\\\nonumber
&\quad\quad\quad
-\tilde{h}\big(u(\rho_{n-1}^{(\ell)})v_{\ell-1}(\chunk{\hat{\rho}}{1}{\ell-1}),v_{\ell-1}(\chunk{\hat{\rho}}{1}{\ell-1})\big)
\eqsp,
\end{align}
where $\tilde{h}$ is defined in \eqref{eq:definition-tilde-h}, 
$H^{(\ell)}_{\chunk{\rho}{1}{\ell-1}}(\bx)$ is a shorthand notation for
$H^{(\ell)}(\chunk{\rho}{1}{\ell-1},\bx)$ defined in \eqref{eq:definition-H-ell},
$u$ and $v_{\ell-1}$ are defined in \eqref{eq:definition-u} and
\eqref{eq:definition-v}, respectively, 
and, by convention, $v_0=1$. 

We decompose the term
$\varepsilon^{(\ell)}_n$ as the sum of a martingale difference term and a remainder
term that goes to zero. To do this, we use the Poisson decomposition.
For $(\bcovmat,\brho) \in \positivematrixset[\varepsilon]{d}[L] \times
\ccint{\underline{\rho},\overline{\rho}}^{L-1}$ defined in 
Section \ref{sec:alg}, denote by $\hat{H}^{(\ell)}_{(\bcovmat,\brho)}$ the solution of the Poisson equation
\[
\hat{H}^{(\ell)}_{(\bcovmat,\brho)}(\bx)-\bP_{(\bcovmat,\brho)}
    \hat{H}^{(\ell)}_{(\bcovmat,\brho)}(\bx)=
    H^{(\ell)}_{\chunk{\rho}{1}{\ell}}(\bx)-\bpi_{\rho}[H^{(\ell)}_{\chunk{\rho}{1}{\ell}}]
    \eqsp ,
\]
which exists by \autoref{coro:Poisson}.
Hence, $\varepsilon_n^{(\ell)} = \delta M^{(\ell)}_n + \kappa_n^{(\ell)}$
where
\begin{align*}
& \delta M^{(\ell)}_n \eqdef \hat{H}^{(\ell)}_{\bcovmat_{n-1},\brho_{n-1}}(\bX_n) - \bP_{(\bcovmat_{n-1},\brho_{n-1})} \hat{H}^{(\ell)}_{\bcovmat_{n-1},\brho_{n-1}}(\bX_{n-1}) \\
& \kappa_n^{(\ell)} \eqdef  \bP_{(\bcovmat_{n-1},\brho_{n-1})} \hat{H}^{(\ell)}_{\bcovmat_{n-1},\brho_{n-1}}(\bX_{n-1}) -
\bP_{(\bcovmat_{n-1},\brho_{n-1})} \hat{H}^{(\ell)}_{\bcovmat_{n-1},\brho_{n-1}}(\bX_{n})\eqsp.
\end{align*}

%%%%%%%%%%%%%%%%%%%%
\begin{lemma} \label{prop:noise} %{{{
Assume \A{a:super-exp}-\A{a:step-size-v2} and $\E\bV_{\beta_0}(\bX_0)<\infty$.
For all $\ell\in\{1,\dots,L-1\}$ and $T < \infty$, it holds
\begin{align}\label{eq:martingale-noise}
&\lim_{n\to\infty}\sup_{n \leq k \leq m(n,T)}\left|\sum_{i=n}^k \gamma_{i,1} \delta M_i^{(\ell)}\right|=0\quad\as\eqsp,  \\
& \lim_{n\to\infty}\sup_{n \leq k \leq m(n,T)}\left|\sum_{i=n}^k \gamma_{i,1} \kappa_i^{(\ell)}\right|=0\quad\as \eqsp,
\label{eq:reminder}
\end{align}
where
\[
m(n,T)\eqdef\max\left\{j>n:\sum_{i=n+1}^j\gamma_{i,1}\leq T\right\} \eqsp.
\]
\end{lemma}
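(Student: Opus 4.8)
The plan is to prove the two limits separately: \eqref{eq:martingale-noise} from $L^2$-martingale convergence, and \eqref{eq:reminder} from a summation by parts combined with the Lipschitz estimates already established for the kernels and for the solutions of the Poisson equation. First I would record that the projections $\Pi_\rho,\Pi_\Gamma,\Pi_T$ keep the adapted parameters in a fixed compact region: $\brho_n\in\ccint{\underline{\rho},\overline{\rho}}^{L-1}$, so by continuity of $\brho\mapsto\bbeta(\brho)$ there are $\beta_0>0$ and $\eta>0$ with $\bbeta_n\in\mathcal{K}_{\beta_0,\eta}$ for all $n$, and $\bcovmat_n\in\positivematrixset[\epsilon]{d}[L]$ for some $\epsilon>0$. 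Throughout I take $\alpha=\kappa=1/2$. Since $|H^{(\ell)}|\le1$ and $\bV_{\beta_0}\ge1$, \autoref{coro:Poisson} gives $\sup_{(\bcovmat,\brho)}\Vnorm{\hat H^{(\ell)}_{(\bcovmat,\brho)}}{\bV_{\beta_0}^{1/2}}<\infty$, \autoref{prop:drift} gives $\bP_{(\bcovmat,\brho)}\bV_{\beta_0}^{1/2}\le\bV_{\beta_0}^{1/2}+C$, and together with \autoref{lem:expectation_V} one obtains $\sup_n\E\bV_{\beta_0}(\bX_n)<\infty$ and $\sup_n\E\bV_{\beta_0}^{1/2}(\bX_n)<\infty$.

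For \eqref{eq:martingale-noise}: by \eqref{def:expn} and the $\mathcal{F}_{n-1}$-measurability of $(\bcovmat_{n-1},\brho_{n-1})$, the $\delta M_n^{(\ell)}$ are $\mathcal{F}_n$-martingale increments, and $\E[(\delta M_i^{(\ell)})^2\mid\mathcal{F}_{i-1}]\le\bP_{(\bcovmat_{i-1},\brho_{i-1})}(\hat H^{(\ell)}_{\bcovmat_{i-1},\brho_{i-1}})^2(\bX_{i-1})\le C(\bV_{\beta_0}(\bX_{i-1})+1)$. Hence $\sum_i\gamma_{i,1}^2\E[(\delta M_i^{(\ell)})^2]\le C'\sum_i\gamma_{i,1}^2<\infty$ by \A{a:step-size-v2}, so $N_k^{(\ell)}\eqdef\sum_{i=1}^k\gamma_{i,1}\delta M_i^{(\ell)}$ is an $L^2$-bounded martingale and converges a.s.; consequently $\sup_{n\le k\le m(n,T)}\bigl|\sum_{i=n}^k\gamma_{i,1}\delta M_i^{(\ell)}\bigr|\le\sup_{k\ge n-1}|N_k^{(\ell)}-N_{n-1}^{(\ell)}|\to0$ a.s.

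For \eqref{eq:reminder}: set $g_n\eqdef\bP_{(\bcovmat_n,\brho_n)}\hat H^{(\ell)}_{\bcovmat_n,\brho_n}$ and $w_n\eqdef g_n(\bX_n)$, so that $\kappa_n^{(\ell)}=(w_{n-1}-w_n)+\bigl(g_n(\bX_n)-g_{n-1}(\bX_n)\bigr)$. Summation by parts gives $\sum_{i=n}^k\gamma_{i,1}(w_{i-1}-w_i)=\gamma_{n,1}w_{n-1}-\gamma_{k,1}w_k+\sum_{i=n}^{k-1}(\gamma_{i+1,1}-\gamma_{i,1})w_i$, with $|w_i|\le C(\bV_{\beta_0}^{1/2}(\bX_i)+1)$ by the drift bound. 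Since $\E[\gamma_{n,1}^2\bV_{\beta_0}(\bX_{n-1})]\le C\gamma_{n,1}^2$ is summable, $\gamma_{n,1}\bV_{\beta_0}^{1/2}(\bX_{n-1})\to0$ a.s., which disposes of the two boundary terms uniformly in $k\ge n$; and $\sum_i|\gamma_{i+1,1}-\gamma_{i,1}|\bigl(\bV_{\beta_0}^{1/2}(\bX_i)+1\bigr)$ is a.s. finite, since by \A{a:step-size-v2} its expectation is at most $C\sum_i|\gamma_{i+1,1}-\gamma_{i,1}|<\infty$, so the last sum is the tail of a convergent series and $\to0$ uniformly in $k$. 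For $g_i(\bX_i)-g_{i-1}(\bX_i)$, \eqref{eq:differencePF} of \autoref{lem:continuity-Poisson} (applied with the drift $\bV_{\alpha\beta_0}$, equivalent to $\bV_{\beta_0}^\alpha$) bounds $\Vnorm{g_i-g_{i-1}}{\bV_{\beta_0}^{1/2}}$ by $\Vnorm{H^{(\ell)}_{\chunk{\rho_i}{1}{\ell}}-H^{(\ell)}_{\chunk{\rho_{i-1}}{1}{\ell}}}{\bV_{\beta_0}^{1/2}}$ plus a fixed multiple of $D_{\bV_{\beta_0}^{1/2}}[(\bcovmat_i,\bbeta_i),(\bcovmat_{i-1},\bbeta_{i-1})]$; the first term is $O(\gamma_{i-1,1})$ by \autoref{lem:sup-z-beta-beta}, the lower bound $\Delta\beta^{(\ell)}\ge\eta$, the $C^1$-regularity of $\brho\mapsto\Delta\beta^{(\ell)}$ on the compact set and \eqref{eq:diff-rho-ell}, while the second is controlled by \autoref{lem:DV-n-n+1}. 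With $\gamma_i\eqdef\sum_{j=1}^3\gamma_{i,j}$ this yields
\[
\gamma_{i,1}\bigl|g_i(\bX_i)-g_{i-1}(\bX_i)\bigr|\le C\gamma_{i,1}\gamma_{i-1,1}\bV_{\beta_0}^{1/2}(\bX_i)+C\gamma_{i,1}\gamma_i\,\bV_{\beta_0}^{1/2}(\bX_i)\Bigl[\bV_{\beta_0}^{1/2}(\bX_i)+\sum_{k=0}^{i-1}a_{i-1,k}\bV_{\beta_0}^{1/2}(\bX_k)\Bigr].
\]
Taking expectations and using H\"older's inequality, $\sum_{k=0}^{i-1}a_{i-1,k}=1$ and $\sup_n\E\bV_{\beta_0}(\bX_n)<\infty$, the right-hand side has expectation $\le C(\gamma_{i,1}\gamma_{i-1,1}+\gamma_{i,1}\gamma_i)$, which is summable by \A{a:step-size-v2}; hence $\sum_i\gamma_{i,1}|g_i(\bX_i)-g_{i-1}(\bX_i)|<\infty$ a.s., so this contribution is also the tail of a convergent series. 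Adding the three estimates proves \eqref{eq:reminder}.

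The routine pieces are the martingale estimate and the moment bounds on $\bV_{\beta_0}(\bX_n)$; the genuinely delicate step is the treatment of $g_i(\bX_i)-g_{i-1}(\bX_i)$, which has to be split into the variation of the integrand $H^{(\ell)}$ and the variation of the kernel together with its invariant measure, the latter being absorbed through the joint Lipschitz-continuity of $(\bcovmat,\brho)\mapsto\bP_{(\bcovmat,\brho)}\hat H^{(\ell)}_{(\bcovmat,\brho)}$ (\autoref{lem:continuity-Poisson}) and the a priori $O(\gamma_i)$ bound on the parameter increments (\autoref{lem:DV-n-n+1}).
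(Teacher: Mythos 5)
Your proposal is correct and follows essentially the same route as the paper's proof: Doob/$L^2$-martingale arguments with the Poisson-solution bounds from \autoref{coro:Poisson} for \eqref{eq:martingale-noise}, and for \eqref{eq:reminder} the identical summation-by-parts decomposition of $\kappa_n^{(\ell)}$ into the parameter-variation term $g_i(\bX_i)-g_{i-1}(\bX_i)$ (controlled via \autoref{lem:continuity-Poisson}, the Lipschitz bound on $H^{(\ell)}$ in $\brho$, and \autoref{lem:DV-n-n+1}), the two boundary terms, and the $(\gamma_{i+1,1}-\gamma_{i,1})$ sum. The only cosmetic differences are that you invoke a.s.\ convergence of the $L^2$-bounded martingale rather than Doob's maximal inequality on the tail, and you rederive the content of \autoref{lem:series-DV} inline instead of citing it.
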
 %}}}
\begin{proof} %{{{
Consider \eqref{eq:martingale-noise}. Since $\delta M_i^{(\ell)}$ are
martingale increments, Doob's inequality implies 
\begin{multline}\label{eq:doob}
\E \left(\sup_{k\ge n}\left|\sum_{i=n}^k \gamma_{i,1}, \delta M_i^{(\ell)}\right| \right)^2
\\\le K \sum_{i=n}^\infty\gamma_{i,1}^2
\E\left[\left|
\hat{H}^{(\ell)}_{\bcovmat_{i-1},\brho_{i-1}}(\bX_{i})-\bP_{(\bcovmat_{i-1},\brho_{i-1})} \hat{H}^{(\ell)}_{\bcovmat_{i-1},\brho_{i-1}}(\bX_{i-1})\right|^2
\right]\eqsp .
\end{multline}
Since $|H^{(\ell)}|\leq 1$, \autoref{coro:Poisson} yields
\begin{align}\label{eq:inequality-poisson}
&\sup_{(\bcovmat,\brho)\in\positivematrixset[\epsilon]{d}[L]\times \ccint{\underline{\rho},\overline{\rho}}^{L-1}}
|\hat{H}^{(\ell)}_{(\bcovmat,\brho)}(\bx)| \leq K\bV^{1/2}_{\beta_0}(\bx) \\
&\sup_{(\bcovmat,\brho)\in\positivematrixset[\epsilon]{d}[L]\times \ccint{\underline{\rho},\overline{\rho}}^{L-1}}
|\bP_{(\bcovmat,\brho)}\hat{H}^{(\ell)}_{(\bcovmat,\brho)}(\bx)|\leq 
K\bV^{1/2}_{\beta_0}(\bx)\eqsp.
\label{eq:inequality-poisson2}
\end{align}
Hence by \eqref{eq:doob} we obtain
\[
\E \left( \sup_{k\ge n}\left|\sum_{i=n}^k \gamma_{i,1} \delta M_i^{(\ell)}\right|\right)^2\leq K\sum_{i=n}^\infty \gamma_{i,1}^2\E\bV_{\beta_0}(\bX_i)\eqsp.
\]
\autoref{lem:expectation_V} shows that $\sup_{i \geq 0}
\E\bV_{\beta_0}(\bX_i) < \infty$, and the proof of
\eqref{eq:martingale-noise} is concluded under the step size condition
\A{a:step-size-v2}.

Now consider \eqref{eq:reminder}. Decompose $\sum_{i=n}^k \gamma_{i,1}
\kappa_i^{(\ell)}=R^{(\ell,1)}_{n,k}+R^{(\ell,2)}_{n,k}+R^{(\ell,3)}_{n,k}$
with
\begin{align*}
&R^{(\ell,1)}_{n,k}\eqdef\sum_{i=n-1}^{k-1}\gamma_{i+1,1}\left[\bP_{(\bcovmat_{i},\brho_{i})} \hat{H}^{(\ell)}_{\bcovmat_{i},\brho_{i}}(\bX_{i}) -
\bP_{(\bcovmat_{i-1},\brho_{i-1})} \hat{H}^{(\ell)}_{\bcovmat_{i-1},\brho_{i-1}}(\bX_{i})\right]\\
&R^{(\ell,2)}_{n,k}
  \eqdef\gamma_{n-1,1}\bP_{(\bcovmat_{n-2},\brho_{n-2})} 
     \hat{H}^{(\ell)}_{\bcovmat_{n-2},\brho_{n-2}}(\bX_{n-1})
  -\gamma_{k,1}\bP_{(\bcovmat_{k-1},\brho_{k-1})} \hat{H}^{(\ell)}_{\bcovmat_{k-1},\brho_{k-1}}(\bX_{k})\\
&R^{(\ell,3)}_{n,k}\eqdef\sum_{i=n-1}^{k-1}(\gamma_{i+1,1}-\gamma_{i,1})\bP_{(\bcovmat_{i-1},\brho_{i-1})}
\hat{H}^{(\ell)}_{\bcovmat_{i-1},\brho_{i-1}}(\bX_{i})
\eqsp .
\end{align*}

By \autoref{lem:continuity-Poisson}  we get $|R^{(\ell,1)}_{n,k}|
\leq R^{(\ell,1,1)}_{n} + R^{(\ell,1,2)}_{n}$ where
\begin{align*}
R^{(\ell,1,1)}_{n} 
  &\eqdef K \sum_{i=n-1}^\infty \gamma_{i+1,1}\bV_{\beta_0}^{1/2}(\bX_i) 
  \big\|H^{(\ell)}_{\chunk{\rho_i}{1}{\ell}}
       -H^{(\ell)}_{\chunk{\rho_{i-1}}{1}{\ell}}\big\|_{\bV_{\beta_0}^{1/2}}\\
R^{(\ell,1,2)}_{n} &\eqdef \sum_{i=n-1}^\infty \gamma_{i+1,1}\bV_{\beta_0}^{1/2}(\bX_i) D_{\bV_{\beta_0}^{1/2}}\left[(\bcovmat_{i},\brho_{i}),(\bcovmat_{i-1},\brho_{i-1})\right]\eqsp.
\end{align*}
Since $H^{(\ell)}$ is locally Lipschitz with respect to $\brho$, by \autoref{lem:proj-distance} we obtain
\[
    \big\|H^{(\ell)}_{\chunk{\rho_i}{1}{\ell}}
    -H^{(\ell)}_{\chunk{\rho_{i-1}}{1}{\ell}}\big\|_{\bV_{\beta_0}^{1/2}}\leq K\gamma_{i,1}
    \eqsp.
\]
\autoref{lem:expectation_V} and \A{a:step-size-v2} imply
$\lim_{n\to \infty} R^{(\ell,1,1)}_{n} = 0$ \as\ 
and \autoref{lem:series-DV} shows that 
$\lim_{n\to \infty} R^{(\ell,1,2)}_{n} = 0$ \as. 
By \eqref{eq:inequality-poisson2}, \autoref{lem:expectation_V}
and \A{a:step-size-v2}, the sum 
$\sum_k \big(\gamma_{k,1}\bP_{(\bcovmat_{k-1},\brho_{k-1})} 
\hat{H}^{(\ell)}_{\bcovmat_{k-1},\brho_{k-1}}(\bX_{k})\big)^2<\infty$
\as, implying that 
$\sup_{k\geq n}|R^{(\ell,2)}_{n,k}|\stackrel{n\to\infty}{\rightarrow}0$
\as.
Finally, for the term $R^3_{n,k}$, \eqref{eq:inequality-poisson}, 
\autoref{lem:expectation_V} and \A{a:step-size-v2} conclude the proof.
\end{proof} %}}}

%%%%%%%%%%%%%%%%%%%%
\begin{proof}[Proof of \autoref{theo:convergence-temp-adapt}] %{{{
 For any $\ell=1,\dots,L-1$, define  the local Lyapunov function
 $w^{(\ell)}:\R\to\R$ through
\begin{equation}\label{definition-lyapunov-function}
    w^{(\ell)}(\rho)\defeq (\rho-\hat{\rho}^{(\ell)})^2\eqsp,
\end{equation}
where $\hat{\rho}^{(\ell)}$ is the unique solution of the equation
$g^{(\ell)}(\rho)= 0$ where $g^{(\ell)}$ is defined in
\eqref{eq:definition-g}. Observe that, by \autoref{lem:monotonic:h}
and \autoref{prop:uniqueness-temperature}, for any $\ell=1,\dots,L-1$
the functions $w^{(\ell)}$ satisfy, for all $\rho \in \rset$
\[
\frac{\partial w^{(\ell)}}{\partial \rho}(\rho) \eqsp \cdot \eqsp g^{(\ell)}(\rho) \leq 0\eqsp,
\]
with equality if and only if $\rho=\hat{\rho}^{(\ell)}$. The projection
set $\ccint{\underline{\rho},\overline{\rho}}$ obviously satisfies
\cite[A4.3.1]{Kushner2003} and since for all $\ell=1,\dots,L-1,$ 
$\underline{\rho}<\hat{\rho}^{(\ell)}<\overline{\rho}$, the
convergence cannot occur on the boundary of constraint set. Hence we
only need to check the assumptions of \cite[Theorem~6.1.1]{Kushner2003}.
Since the mean field $g^{(\ell)}$ is bounded and continuous, and depends only
on $\rho$ the conditions \cite[A6.1.2, A6.1.6, A6.1.7]{Kushner2003}
are satisfied. Condition \cite[A6.1.1]{Kushner2003} is implied by the
boundedness of $H^{(\ell)}$. Condition \cite[A6.1.3]{Kushner2003} is
implied by \autoref{prop:noise}. 

Once we show that the remainder
term $r_n^{(\ell)}$, defined in \eqref{eq:definition-r} converges \as\
to zero as $n\to\infty$, \autoref{prop:noise} implies the condition
\cite[A6.1.4]{Kushner2003}. We will prove inductively 
that 
\begin{equation}\label{eq:conv-r}
\lim_{n\to\infty}|r^{(\ell)}_n|=0\quad\as\eqsp.
\end{equation}
Consider first the case $\ell=1$. By \eqref{def:temp_adapt_second} we
get that $r^{(1)}_n=0$. Assume that for $\ell<k$ \eqref{eq:conv-r}
holds, then by \cite[Theorem~6.1.1]{Kushner2003} for any $\ell<k$ we
know that $\rho_n^{(\ell)}\to\hat{\rho}^{(\ell)}$. To simplify
notation we denote $u_n=u(\rho_n^{(k)})$,
$v_n=v^{(k)}(\chunk{\rho_n}{1}{k-1})$ and
$\hat{v}=v^{(k)}(\chunk{\hat{\rho}}{1}{k-1})$.  By
\eqref{eq:definition-fv}, we get that $f_u(z)\leq 2$ for all $z$.
Hence, using \eqref{eq:definition-tildeh}, we get
\begin{align*}
|r_n^{(k)}|&\leq |\tilde{h}(u_nv_n,v_n)-\tilde{h}(u_n\hat{v},v_n)|+|\tilde{h}(u_n\hat{v},v_n)-\tilde{h}(u_n\hat{v},\hat{v})|\\
&\leq\left|\int f_{v_n}(\pi(y))\left[\frac{\pi^{u_nv_n}(y)}{Z(u_nv_n)}
-\frac{\pi^{u_n\hat{v}}(y)}{Z(u_n\hat{v})}\right]\ud y\right| \\&\quad+\left|\int f_{u_n\hat{v}}(\pi(y))\left[\frac{\pi^{v_n}(y)}{Z(v_n)}
-\frac{\pi^{\hat{v}}(y)}{Z(\hat{v})}\right]\ud y\right|\\
&\leq 2\int \left|\frac{\pi^{u_nv_n}(y)}{Z(u_nv_n)}
   -\frac{\pi^{u_n\hat{v}}(y)}{Z(u_n\hat{v})}\right| \ud y +2\int \left|\frac{\pi^{v_n}(y)}{Z(v_n)}
   -\frac{\pi^{\hat{v}}(y)}{Z(\hat{v})}\right| \ud y \\
&= 2\left\{\tvnorm{\frac{\pi^{u_nv_n}}{Z(u_nv_n)}-\frac{\pi^{\hat{u_nv}}}{Z(\hat{u_nv})}}+\tvnorm{\frac{\pi^{v_n}}{Z(v_n)}-\frac{\pi^{\hat{v}}}{Z(\hat{v})}}\right\}
\eqsp .
\end{align*}
By \cite[Lemma A.1]{Douc2002} we get that 
\[
|r_n^{(k)}|\leq 4\left\{|\log(Z(u_nv_n))-\log(Z(u_n\hat v))|+|\log(Z(v_n))-\log(Z(\hat v))|\right\}\eqsp,
\] where the normalising constants $Z$
are defined in \eqref{eq:definition-normalizing-constant}.  Since
$\alpha\mapsto \log(Z(\alpha))$ is locally Lipschitz, $u_n$ and $v_n$ are in compact set we get that exists $K<\infty$ such that
\[|r_n^{(k)}|\leq K(1+u_n)|v_n-\hat v |
\eqsp.\]
By \eqref{def:temp_adapt} and \eqref{eq:definition-u} we obtain that $\sup_n |u_n|<\infty$. Hence, since
$\chunk{\rho}{1}{\ell-1}\mapsto v^{(k)}(\chunk{\rho}{1}{\ell-1})$ is continuous, by induction assumption,
we conclude that $|r_n^{(k)}|\to 0\ \as$.
\end{proof} %}}}

%}}}

%%%%%%%%%%%%%%%%%%%%%%%%%%%%%%%%%%%%%%%%%%%%%%%%%%%%%%%%%%%%%%%%%%%%%%%%
\section*{Acknowledgements}

The first two authors were supported by the French National Research Agency under the contract ANR-08-BLAN-0218 
``BigMC''.
The third author was supported by the Academy of Finland
(project 250575) and by the Finnish Academy of Science and Letters, Vilho,
Yrjö and Kalle Väisälä Foundation.

%%%%%%%%%%%%%%%%%%%%%%%%%%%%%%%%%%%%%%%%%%%%%%%%%%%%%%%%%%%%%%%%%%%%%%%%

\end{document}